\theoremstyle{plain}
\newtheorem{theorem}{Theorem}
\newtheorem{lemma}[theorem]{Lemma}
\newtheorem{corollary}[theorem]{Corollary}
\newtheorem{proposition}[theorem]{Proposition}
\theoremstyle{definition}
\newtheorem{definition}[theorem]{Definition}
\newtheorem{remark}[theorem]{Remark}
\newcommand{\claim}[1]{\medskip\noindent\textbf{\underline{Claim #1}:}}
\newcommand{\claimproof}[1]{\smallskip\noindent\textbf{{Proof of Claim #1}:}}
\newcommand{\claimproofend}{\hfill$\dashv$\medskip}
\newcommand{\calA}{\mathcal{A}} 
\newcommand{\calD}{\mathcal{D}}
\newcommand{\calG}{\mathcal{G}}
\newcommand{\calM}{\mathcal{M}}
\newcommand{\calP}{\mathcal{P}}
\newcommand{\calS}{\mathcal{S}}
\newcommand{\calT}{\mathcal{T}}
\newcommand{\calQ}{\mathcal{Q}}
\newcommand{\sfC}{\mathsf{C}} 
\newcommand{\sfG}{\mathsf{G}} 
\newcommand{\sfM}{\mathsf{M}} 
\newcommand{\sfN}{\mathsf{N}}
\newcommand{\sfR}{\mathsf{R}}
\newcommand{\sfS}{\mathsf{S}}
\newcommand{\Tau}{\mathsf{T}}
\renewcommand{\|}{~|~}
\renewcommand{\phi}{\varphi}
\newcommand{\mybar}[1]{\overline{#1}}
\newcommand{\Nat}{\mathbb{N}}
\renewcommand{\emptyset}{\varnothing}
\newcommand{\multi}{\sigma}
\newcommand{\lDM}{\ensuremath{<_{\mathrm{DM}}}}
\newcommand{\geDM}{\ensuremath{\geq_{\mathrm{DM}}}}
\newcommand{\norm}[1][\cdot]{\ensuremath{\llbracket #1 \rrbracket}}
\newcommand{\lenf}[2]{\ensuremath{{L}[#1,#2]}\xspace}
\newcommand{\subst}[2]{[{#1}/{#2} ]} 
\newcommand{\Lor}{\bigvee}
\renewcommand{\land}{\wedge}
\renewcommand{\lor}{\vee}
\newcommand{\Clos}{\mathsf{Clos}}
\newcommand{\tracestep}{\to_{C}}
\newcommand{\trace}{\twoheadrightarrow_{C}}
\newcommand{\sub}{\trianglelefteq}
\newcommand{\equic}{\equiv_C}
\newcommand{\Set}{\mathsf{Set}}
\newcommand{\seteq}{=_{\Set}}
\newcommand{\traversedTrans}{traversed leaf reduction algorithm\xspace}
\newcommand{\RuWeak}{\ensuremath{\mathsf{w}}\xspace}
\newcommand{\RuContr}{\ensuremath{\mathsf{c}}\xspace}
\newcommand{\RuCut}{\ensuremath{\mathsf{cut}}\xspace}
\newcommand{\RuCutMulti}{\ensuremath{\mathsf{multicut}}\xspace}
\newcommand{\RuMix}{\ensuremath{\mathsf{mix}}\xspace}
\newcommand{\RuU}{\ensuremath{\mathsf{u}}\xspace}
\newcommand{\RuF}{\ensuremath{\mathsf{f}}\xspace}
\newcommand{\RuFp}[1]{\ensuremath{{#1}}\xspace}
\newcommand{\RuMu}{\RuFp{\mu}}
\newcommand{\RuNu}{\RuFp{\nu}}
\newcommand{\RuEta}{\RuFp{\eta}\xspace}
\newcommand{\Ru}{\ensuremath{\mathsf{R}}\xspace}
\newcommand{\RuDischarge}[1][\dx]{\ensuremath{\mathsf{D}^{#1}}\xspace}
\newcommand{\Tokens}{\mathcal{D}}
\newcommand{\disA}{\ensuremath{\dagger}}
\newcommand{\disB}{\ensuremath{\ddagger}}
\newcommand\DDDag{%
	\sbox0{\ddag}\stretchrel*{%
		\stackengine{-.6\ht0}{\ddag}{\ddag}{O}{c}{F}{F}{S}}{\ddag}%
}
\newcommand{\disC}{\ensuremath{\DDDag}}
\newcommand{\dx}{\disA}
\newcommand{\dy}{\disB}
\newcommand{\dz}{\disC}
\newcommand{\extraValue}{\ensuremath{\mathsf{o}}\xspace}
\newcommand{\Prop}{\ensuremath{\mathsf{Prop}}\xspace}
\newcommand{\Var}{\ensuremath{\mathsf{Var}}\xspace}
\newcommand{\PDL}{\ensuremath{\mathsf{PDL}}\xspace}
\newcommand{\PDLf}{\ensuremath{\mathsf{PDL}_f}\xspace}
\newcommand{\GKe}{\ensuremath{\mathsf{GKe}}\xspace}
\newcommand{\CTL}{\ensuremath{\mathsf{CTL}}\xspace}
\newcommand{\LTL}{\ensuremath{\mathsf{LTL}}\xspace}
\newcommand{\sccr}{\twoheadrightarrow}
\newcommand{\cyclicPT}{\calT_\pi^C}
\newcommand{\depth}{\ensuremath{\mathsf{depth}}\xspace}
\newcommand{\rank}{\ensuremath{\mathsf{rank}}\xspace}
\newcommand{\RC}{\ensuremath{\mathsf{RC}}\xspace}
\newcommand{\comp}{\ensuremath{\mathsf{comp}}\xspace}
\newcommand{\unf}[2][]{#2^{*{#1}}}
\renewcommand{\merge}[3]{\ensuremath{[#1]#2[#3]}\xspace}
\newcommand{\del}[3]{\ensuremath{#1(#2,#3)}}
\newcommand{\nmf}{\ensuremath{\mathsf{nmf}}}
\newcommand{\ldia}{ \scalebox{0.8}{\rotatebox[origin=c]{45}{$\square$}}}
\newcommand{\lbox}{\scalebox{0.8}{$\square$}}
\newcommand{\lm}{\mbox{\ooalign{\ld \cr \hidewidth\raise.05ex\hbox{$* \mkern3.1mu$}\cr}}} 
\newcommand{\lbm}{\ooalign{$\lbox$ \cr \hidewidth\raise.05ex\hbox{$* \mkern5.5mu$}\cr}\hspace{-0.1cm}} \newcommand{\lcm}{\ooalign{$\lbox$ \cr \hidewidth\raise.05ex\hbox{$\cdot \mkern2.9mu$}\cr}}
\newcommand{\dia}{\Diamond}
\newcommand{\atneg}[1]{\overline{#1}}
\newcommand{\muML}{\mathcal{L}_{\mu}}
\newcommand{\muMLca}{\mathcal{L}_{\mu}^{\mathit{ca}}}
\newcommand{\AFMC}{\muML^{\mathit{af}}}
\newcommand{\ol}[1]{\overline{#1}}   
\newcommand{\Focus}{\ensuremath{\mathsf{Focus}}\xspace}
\newcommand{\FocusM}{\ensuremath{\mathsf{Focus^c}}\xspace}
\newcommand{\FocusS}{\ensuremath{\mathsf{Focus}^*}\xspace}
\newcommand{\AxLit}{\ensuremath{\mathsf{Ax}}\xspace}
\newcommand{\RuOr}{\ensuremath{{\lor}}\xspace}
\newcommand{\RuAnd}{\ensuremath{{\land}}\xspace}
\newcommand{\RuBox}{\ensuremath{{\lbox}}\xspace}
\newcommand{\isdef}{\mathrel{:=}}
\renewcommand{\phi}{\varphi}
\title{Cut-elimination for the alternation-free modal mu-calculus
	\thanks{Funded by the Dutch Research Council grants [617.001.857, OCENW.M20.048] and the Knut and Alice Wallenberg Foundation [2020.0199].}
}
\author[1]{Bahareh Afshari
}
\author[2]{Johannes Kloibhofer
}
\affil[1]{Department of Philosophy, Linguistics and Theory of Science, University of Gothenberg, Sweden}
\affil[2]{Institute for Logic, Language and Computation, University of Amsterdam, The Netherlands}
\begin{document}
	\maketitle
	
	\begin{abstract}
We present a syntactic cut-elimination procedure for the alternation-free fragment of the modal $\mu$-calculus.
Cut reduction is carried out within a {\em cyclic proof system}, where proofs are finitely branching but may be non-wellfounded.
The structure of such proofs is exploited to directly transform a cyclic proof with cuts into a cut-free one, without detouring through other logics or relying on intermediate machinery for regularisation.
Novel ingredients include the use of {\em multicuts} and results from the {\em theory of well-quasi-orders}, the later used in the termination argument.

\medskip

\noindent
\textbf{Keywords:} Cyclic proofs, Alternation-free mu-calculus, Cut elimination, Multicut 
    
	\end{abstract}
	
	\section{Introduction}\label{sec.Intro}

The modal $\mu$-calculus $\muML$ is an elegant logic for reasoning about properties of transition systems and programs. By extending modal logic with least and greatest fixpoint operators, it combines good computational behaviours with high expressive power: the $\mu$-calculus enjoys the finite-model property \cite{Kozen1988}, its model checking problem can be solved in quasi-polynomial time \cite{Calude2017}, and it captures exactly the bisimulation-invariant fragment of monadic second-order logic \cite{Janin1996}.

Although the modal $\mu$-calculus is defined in a general way, many important concepts can already be captured within specific fragments of the logic. In fact, a range of dynamic and temporal logics -- such as \PDL, \LTL, and \CTL{} -- can be expressed as fragments of the modal $\mu$-calculus by appropriately restricting the use of fixpoint operators.
In particular, many of these logics fall within the \emph{alternation-free modal $\mu$-calculus} $\AFMC$, the fragment of $\muML$ where least and greatest fixpoints are not interleaved.
While $\AFMC$ is strictly less expressive than $\muML$ \cite{Bradfield1998}, it attains the same expressive power over certain classes of frames \cite{Alberucci2009,Gutierrez2014}.  
Moreover, $\AFMC$ enjoys Craig interpolation \cite{DAgostino2018}, further reinforcing its importance as a logic worthy of independent study.

Kozen’s seminal work~\cite{Kozen83} introduced a Hilbert-style axiomatisation of the modal $\mu$-calculus.
Although the system is natural and elegant, proving its completeness turned out to be challenging and Kozen was able to establish completeness only for a fragment of the logic.
Much later, Walukiewicz~\cite{Walukiewicz00} succeeded in proving completeness for the full modal $\mu$-calculus.
However, his proof, recasted in sequent-calculus style, crucially relies on the cut-rule, and it remains a major open problem whether Kozen’s system remains complete without cut.

An alternative characterisation of the set of validities of $\mu$-calculus formulas was proposed by Niwiński and Walukiewicz \cite{Niwinski1996} in the form of a two-player game, in which  the winning positions of one player correspond precisely to valid $\muML$-formulas. This validity checking game can be readily utilised to present a cut-free, non-wellfounded proof system.

Only more recently have \emph{non-wellfounded proof systems} emerged as the preferred proof-theoretic framework for studying logics with inductive or fixpoint behaviour, and they have since been explored in a variety of settings~\cite{Santocanale2002,Brotherston2006,BaeldeDS16,Simpson2017,Kuperberg2021}.
In such systems, proofs are finitely branching but may contain infinite branches within the proof tree.
To ensure soundness, non-wellfounded proofs must satisfy a \emph{global soundness condition}, which enforces a notion of `progress' along infinite paths.
The precise formulation of this condition determines the key subtleties and characteristics of the proof system.
 
For the modal $\mu$-calculus, Jungteerapanich~\cite{Jungteerapanich2010} and Stirling~\cite{Stirling2014} introduced annotated proof systems that enrich sequents with additional information, enabling a finite cyclic representation of otherwise infinite proof trees. In this setting, the global soundness condition reduces to a constraint on paths between non-axiomatic leaves and their corresponding `companion' nodes.

In \cite{Marti2021}, Marti and Venema present a so-called `focus' cyclic proof proof system for the alternation-free $\mu$-calculus based on the annotated proof system of Jungteerapanich and Stirling. Their system, denoted \Focus and employed in this article, features a simpler and more elegant management of annotations, owing to the absence of fixpoint alternations in $\AFMC$.
In the same work, the authors established the admissibility of the cut-rule, whereas in the present article we provide a syntactic cut-elimination procedure for the alternation-free $\mu$-calculus. 

Providing syntactic cut-elimination procedures is of both theoretical and practical significance. Even when working within a cut-free proof system, composing cut-free proofs typically introduces cuts. Although such cuts can sometimes be removed by semantic means, \emph{syntactic cut elimination} gives a direct normalised proof suitable for immediate applications.
In \cite{Afshari2025}, for example, partial cut-elimination for a non-wellfounded proof system is used to reconstruct Walukiewicz' completeness proof for $\muML$  proof-theoretically.

With the growing popularity of non-wellfounded proofs, it is not surprising that cut-elimination has been investigated across a range of non-wellfounded proof systems \cite{FortierS13, Savateev2020, Shamkanov2025,Afshari2024, GC24, Borja2024, BaeldeDS16,BaeldeDKS22,Saurin23,Bauer2025, Afshari2025}. 
Many of these studies, however, deal with very simple forms of global soundness conditions.
Cut-elimination procedures for systems with complex global soundness conditions have so far been developed primarily in the context of linear logic~\cite{BaeldeDS16,BaeldeDKS22,Saurin23}.
In~\cite{Bauer2025}, Bauer and Saurin extend this line of work to the modal $\mu$-calculus by encoding modalities in linear logic via super-exponentials. 
 Since this approach provides only an indirect method, a \emph{direct} proof of cut-elimination for the modal $\mu$-calculus remains of interest.

With the exception of~\cite{Afshari2024}, existing cut-elimination procedures operate on non-wellfounded proof systems.
When working within an infinitary system, the main challenges for achieving cut elimination are twofold: first, ensuring the termination of cut-reduction steps, which are applied along infinite branches; and second, verifying that the zipping of these infinite branches preserves the system’s global soundness condition.
In contrast, working with cyclic proofs introduces an additional complication: the resulting cut-free derivation trees are often no longer regular, and therefore fail to constitute valid proofs within the cyclic proof system itself.
  
 In this article, we present a cut-elimination procedure for the cyclic proof system \Focus for the alternation-free modal $\mu$-calculus. Compared to existing work, our result is noteworthy for the following reasons.

\begin{enumerate}
	\item \textbf{Directness} Our method applies to a cyclic proof and outputs a cyclic cut-free proof without appealing to intermediate machinery for regularising the end proof. Working on the cyclic proof allows us to employ induction invariants utilising the structure of cyclic proof trees and to eliminate cuts depending on where they are located. 
	\item  \textbf{Expressiveness} Previous cut-elimination strategies have been developed for fragments of the modal $\mu$-calculus such as $\mathsf{Grz}$ \cite{Savateev2020, Borja2024} or modal logic with transitive closure \cite{Afshari2024, Shamkanov2025}. Here we address a larger fragment with a more complex global soundness condition.
	\item \textbf{Transparency} Bauer \& Saurin \cite{Bauer2025} established cut-elimination for the full modal $\mu$-calculus by encoding modalities in linear logic. In contrast, our approach avoids any detours through other proof systems. This is preferable from a practical, as well as from a theoretical point of view, as it provides a more transparent explanation of why cut-elimination holds in a certain system.
\end{enumerate}

In our proof strategy we make essential use of the structure of cyclic proofs. 
We distinguish between cuts inside cycles, which we call \emph{unimportant}, and cuts outside cycles, referred to as \emph{important}, and handle them differently.
We show that the cut formulas of unimportant cuts do not interfere with the global soundness condition.
Consequently, such cuts can be pushed upwards, away from the root, allowing successful repeats to be identified below them.

The treatment of important cuts is more intricate.
Our approach builds on the strategy developed for modal logic with the eventually operator in~\cite{Afshari2024}:
Let $\pi_L$ and $\pi_R$ be the left and right subproofs rooted at the respective left and right premiss of an important cut.
The key idea is to push the important cut upwards while retaining annotations on formulas in $\pi_L$ and removing annotations on formulas in $\pi_R$. Progress on repeat-paths in $\pi_L$ is preserved in the resulting proof, enabling the identification of successful repeats below the cuts.

For the alternation-free $\mu$-calculus this procedure becomes more involved, particularly because conjunctions and disjunctions may occur in the scope of fixpoints. 
To handle this complexity, we introduce \emph{multicuts}. 
This, however, further complicates the elimination of important cuts, as it requires determining which premises of a multicut should retain annotations on their formulas.

The introduction of multicuts requires working in a system where sequents are defined over multisets of formulas.
Consequently, an explicit contraction rule becomes necessary, which in turn poses additional challenges for the elimination of important cuts. 
To overcome these challenges, we first eliminate contractions from cut-free proofs, thereby reducing the problem to eliminating important cuts in proofs without contraction.
We then establish the termination of this procedure using known results on well-quasi-orders.

\paragraph{Overview of paper}
In Section~\ref{sec.prelim} we introduce the alternation-free modal $\mu$-calculus, and state some facts about multisets and well-quasi-orders that are used later on. The cyclic proof system \Focus is defined in Section~\ref{sec.proofsystem}. 
In Section~\ref{sec.ceStrategy}, we lay the groundwork for the cut-elimination procedure: we provide a high-level overview of the setup, introduce the notions of important and unimportant cuts, and define a normal form for \Focus\ proofs.
We deal with important cuts in Section~\ref{sec.importantCuts}, with unimportant cuts in Section~\ref{sec.unimportantCuts} and eliminate contractions in Section~\ref{sec.contractions}.
Results of these sections are combined in the proof of the cut-elimination theorem in Section~\ref{sec.CEtheorem}. 
In Section~\ref{sec.conclusion} we discuss possible directions for further work.

	\section{Preliminaries}\label{sec.prelim}

\subsection{The alternation-free modal $\mu$-calculus}\label{sec.afMuCalculus}

Let \Prop be a fixed set of propositions and \Var be a fixed set of variables such that $\Prop \cap \Var = \emptyset$. The \emph{formulas} of the modal $\mu$-calculus are generated by the grammar
\[
\phi ::= 
   p \| \mybar{p} \| x \| \phi\lor\phi \| \phi\land\phi \|
   \ldia\phi \| \lbox\phi 
   \| \mu x. \phi \|  \nu x. \phi,
\]
where $p \in \Prop$ and $x \in \Var$. We write
$\muML$ for the set of formulas in the modal $\mu$-calculus.

Formulas of the form $\mu x . \phi$ ($\nu x . \phi$) are called 
\emph{$\mu$-formulas} (\emph{$\nu$-formulas}, respectively); formulas 
of either kind are called \emph{fixpoint formulas}. We write $\eta$ to denote an arbitrary fixpoint operator $\mu$ or $\nu$ and define $\mybar{\mu} = \nu$ and $\mybar{\nu} = \mu$.
Formulas that are of the form $\lbox \phi$ or $\ldia \phi$ are 
called \emph{modal formulas} and formulas of the form $\phi \lor \psi$ and $\phi \land \psi$ are called \emph{boolean formulas}.

We say that an occurrence of a variable $x$ in a formula $\phi$ is \emph{bound} if it is in the scope of an $\eta x$-fixpoint operator and \emph{free} otherwise. We call a formula $\phi$ \emph{closed} if all variable occurrences in $\phi$ are bound. 

We call a formula $\phi \in \muML$ \emph{alternation-free} if it satisfies the following: For any subformula $\eta x. \psi$ of $\phi$ no free occurrence of $x$ in $\psi$ is in the scope of an $\ol{\eta}$-operator.

A bound variable $x$ in $\eta x. \psi$ is called \emph{guarded}, if it is in the scope of a modality. A formula $\phi$ is called \emph{guarded}, if all variables $x$ in subformulas of the form $\eta x. \psi$ are guarded. 
It is well known that every $\mu$-calculus formula is equivalent to a guarded one, in particular every alternation-free $\mu$-calculus formula is equivalent to a guarded alternation-free $\mu$-calculus formula.

We use standard terminology and notation for the binding of variables by 
the fixpoint operators and for substitutions, and make sure only to 
apply substitution in situations where no variable capture will occur. 
An important use of the substitution operation concerns the \emph{unfolding}
$\chi[\xi/x]$ of a fixpoint formula $\xi = \eta x . \chi$.

We write $\sub$ for the subformula relation on formulas. Given two formulas $\phi,\psi \in \muML$ we write $\phi \tracestep \psi$ if 
$\psi$ is either a direct Boolean or modal subformula of $\phi$, or else 
$\phi$ is a fixpoint formula and $\psi$ is its unfolding. We let $\trace$ be the reflexive and transitive closure of $\tracestep$. We write $\phi \equic \psi$ if $\phi \trace \psi$ and $\psi \trace \phi$.

\noindent 
The \emph{closure} $\Clos(\Phi) \subseteq \muML$ of $\Phi \subseteq \muML$
is the least superset of $\Phi$ that is closed under the relation $\trace$.
It is well known that $\Clos(\Phi)$ is finite if and only if $\Phi$ is finite. Moreover, all formulas in $\Phi$ are closed and guarded, if and only if all formulas in $\Clos(\Phi)$ are closed and guarded. 
A \emph{trace} is a sequence $(\phi_{n})_{n<\kappa}$, with $\kappa \leq
\omega$, such that $\phi_{n} \tracestep \phi_{n+1}$, for all $n
+ 1 < \kappa$.

\begin{definition}
	Let $\rank$ be the minimal function from the set of formulas to $\Nat$, such that
	\begin{enumerate}
		\item $\rank(p) =\rank(\mybar{p}) = 1$,
		\item $\rank(\phi) = \rank(\psi)$ if $\phi \equic \psi$ and
		\item  $\rank(\phi) > \rank(\psi)$ if $\phi \trace \psi$ and $\psi \not\trace \phi$.
	\end{enumerate}  
\end{definition}

We extend the operation $p \mapsto \mybar{p}$ to a negation on all formulas $\phi \in \muML$ by the following induction:
\[\begin{array}{lllclllclllclll}
	\mybar{\phi \land \psi} & \isdef & \mybar{\phi} \lor \mybar{\psi} 
	&& \mybar{\mu x. \phi} & \isdef & \nu x. \mybar{\phi}
	&& \mybar{\lbox \phi} & \isdef & \ldia \mybar{\phi}
	&& \mybar{x} & \isdef & x 
	\\ 
	\mybar{\phi \lor \psi} & \isdef & \mybar{\phi} \land \mybar{\psi}
	&& \mybar{\nu x. \phi} & \isdef & \mu x. \mybar{\phi}
	&& \mybar{\ldia \phi} & \isdef & \lbox \mybar{\phi}  
	&& \mybar{\mybar{p}} & \isdef & p  	
\end{array}\]
Note that $\mybar{\mybar{\phi}} = \phi$ for every $\phi \in \muML$.

The semantics of the modal $\mu$-calculus will only play an indirect role and will therefore be omitted. An introduction can be found in \cite{Bradfield2007}.

\begin{definition}
	We call a formula $\phi$ \emph{magenta}, if there is $\mu x. \psi$ such that $\phi \equic \mu x.\psi$, and we call $\phi$ \emph{navy}, if there is $\nu x. \psi$ such that $\phi \equic \nu x.\psi$. 
\end{definition}

The following proposition summarises key properties of $\AFMC$ that we will need later. 
\begin{proposition}\label{prop.aFree}
	Let $\phi$ be an alternation-free formula. Then 
	\begin{enumerate}
			\item its negation $\mybar{\phi}$ is alternation-free,
			\item every subformula of $\phi$ is alternation-free,
			\item every formula in $\Clos(\phi)$ is alternation-free, \label{prop.aFreeClos}
			\item there is a  guarded formula $\phi'$ which is also alternation-free and logically equivalent to $\phi$,
			\item  $\phi$ is not both magenta and navy.
	\end{enumerate}
\end{proposition}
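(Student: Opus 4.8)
The plan is to prove the five items largely in order, since later parts lean on earlier ones. For item~(1), I would argue by a straightforward structural induction on $\phi$ using the inductive definition of $\mybar{\cdot}$ given just above the proposition. The key observation is that negation commutes with the subformula structure up to swapping $\eta$ with $\mybar\eta$ and swapping $\land/\lor$ and $\lbox/\ldia$; in particular, if $\eta x.\psi$ is a subformula of $\phi$, then $\mybar\eta x.\mybar\psi$ is the corresponding subformula of $\mybar\phi$, and a free occurrence of $x$ lying in the scope of an $\mybar\eta$-operator in $\mybar\psi$ corresponds exactly to a free occurrence of $x$ in the scope of an $\eta$-operator in $\psi$. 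Hence the alternation-freeness condition is literally self-dual under $\mybar{\cdot}$, and since $\mybar{\mybar\phi}=\phi$ the implication runs both ways.

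For item~(2), the alternation-freeness condition quantifies over \emph{all} subformulas $\eta x.\psi$ of $\phi$; since the subformulas of any subformula $\chi\sub\phi$ are themselves subformulas of $\phi$, the condition is inherited by $\chi$ directly, with essentially no work. Item~(4) also comes cheaply: the standard guardedness transformation (repeatedly replacing unguarded occurrences of $x$ inside $\eta x.\psi$ by unfoldings, or by the equivalent $\eta$-fixpoint construction) produces a logically equivalent formula, and I would check that each such rewriting step preserves alternation-freeness — no new fixpoint operator of the opposite polarity is introduced along a path from a binder to a bound occurrence. Alternatively, item~(4) follows from item~(3) together with the remark in the text that guardedness can be achieved while staying inside $\Clos(\phi)$-style constructions; I would pick whichever is cleanest given the machinery already set up.

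The substantive part is item~(3), and I expect it to be the main obstacle. The issue is that $\Clos(\phi)$ is generated under $\trace$, whose steps include \emph{unfolding} $\eta x.\chi \tracestep \chi[\eta x.\chi/x]$, and substitution can in principle create new nestings of fixpoint operators. So I would prove by induction on the generation of $\Clos(\phi)$ that every $\psi\in\Clos(\phi)$ is alternation-free, where the only nontrivial case is the unfolding step: assuming $\eta x.\chi$ is alternation-free, I must show $\chi[\eta x.\chi/x]$ is. Here the alternation-freeness of $\eta x.\chi$ is exactly what is needed — it says no free occurrence of $x$ in $\chi$ is in the scope of an $\mybar\eta$-operator, so when we plug the $\eta$-formula $\eta x.\chi$ into those occurrences, no binder of $\chi$ comes to sit inside the scope of an opposite-polarity binder relative to a bound occurrence, and conversely the binders inside the inserted copies of $\eta x.\chi$ only ever see $\eta$-scope from $\chi$ above them at the substitution sites. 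Making this precise requires care about which variable occurrences become bound by which operator after substitution, but it is the standard ``unfolding preserves alternation-freeness'' lemma; I would state it as such and give the occurrence-tracking argument. Items (2) and the Boolean/modal cases of the closure induction are immediate.

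Finally, item~(5): suppose $\phi$ is both magenta and navy, i.e.\ $\phi\equic\mu x.\psi_1$ and $\phi\equic\nu y.\psi_2$. Then $\mu x.\psi_1 \trace \nu y.\psi_2 \trace \mu x.\psi_1$, so along the trace from $\mu x.\psi_1$ back to itself (through $\nu y.\psi_2$) we pass through fixpoint formulas of both polarities that are all $\equic$-equivalent, hence all of the same rank. I would derive a contradiction with alternation-freeness: by item~(3) every formula on this trace-cycle is alternation-free, and an $\equic$-cycle containing both a $\mu$- and a $\nu$-formula forces, via the unfolding steps that must occur on it (guardedness / the $\trace$ definition), a free bound-variable occurrence trapped in the scope of an operator of the opposite polarity — contradicting alternation-freeness. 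The cleanest route is probably: reduce to a guarded representative using item~(4), note that on a guarded formula any $\equic$-cycle must traverse a modality and an unfolding, and then exhibit the forbidden alternation explicitly. This last step is where I would spend the most care after item~(3).
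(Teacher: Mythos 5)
Your items (1)--(4) are handled correctly, and your treatment of item~(3) (that unfolding preserves alternation-freeness requires a small occurrence-tracking argument) is more explicit than the paper, which simply calls (1)--(3) immediate and cites the standard guardedness translation for~(4). The genuine gap is item~(5): there the contradiction is asserted rather than derived. The paper's proof hinges on a combinatorial lemma about traces that your sketch does not supply and for which you have no substitute: for any trace $\chi_1,\dots,\chi_n$ there is an index $i$ such that $\chi_i$ is a subformula of \emph{every} formula on the trace, and moreover $\chi_i$ is a fixpoint formula whenever $\chi_n$ is. Armed with this, the paper upgrades the $\equic$-relationship between an $\eta$-formula and an $\mybar{\eta}$-formula to an actual subformula relation $\eta x.\phi \sub \mybar{\eta} y.\psi$; it then arranges that all fixpoint formulas on the trace $\tau$ from $\eta x.\phi$ to $\mybar{\eta} y.\psi$ (other than the endpoint) are $\eta$-formulas, so that alternation-freeness of the formulas on $\tau$ forces $\mybar{\eta} y.\psi$ to be a subformula of every formula on $\tau$, in particular $\mybar{\eta} y.\psi \sub \eta x.\phi$, contradicting $\eta x.\phi \sub \mybar{\eta} y.\psi$ for distinct formulas. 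Your phrase ``exhibit the forbidden alternation explicitly'' is exactly the step that remains unfilled; what is needed is this subformula-relation argument, and it is not routine.

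Separately, the shortcut ``reduce to a guarded representative using item~(4)'' is unsound for item~(5). Being magenta (resp.\ navy) is a syntactic predicate defined via $\equic$, that is, via the closure $\Clos(\phi)$; it is not invariant under logical equivalence. The guarded $\phi'$ from item~(4) is only \emph{logically} equivalent to $\phi$, and in general has a different closure, so establishing that $\phi'$ is not both magenta and navy tells you nothing about $\phi$. You would need a further (nontrivial) argument that the guardedness transformation preserves magenta/navy status, or -- better -- an argument that bypasses guardedness entirely, which is precisely what the paper's trace lemma does: note that the paper's proof of item~(5) never invokes guardedness or modalities at all.
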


%
%
\begin{proof}
Items 1--3 are immediate and item 4 follows from the standard translation of a $\mu$-calculus formula to a guarded one (see for instance \cite{BlueBook2016}). 
We prove item 5.
	Towards a contradiction assume that $\xi \in \AFMC$ is both magenta and navy. Then there is a pair of formulas $\eta x. \phi, \mybar{\eta} y. \psi \in \Clos(\xi)$ with $\eta x. \phi \trace  \mybar{\eta} y. \psi$ and $\mybar{\eta} y. \psi \trace \eta x. \phi$.
	
	For any trace $\chi_1...\chi_n$ there is $i \in \{1,...,n\}$, such that $\chi_i$ is a subformula of every formula on the trace and such that $\chi_i$ is a fixpoint-formula if $\chi_n$ is one. This can be shown by induction on the length of the trace.
	
	 Therefore we can find such formulas $\eta x. \phi, \mybar{\eta} y. \psi$ that also satisfy $\eta x. \phi \sub \mybar{\eta} y. \psi$. Moreover we may assume that all fixpoint formulas occurring on the trace $\tau: \eta x. \phi ... \mybar{\eta} y. \psi$ are $\eta$-formulas (apart from $\mybar{\eta} y. \psi$). Otherwise we could replace $\mybar{\eta} y. \psi$ by such an $\mybar{\eta}$-formula.
	
	Due to Proposition \ref{prop.aFree}.\ref{prop.aFreeClos} all formulas on $\tau$ are alternation-free. Thus there is no free occurrence of $z$ in $\psi$ for any $\eta$-formula $\eta z. \delta$ on $\tau$. As all fixpoint formulas occurring on $\tau$ are $\eta$-formulas it follows inductively that $\mybar{\eta} y. \psi$ is a subformula of every formula on $\tau$, in particular $\mybar{\eta} y. \psi \sub \eta x. \phi$. Yet this contradicts $\eta x. \phi \sub \mybar{\eta} y. \psi$.
\end{proof}

In this paper we will simply write \emph{formulas} for guarded, closed and alternation-free formulas.


\subsection{Multisets}
Sequents in our proof system will consist of multisets of formulas. We define multisets slightly different than usual, yet all intuitions about multisets remain the same.

Let $X$ be a set, a \emph{multiset over $X$} is a set of indexed elements, meaning that it consists of pairs $(x,n)$, where $x \in X$ and $n\geq 1$ such that $(x,n) \in A$ implies $(x,m) \in A$ for all $m \leq n$. We write $\calM_X$ for the set of all finite multisets over $X$. We only mention indices if they are of importance and otherwise denote a multiset $A = \{(x_1,n_1),...,(x_k,n_k)\}$ just by $\{x_1,...,x_k\}$. For simplicity we also sometimes omit the brackets and write $A = x_1,...,x_k$.

If $A$ is a multiset, we define the \emph{multiplicity} $\multi_A(x)$ of $x$ in $A$ as the maximal $n$ such that $(x,n) \in A$ and define it to be $0$ if no such $n$ exists. This definition agrees with the number of occurrences of $x$ in $A$. 
For example, we denote the multiset $A= \{(x,1),(x,2),(y,1)\}$ by $\{x,x,y\}$ and have that $\multi_A(x) = 2$, $\multi_A(y) = 1$ and $\multi_A(z) = 0$ for all $z \not\in \{x,y\}$. 

For two multisets $A$ and $B$ over a set $X$ we write $A \subseteq B$ if $\multi_A(x) \leq \multi_B(x)$ for all $x \in X$.
We define $A_{\Set} \isdef \{a \| (a,n) \in A \text{ for some }n\}$ for the \emph{underlying set of $A$} and write $A \seteq B$ if $A_{\Set} = B_{\Set}$.
We write $A,B$ for the union of the multisets $A$ and $B$, defined as expected.

The reason for this choice of definition lies in the need to talk about specific elements $x$ of a multiset $A$. As $A$ is a set of indexed elements we can then choose $(x,n)$ for some specific $n$.

\medskip
Let $(X,<_X)$ be a well-ordered set and $\calM_X$ be the set of all finite multisets over $X$. We define the \emph{Dershowitz-Manna ordering} $\lDM$ on $\calM_X$ as follows: Let $A,B$ be in $\calM_X$, then $A \lDM B$ iff there exists $x \in X$ such that
\begin{enumerate}
	\item $\multi_A(x) < \multi_B(x)$ and
	\item for all $y >_X x$ it holds $\multi_A(y) = \multi_B(y)$.
\end{enumerate}
The Dershowitz-Manna ordering was introduced in \cite{Dershowitz1979}, where it is also shown that the ordering is well-founded.

\begin{proposition}\label{prop.DerMannaWellfounded}
	Let $(X,<_X)$ be a well-ordered set and $\calM_X$ be the set of all finite multisets over $X$. Then $(\calM_X,\lDM)$ is a well-order.
\end{proposition}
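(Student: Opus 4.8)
The plan is to establish that $(\calM_X, \lDM)$ is a well-order by verifying two things: that $\lDM$ is a strict linear order on $\calM_X$, and that it is well-founded. Well-foundedness is already cited as the original result of Dershowitz and Manna~\cite{Dershowitz1979}, so the substantive work is to check linearity, i.e. that any two distinct finite multisets are comparable. (One should first also note antisymmetry, transitivity and irreflexivity, which are routine from the definition, though transitivity needs a small argument.)

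For linearity, I would argue as follows. Let $A, B \in \calM_X$ with $A \neq B$. Consider the set $D \isdef \{x \in X \mid \multi_A(x) \neq \multi_B(x)\}$, which is nonempty since $A \neq B$, and finite since $A$ and $B$ are finite multisets, hence have finite support. A nonempty finite subset of the well-ordered set $(X, <_X)$ has a maximum element; call it $x_0 \isdef \max D$. Then for all $y >_X x_0$ we have $y \notin D$, so $\multi_A(y) = \multi_B(y)$, giving condition~(2) in the definition of $\lDM$ for the witness $x_0$. Since $x_0 \in D$, we have $\multi_A(x_0) \neq \multi_B(x_0)$; in the case $\multi_A(x_0) < \multi_B(x_0)$ this witnesses $A \lDM B$, and in the case $\multi_B(x_0) < \multi_A(x_0)$ it symmetrically witnesses $B \lDM A$. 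Hence $A$ and $B$ are $\lDM$-comparable, and irreflexivity (immediate, as condition~(1) fails when $A = B$) shows the two cases are exclusive.

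For transitivity, suppose $A \lDM B$ with witness $x_0$ and $B \lDM C$ with witness $x_1$. Let $x_2 \isdef \max(x_0, x_1)$. For $y >_X x_2$ both defining clauses give $\multi_A(y) = \multi_B(y) = \multi_C(y)$. At $x_2$ itself: if $x_2 = x_0 = x_1$ then $\multi_A(x_2) < \multi_B(x_2) < \multi_C(x_2)$; if $x_2 = x_0 >_X x_1$ then $\multi_B(x_2) = \multi_C(x_2)$ and $\multi_A(x_2) < \multi_B(x_2)$; and symmetrically if $x_2 = x_1 >_X x_0$. In all cases $\multi_A(x_2) < \multi_C(x_2)$, so $x_2$ witnesses $A \lDM C$. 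Antisymmetry follows since $A \lDM B$ and $B \lDM A$ together with irreflexivity and transitivity would give $A \lDM A$, a contradiction.

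I do not expect any genuine obstacle here: the only point requiring the well-order hypothesis on $X$ (rather than just a linear order) in this part is the existence of $\max D$, which uses that every nonempty finite subset of a linearly ordered set has a maximum — so in fact a linear order on $X$ would already suffice for linearity of $\lDM$. The well-order hypothesis is genuinely needed only for the well-foundedness half, which we import from~\cite{Dershowitz1979}. The mild care needed is to be explicit that finite multisets have finite support, so that $D$ is finite and $\max D$ exists; with our indexed-pair definition of multisets this is immediate from $\calM_X$ consisting of \emph{finite} multisets.
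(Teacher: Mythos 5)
Your proof is correct. Note, however, that the paper itself gives no argument for this proposition: it simply records the statement after citing Dershowitz and Manna~\cite{Dershowitz1979} for well-foundedness, and leaves linearity implicit. Your write-up therefore supplies exactly the content the paper elides, and does so by the natural route: since any two distinct finite multisets have finite support, the set $D$ of positions at which they differ is nonempty and finite, so its maximum $x_0$ exists and serves as the witness deciding the comparison. The transitivity argument via $\max(x_0,x_1)$ with the three cases is also correct, and your observation that linearity only needs $X$ to be linearly ordered (well-ordering being needed only for the imported well-foundedness) is a sound remark. One small cosmetic point: you could compress the linearity and irreflexivity arguments by noting that for distinct $A,B$ the witness $x_0=\max D$ decides \emph{exactly one} of $A\lDM B$ or $B\lDM A$, since $\multi_A(x_0)\neq\multi_B(x_0)$ is a trichotomy in $\Nat$ with the equality case excluded; but this is stylistic, not a gap.
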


\subsection{Well-quasi-orders}\label{sec.sub.wqo}

 We shortly introduce well-quasi-orders to the extent used in this paper, for a more extensive treatment we refer to \cite{Schuster2020} and for examples of applications of well-quasi-orders to proof-theory we refer to \cite{Galatos2025}. We will use well-quasi-orders in Section \ref{sec.contractions} to show that the procedure of eliminating contractions terminates.

Let $\calQ = (Q,\leq_Q)$ be a \emph{quasi-order}, meaning that $\leq_Q$ is a reflexive and transitive relation on a non-empty set $Q$. 
Let $\kappa \leq \omega$, a \emph{bad sequence} of length $\kappa$ over a quasi-order $\calQ$ is a sequence $(q_n)_{n < \kappa}$ such that for all $m < n$, $q_m \not \leq_Q q_n$.
A quasi-order $\calQ$ is a \emph{well-quasi-order}, in short \emph{wqo}, if every bad sequence over $\calQ$ is finite.

Let $(\Nat^k, \leq)$ be the set of $k$-tuples of natural numbers ordered with the natural product order: $(m_1,...,m_k) \leq (n_1,...,n_k) :\Leftrightarrow m_i \leq n_i$ for all $i =1,...,k$. Clearly, $(\Nat^k, \leq)$ is a quasi-order, Dicksons's Lemma \cite{Dickson1913} states that it is in fact a wqo.

\begin{lemma}[Dickson's Lemma]\label{lem.wellQuasiOrderMultiset}
	 For every $k \in \Nat$,  $(\Nat^k,\leq)$ is a well-quasi-order.
\end{lemma}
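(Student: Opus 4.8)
The plan is to prove Dickson's Lemma by induction on $k$, using the fact that in a well-quasi-order one can always extract a weakly increasing subsequence from any infinite sequence (the so-called Ramsey-type argument). For the base case $k=1$, note that $(\Nat,\leq)$ is just a well-order, and a bad sequence would have to be strictly decreasing, which is impossible in $\Nat$; hence every bad sequence is finite. Alternatively, one can invoke Proposition~\ref{prop.DerMannaWellfounded} or simply the well-foundedness of $(\Nat,<)$ directly. Since the statement is phrased for all $k\in\Nat$, I would also dispose of the degenerate case $k=0$ trivially (the only tuple is the empty tuple, which is $\leq$ itself).

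For the inductive step, assume $(\Nat^{k},\leq)$ is a wqo and consider an infinite sequence $(q_n)_{n<\omega}$ in $\Nat^{k+1}$; I want to show it is not bad, i.e.\ it contains a pair $m<n$ with $q_m \leq q_n$. Write $q_n = (q_n', a_n)$ where $q_n' \in \Nat^k$ is the projection onto the first $k$ coordinates and $a_n \in \Nat$ the last. First I would use well-foundedness of $(\Nat,<)$ to extract an infinite subsequence $(q_{n_i})_{i<\omega}$ along which the last coordinates are weakly increasing: $a_{n_0} \leq a_{n_1} \leq \cdots$. (Concretely: the set of indices $n$ such that $a_n$ is minimal among $\{a_m : m \geq n\}$ is infinite when restricted appropriately — or more cleanly, pick $n_0$ with $a_{n_0}$ minimal, then $n_1 > n_0$ with $a_{n_1}$ minimal among indices $>n_0$, and so on; this works because $\Nat$ has no infinite descending chain.) Then apply the induction hypothesis to the sequence $(q_{n_i}')_{i<\omega}$ in $\Nat^k$: since $(\Nat^k,\leq)$ is a wqo, this sequence is not bad, so there are $i<j$ with $q_{n_i}' \leq q_{n_j}'$. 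Combining with $a_{n_i} \leq a_{n_j}$ gives $q_{n_i} \leq q_{n_j}$ in $\Nat^{k+1}$ with $n_i < n_j$, contradicting badness.

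I expect the main obstacle — though a minor one — to be stating cleanly the subsequence-extraction step so that it is rigorous without appealing to heavier machinery than needed. The cleanest route is probably the following: given $(a_n)_{n<\omega}$ in $\Nat$, define $n_0$ to be the least index realising $\min\{a_n : n<\omega\}$, and inductively $n_{i+1}$ the least index $> n_i$ realising $\min\{a_n : n > n_i\}$; by construction $a_{n_0} \leq a_{n_1} \leq \cdots$ since each successive minimum is taken over a smaller set. This uses only that every nonempty subset of $\Nat$ has a least element, which is available. After that the induction hypothesis does all the work. It is worth remarking that this argument is really the standard proof that a finite product of wqos is a wqo; stating it in the generality of arbitrary wqos $\calQ_1,\dots,\calQ_k$ would make the induction step read more uniformly, but since the paper only needs $\Nat^k$ I would keep the statement as is and only use well-foundedness of $\Nat$ at the leaves of the induction.
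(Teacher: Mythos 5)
Your proof is correct and follows essentially the same route as the paper: induction on $k$, extracting a weakly increasing subsequence in the last coordinate, and then appealing to the induction hypothesis on the projection to the first $k$ coordinates (the paper states this as "the projected subsequence is bad in $\Nat^k$, contradiction," while you phrase it contrapositively as "the projected subsequence must have a good pair, which lifts"). The extra care you devote to the subsequence-extraction step and the observation that the argument generalises to arbitrary finite products of wqos are sound but not needed for the paper's purposes.
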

\begin{proof}
	By induction on $k$. The base case is trivial. For the inductive step assume that $(\Nat^k,\leq)$ is a wqo, we need to show that $(\Nat^{k+1}, \leq)$ is a wqo. 
	Towards a contradiction assume that $(a_n)_{n \in \omega}$ is an infinite sequence of $(k+1)$-ary tuples such that for all $m < n$ it holds $a_m \not\leq a_n$. For $n \in \omega$ let $a_n = (a_n^1,...,a_n^k, a_n^{k+1})$ and define $b_n \isdef (a_n^1,...,a_n^{k})$ and $s_n \isdef a_n^{k+1}$.  Because $(s_n)_{n\in \omega}$ is an infinite sequence of natural numbers there are increasing indices $(n(i))_{i \in \omega}$ such that $s_{n(i)} \leq s_{n(j)}$ for all $i < j$. But then $(b_{n(i)})_{i \in \omega}$ is an infinite sequence of $k$-tuples such that $b_{n(i)} \not\leq b_{n(j)}$ for all $i < j$. This contradicts the fact that $(\Nat^{k}, \leq)$ is a wqo.
\end{proof}

We use well-quasi-orders as a tool for showing termination of our cut-elimination algorithm. This resembles their use in showing termination of proof search algorithms in substructural logics \cite{Galatos2025}. For this we not only need the non-existence of infinite bad sequences, but moreover a bound on the length of finite bad sequences. Such a bound may not always be found for wqos, for example consider the wqo $(\Nat,\leq)$, where we can easily find bad sequences of arbitrary length. 

We therefore move to the concepts of \emph{normed well-quasi-orders} and \emph{controlled bad sequences.}

\begin{definition}
	A \emph{normed well-quasi-order}, in short \emph{nwqo}, is a triple $\calQ = (Q,\leq_Q,\norm)$, where $(Q,\leq_Q)$ is a wqo and $\norm : Q \to \Nat$ is a \emph{proper norm}, meaning that for every $n \in \Nat$, the set $\{q \in Q \| \norm[q] \leq n\}$ is finite.
 \end{definition}
 
 \begin{definition}
 	A \emph{control function} is a map $f: \Nat \to \Nat$ that is strictly increasing, that is, $f(m) > f(n)$ for all $m>n$.
 	
 	Given an nwqo $\calQ$, a control function $f$ and $t \in \Nat$, an \emph{$(f,t)$-controlled bad sequence over $\calQ$} is a bad sequence $(q_n)_{n < \kappa}$ over $\calQ$ where\footnote{Here $f^n(t) = f(\cdots (f(t)) \cdots )$ stands for the $n$-th iterate of $f$ applied to $t$.} $\norm[q_n] \leq f^n(t)$ for all $n < \kappa$.
 \end{definition}
 
 \begin{lemma}\label{lem.wqoMaximalLength}
 	Given an nwqo $\calQ$, a control function $f$ and $t \in \Nat$, $(f,t)$-controlled bad sequences over $\calQ$ have a maximal length.
 \end{lemma}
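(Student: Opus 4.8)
The plan is to prove Lemma~\ref{lem.wqoMaximalLength} by contradiction, extracting from an infinite family of ever-longer controlled bad sequences a single infinite controlled bad sequence, which is impossible since $\calQ$ is a wqo. So suppose for contradiction that for every $\ell \in \Nat$ there is an $(f,t)$-controlled bad sequence of length at least $\ell$. The key structural fact to exploit is that, for a fixed control function $f$ and fixed $t$, the set of possible values for the $n$-th element of any $(f,t)$-controlled bad sequence is \emph{finite}: indeed $\norm[q_n] \leq f^n(t)$, and since $\norm$ is a proper norm, $\{q \in Q \mid \norm[q] \leq f^n(t)\}$ is finite. This finiteness at each coordinate is exactly what powers a König-type argument.

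The main step is a König's lemma / infinite pigeonhole construction. Consider the tree $T$ whose nodes at depth $n$ are the length-$n$ prefixes $(q_0,\dots,q_{n-1})$ that are themselves $(f,t)$-controlled bad sequences, ordered by the prefix relation. By assumption $T$ has nodes of every depth, so $T$ is infinite. Each node has only finitely many children: a child of $(q_0,\dots,q_{n-1})$ is obtained by appending some $q_n$ with $\norm[q_n] \leq f^n(t)$ and $q_m \not\leq_Q q_n$ for all $m < n$, and there are only finitely many candidates for $q_n$ since $\{q \mid \norm[q] \leq f^n(t)\}$ is finite. Hence $T$ is a finitely branching infinite tree, so by König's Lemma it has an infinite branch $(q_n)_{n \in \omega}$. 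Every finite prefix of this branch is a bad sequence, so $(q_n)_{n \in \omega}$ is itself an infinite bad sequence over $\calQ$ (the bad-sequence condition is a condition on pairs $m < n$, hence determined by finite prefixes). Moreover each $q_n$ satisfies $\norm[q_n] \leq f^n(t)$, so it is $(f,t)$-controlled — though in fact we only need that it is an infinite bad sequence. This contradicts the assumption that $\calQ = (Q,\leq_Q,\norm)$ has wqo underlying order $(Q,\leq_Q)$, since a wqo admits no infinite bad sequence. Therefore the lengths of $(f,t)$-controlled bad sequences are bounded, i.e. there is a maximal length.

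I expect the main (and only real) obstacle to be making the tree construction precise enough that finite branching is genuinely clear — specifically, pinning down that ``badness'' and ``$(f,t)$-controlledness'' are both prefix-closed and locally checkable, so that the set of depth-$n$ prefixes forms a well-defined finitely-branching tree. Once that is set up, König's Lemma and Lemma~\ref{lem.wqoMaximalLength}'s hypothesis that $\norm$ is proper do all the work. An alternative phrasing avoiding König's Lemma would be a direct iterated-pigeonhole argument: from infinitely many bad sequences of unbounded length, infinitely many agree on $q_0$ (finitely many choices), pass to that subfamily; among those, infinitely many agree on $q_1$; and so on, producing in the limit an infinite bad sequence. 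Either route works; I would present the König's Lemma version as it is cleanest, noting explicitly that the hypothesis ``$\norm$ proper'' is precisely what guarantees finite branching and that without a norm (as in the $(\Nat,\leq)$ example already flagged in the text) the conclusion fails.
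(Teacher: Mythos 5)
Your proof is correct and follows essentially the same strategy as the paper: construct the finitely-branching tree of $(f,t)$-controlled bad sequence prefixes, use properness of the norm for finite branching and the wqo property to preclude an infinite branch, and conclude by König's Lemma. The only cosmetic difference is that you phrase the argument as a proof by contradiction (extracting an infinite bad sequence from an infinite branch), whereas the paper applies König's Lemma in its direct form (finitely branching, no infinite branch, hence finite); these are logically the same.
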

 \begin{proof}
 	The idea is to construct a finitely branching tree $T$ of all possible $(f,t)$-controlled bad sequences over $\calQ$ and then use König's Lemma. The root of $T$ will be unlabelled, and at level $1$ we add all elements $q \in Q$ such that $\norm[q]\leq t$. Now let $q_1,...,q_i$ be a path to a node $q_i$ at level $i$, as children of $q_i$ we add all elements $q \in Q$ such that $\norm[q] \leq f^{i+1}(t)$ and such that $q_1,...,q_i,q_{i+1}$ is a bad sequence over $\calQ$. This constructs a tree $T$ of all possible $(f,t)$-controlled bad sequences over $\calQ$. As $\norm$ is a proper norm, $T$ is finitely branching and, because $Q$ is a wqo, it does not have an infinite branch. Therefore König's Lemma yields that $T$ is finite. In particular, $T$ has a maximal depth which corresponds to the maximal length of an $(f,t)$-controlled bad sequences over $\calQ$.
 \end{proof}
 
\begin{definition}
	 Given an nwqo $\calQ$ and a control function $f$ we define the \emph{length function} $\lenf{\calQ}{f} : \Nat \to \Nat$ that maps each $t \in \Nat$ to the maximal length of $(f,t)$-controlled bad sequences over $\calQ$.
\end{definition}
 
 Let $\norm_{\infty}$ be the infinity norm on $(\Nat^k,\leq)$ defined as $\norm[(n_1,...n_k)]_{\infty}  \isdef \max\{n_i \| i = 1,...,k\}$ and define $\sfN^k \isdef (\Nat^k,\leq,\norm_{\infty})$. A thorough investigation of the complexity of $\lenf{\sfN^k}{f}$ can be found in \cite{Figueira2011}. We will not go into more detail as we are not dealing with complexity issues in this paper. Let us note though, that for a primitive recursive $f$ and fixed $k$ the function $\lenf{\sfN^k}{f}$ is primitive recursive as well. If $k$ is added as a part of the input, the function $(k,t) \mapsto \lenf{\sfN^k}{f}(t)$ is not primitive recursive and its growth is comparable to the Ackermann function.

 \medskip
 In this paper we will be working with the following nwqo. Let $X$ be a finite set. We let $\sfM_X \isdef (\calM_X,\subseteq, \norm_{\infty})$ be the nwqo consisting of the set of all multisets over $X$ ordered by inclusion together with the infinity norm $\norm[A]_{\infty} \isdef \max\{\multi_A(x) \| x \in X\}$. Let $X = \{x_1,...,x_k\}$. It is easy to see that $\sfM_X$ is isomorphic to $\sfN^k$: consider the map $A \mapsto (\multi_A(x_1),...,\multi_A(x_k))$. Therefore, $\sfM_X$ is an nwqo due to Lemma \ref{lem.wellQuasiOrderMultiset} and we can use Lemma \ref{lem.wqoMaximalLength} for $\sfM_X$.
	
	\section{The \Focus system}
\label{sec.proofsystem}

An \emph{annotated formula} is a pair $(\phi,a)$, usually denoted as $\phi^a$, where $\phi$ is a formula and $a \in \{f,u\}$. We call annotated formulas of the form $\phi^f$ \emph{in focus} and of the form $\phi^u$ \emph{out of focus}. We use $a,b$ as variables ranging over $\{f,u\}$.
A finite \emph{multiset} of annotated formulas is called a \emph{sequent}. 
We define the following  operations on sequents $\Sigma$:
\begin{align*}
	\Sigma^u &\isdef \{\phi^u \mid \phi^a \in \Sigma \} \qquad &
	\Sigma^- &\isdef \{\phi \mid \phi^a \in \Sigma \}  \\
	\Sigma^f &\isdef \{\phi^f \mid \phi^a \in \Sigma \} \qquad &
	\ldia\Sigma~~ &\isdef \{\dia \phi^a \mid \phi^a \in \Sigma\}
\end{align*}

In Figure \ref{fig.RulesFocus} the rules of the \Focus system are depicted. The axiom \AxLit and the rules \RuOr,\RuAnd,\RuBox, \RuWeak, \RuContr are standard. In the fixpoint rules \RuMu and \RuNu we prove a fixpoint formula $\eta x. \phi$ by showing its unfolding $\phi\subst{\eta x. \phi}{x}$. Note that $\phi\subst{\eta x. \phi}{x}$ might be syntactically bigger than $\eta x. \phi$ and therefore branches in a derivation might be infinitely long. In the \emph{focus rules} \RuF and \RuU we can put formulas in focus and out of focus, together with \RuMu these are the only formulas changing annotations. The rule \RuDischarge[] marks repeats, every \RuDischarge[] rule is labelled with a unique discharge token taken from a fixed infinite set $\Tokens = \{\dx, \dy, \dz,...\}$. The rule \RuCut is formulated as usual, notably the cut-formula is always out of focus. It will be the goal of this paper to show that we can eliminate \RuCut rules.

\begin{figure}[tbh]
\begin{mdframed}[align=center]
\begin{minipage}{\textwidth}
	\begin{minipage}{0.2\textwidth}
		\begin{prooftree}
		 \hypo{\phantom{X}}
		 \infer1[\AxLit]{p^a, \atneg{p}^b}
		\end{prooftree}
	\end{minipage}
	\begin{minipage}{0.25\textwidth}
		\begin{prooftree}
		 \hypo{\phi^a,\psi^a,\Sigma}
		 \infer1[\RuOr]{(\phi \lor \psi)^a,\Sigma}
		\end{prooftree}
	\end{minipage}
	\begin{minipage}{0.25\textwidth}
		\begin{prooftree}
			\hypo{\phi[\mu x . \phi / x]^u, \Sigma}
			\infer1[\RuMu]{\mu x . \phi^a, \Sigma}
		\end{prooftree}
	\end{minipage}
	\begin{minipage}{0.20\textwidth}
		\begin{prooftree}
			\hypo{\Sigma}
			\infer1[\RuWeak]{\phi^a, \Sigma}
		\end{prooftree}
	\end{minipage}
	
\end{minipage}

\bigskip

\begin{minipage}{\textwidth}
	\begin{minipage}{0.2\textwidth}
		\begin{prooftree}
			\hypo{\phi^a,\Sigma}
			\infer1[\RuBox]{\Box \phi^a, \dia \Sigma}
		\end{prooftree}
	\end{minipage}
	\begin{minipage}{0.25\textwidth}
		\begin{prooftree}
			\hypo{\phi^a, \Sigma}
			\hypo{\psi^a,\Sigma}
			\infer2[\RuAnd]{(\phi \land \psi)^a,\Sigma}
		\end{prooftree}
	\end{minipage}
	\begin{minipage}{0.25\textwidth}
		\begin{prooftree}
		 \hypo{\phi[\nu x . \phi / x]^a, \Sigma}
		 \infer1[\RuNu]{\nu x . \phi^a, \Sigma}
		\end{prooftree}
	\end{minipage}
	\begin{minipage}{0.2\textwidth}
		\begin{prooftree}
			\hypo{\phi^a,\phi^a,\Sigma}
			\infer1[\RuContr]{\phi^a,\Sigma}
		\end{prooftree}
	\end{minipage}
\end{minipage}

\bigskip

\begin{minipage}{\textwidth}
	\begin{minipage}{0.01\textwidth}
		\phantom{x}
	\end{minipage}
	\begin{minipage}{0.19\textwidth}
		\begin{prooftree}
			\hypo{[\Sigma]^\dx}
			\infer[no rule]1{\vdots}
			\infer[no rule]1{\Sigma}
			\infer1[\RuDischarge[\dx]]{\Sigma}
		\end{prooftree}
	\end{minipage}
	\begin{minipage}{0.25\textwidth}
		\begin{prooftree}
			\hypo{\Delta^f,\Sigma}
			\infer1[\RuF]{\Delta^u,\Sigma}
		\end{prooftree}
	\end{minipage}
	\begin{minipage}{0.25\textwidth}
		\begin{prooftree}
			\hypo{\Delta^u,\Sigma}
			\infer1[\RuU]{\Delta^f,\Sigma}
		\end{prooftree}
	\end{minipage}	
	\begin{minipage}{0.2\textwidth}
		\begin{prooftree}
			\hypo{\phi^u,\Sigma_0}
			\hypo{\Sigma_1, \mybar{\phi}^u}
			\infer2[\RuCut]{\Sigma_0,\Sigma_1}
		\end{prooftree}
	\end{minipage}
\end{minipage}
\end{mdframed}

\caption{Proof rules of the \Focus system}
\label{fig.RulesFocus}
\end{figure}

\begin{definition}[Derivation]
	A \emph{\Focus derivation} $\pi = (T,P,\sfS,\sfR)$ is a quadruple such that
	$(T,P)$ is a, possibly infinite, tree with nodes $T$ and parent relation $P \subseteq T \times T$; 
	$\sfS$ is a function that maps every node $v \in T$ to a sequent $\sfS_v$;
	$\sfR$ is a function that maps every node $v \in T$ to either 
	(i) the name of a rule in Figure \ref{fig.RulesFocus},
	(ii) a discharge token or (iii) an extra value \extraValue such that 
	\begin{enumerate}
		\item the specifications of the rules in Figure \ref{fig.RulesFocus} are satisfied,
		\item every node labelled with a discharge token is a leaf, and
		\item for every leaf $l$ that is labelled with a discharge token $\dx \in \Tokens$ there is an ancestor $c(l)$ of $l$ that is labelled with \RuDischarge[\dx] and such that $l$ and $c(l)$ are labelled with the same sequent. In this case we call $l$ a \emph{repeat leaf} and $c(l)$ the \emph{companion} node of $l$.
	\end{enumerate}
	A \Focus derivation of a sequent $\Sigma$ is a \Focus derivation where the root is labelled with $\Sigma$.
\end{definition}

We will read proof trees `upwards', so nodes labelled with premises are viewed as children of nodes labelled with the conclusion of a rule. A node $v$ is called an \emph{ancestor}  of a node $u$ if there are nodes $v=v_0$, \dots , $v_n=u$ with $v_{i+1}$ being a child of $v_i$ for $i = 0,...,n-1$ and $n > 0$.

Let $\pi = (T,P,\sfS,\sfR)$ be a derivation. We will be working with the following two trees associated to $\pi$.
\begin{enumerate}
	\item[(i)] The usual proof tree $\calT_\pi \isdef (T,P)$.
	\item[(ii)] The \emph{proof tree with back edges} $\cyclicPT \isdef (T,P^C)$, where $P^C$ is the parent relation plus back-edges for each repeat leaf, meaning that
	\( P^C \isdef P \cup \{(l,c(l))\mid l \text{ is a repeat leaf}\}.\) 
\end{enumerate}
A \emph{path} in a \Focus derivation $\pi = (T,P,\sfS,\sfR)$ is a path in $\cyclicPT$.

\begin{definition}[Successful path]
	A path $\tau$ in a \Focus derivation is called \emph{successful} if
	\begin{enumerate}
		\item every sequent on $\tau$ has a formula in focus,
		\item there is no application of \RuF on $\tau$ and
		\item  $\tau$ passes through an application of \RuBox.
	\end{enumerate}
	Let $l$ be a  a repeat leaf in a \Focus derivation  $\pi = (T,P,\sfS,\sfR)$ with companion  $c(l)$, and let $\tau_l$ denote the \emph{repeat path} of $l$  in $\calT_\pi$ from $c(l)$ to $l$. We call $l$ a \emph{discharged leaf} if the path $\tau_l$ is successful.		
	A leaf is called \emph{closed} if it is either a discharged leaf or labelled with an axiom and is called \emph{open} otherwise.
\end{definition}

\begin{definition}[Proof]
	A \emph{\Focus proof} $\pi$ is a finite \Focus derivation, where every leaf is closed.
\end{definition}

\noindent
Note that we make some adaptions to the \Focus system compared to the presentation in \cite{Marti2021}:
	\begin{enumerate}
		\item Sequents are \emph{multisets} of formulas, compared to sets in \cite{Marti2021}. Therefore we add the contraction rule \RuContr.
		\item We change the focus rules \RuF and \RuU to apply to multisets of formulas compared to single formulas,
		\item On successful paths we allow \RuU rules.
	\end{enumerate}
	It can be easily seen that the adaptions 1 and 2 are harmless. Proposition \ref{prop.FocusAdaptations} deals with the third adaption and thus shows the equivalence of the two presentations.  Consequently we obtain Soundness and Completeness from \cite{Marti2021}.

\begin{proposition}\label{prop.FocusAdaptations}
	Let $\pi$ be a \Focus proof. Then we can obtain a \Focus proof $\pi'$ from $\pi$ without applications of \RuU rules on repeat paths by only adding and deleting focus rules, . 
\end{proposition}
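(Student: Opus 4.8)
The plan is to proceed by induction on the structure of the \Focus proof $\pi$, transforming it into an equivalent proof $\pi'$ in which no repeat path contains an application of \RuU. The key observation is that adaptation 3 only widens the class of successful paths: any path successful in the sense of \cite{Marti2021} (no \RuF, passes through \RuBox, every sequent has a formula in focus) remains successful here, so it suffices to show how to \emph{remove} \RuU rules from repeat paths without destroying the proof status. First I would isolate the offending configurations: a repeat path $\tau_l$ from a companion $c(l)$ to a repeat leaf $l$ which is successful but uses \RuU somewhere. Since $\tau_l$ must have a formula in focus at every sequent and contains no \RuF, once a formula leaves focus via \RuU at some node it can never return to focus along $\tau_l$ below that point; in particular, the sequents at $c(l)$ and at $l$ must carry the same focus annotations (they are labelled with the same sequent), so any formula put out of focus by a \RuU on the path must already have been out of focus, or the annotations could not match up. This rigidity is what makes the \RuU applications on $\tau_l$ essentially vacuous.

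The main step is then to show that each such \RuU application can be commuted downward past the intervening rules until it reaches the companion $c(l)$ — or, more precisely, that the whole segment of the proof along $\tau_l$ can be relabelled with a uniform focus assignment so that the \RuU becomes trivial and can be deleted. Concretely I would argue: let $\Delta$ be the multiset of formulas that \RuU puts into focus at some node $w$ on $\tau_l$. Because no \RuF occurs on $\tau_l$ and the endpoint sequents agree, the formulas of $\Delta$ (and their residuals) are already in focus at $l$; tracing backwards, one checks they can be kept in focus from $c(l)$ onwards, so the \RuU at $w$ may be replaced by an identity step (same sequent above and below) and then removed. Doing this simultaneously for all \RuU applications on $\tau_l$ and for all repeat leaves — processing companion nodes from the root outward so that nested cycles are handled consistently — yields the proof $\pi'$. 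One must check that deleting a \RuU (and possibly inserting a compensating \RuF elsewhere off the repeat paths, to fix annotations on branches leaving $\tau_l$) preserves clauses 1–3 of the definition of a derivation and keeps all other leaves closed; the axioms \AxLit are insensitive to annotations, and the \RuBox, \RuOr, \RuAnd, \RuMu, \RuNu, \RuCut rules are all compatible with adding focus to side formulas.

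The hard part will be the bookkeeping when a repeat path shares nodes with other repeat paths, or passes through the companion of another cycle: changing the focus annotation along $\tau_l$ may alter sequents that lie on a different repeat path $\tau_{l'}$, potentially breaking the requirement that $l'$ and $c(l')$ carry identical sequents. I expect the clean way around this is to process the companion nodes in an order respecting the tree (outermost first), and to observe that since all relevant paths are successful and \RuF-free, the focus-increase is monotone and therefore globally consistent — any formula that needs to be in focus for one cycle can be in focus for all cycles through the same node without conflict. Making this monotonicity precise, and confirming that the finitely many relabellings terminate and produce a genuine \Focus proof, is the technical core; everything else is routine rule-by-rule verification.
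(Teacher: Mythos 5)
Your overall strategy is essentially the paper's: delete the \RuU rules on repeat paths, monotonically propagate the extra focus annotations upwards (following back-edges in $\cyclicPT$), appeal to monotonicity of the focus-increase for termination, and then patch up the trivial clusters off the repeat paths by inserting fresh \RuF/\RuU rules. That much matches.

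However, the intermediate justification you give is incorrect. You claim that on a successful repeat path, ``any formula put out of focus by a \RuU on the path must already have been out of focus,'' and later that ``the formulas of $\Delta$ (and their residuals) are already in focus at $l$.'' Neither is true. The rule \RuU has conclusion $\Delta^f,\Sigma$ and premiss $\Delta^u,\Sigma$, so by definition every formula in $\Delta$ \emph{is} in focus in the conclusion and genuinely loses focus in the premiss; this is not vacuous. And the residuals of a formula unfocused by \RuU need not survive to $l$ at all: it can be weakened away further up the path, while a syntactically identical but distinct occurrence is introduced in focus (say by $\RuWeak$ or by unfolding a $\nu$-formula followed by $\RuAnd$), so that $\sfS_l = \sfS_{c(l)}$ holds without any contradiction. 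In other words, the success condition and matching endpoints do not by themselves make the \RuU applications vacuous — the ``rigidity'' you appeal to does not exist. Fortunately your construction does not actually need it: what matters is only that replacing $\phi^u$ by $\phi^f$ is compatible with every rule (possibly after adjusting \RuF/\RuU instances) and can only help the success condition, so deleting \RuU and propagating focus upwards to a fixed point yields a valid proof regardless of whether the residuals persist. That monotonicity argument — which you do invoke near the end — is the correct and sufficient justification, and is exactly what the paper relies on; the ``rigidity'' claim should be dropped.
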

\begin{proof}
	Let $\pi$ be a \Focus proof, where \RuU rules might be applied on repeat paths $\tau_v$ for discharged leaves $v$.
	Let $\pi'$ be the \Focus proof, where all \RuU rules on repeat paths $\tau_v$  are deleted and focus annotations are inductively propagated upwards in $\cyclicPT$. As we are only putting formulas in focus, i.e. changing formulas $\phi^u$ to $\phi^f$, this terminates. It remains to adjust nodes, that do not lie on a repeat path $\tau_v$ for any $v$, by putting formulas in focus and adjusting \RuF and \RuU rules. This results in a \Focus proof of the same sequent without \RuU rules on repeat paths.	
\end{proof}

\begin{theorem}[Soundness and Completeness, \cite{Marti2021}]
	There is a \Focus proof of $\Sigma$ iff $\Lor \Sigma^-$ is valid.
\end{theorem}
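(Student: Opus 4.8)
The plan is to derive this from the soundness and completeness of the original \Focus system of Marti and Venema~\cite{Marti2021}, by showing that none of the three adaptations listed above changes the set of provable sequents. Since~\cite{Marti2021} proves that a sequent is provable in their system exactly when the disjunction of its underlying formulas is valid, it then suffices to establish that $\Sigma$ has a \Focus proof in the present sense iff it has a proof in theirs.

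First I would treat adaptations~1 and~2 together. Given a proof in the original set-based, single-formula-focus system, read every sequent as a multiset with all multiplicities equal to $1$: the axiom, the Boolean, modal and fixpoint rules, the weakening rule, the focus rules and \RuCut are then all instances of the corresponding rules of \Focus (the multiset focus rules subsume the single-formula ones), no leaves are added or removed, and the back-edges are unchanged, so the resulting object is already a \Focus proof. Conversely, starting from a \Focus proof, contract every sequent down to its underlying set using \RuContr and replace each multiset focus rule by a block of single-formula focus rules; here one must check that a repeat leaf and its companion still carry equal sequents after collapsing multiplicities, and that the repeat path between them is still successful --- which holds because inserting contractions and extra focus rules neither destroys a formula in focus, nor introduces an \RuF, nor deletes the \RuBox application on the path. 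Finally, adaptation~3 is handled by Proposition~\ref{prop.FocusAdaptations}: every \Focus proof can be turned, using only insertions and deletions of focus rules, into one in which no \RuU occurs on a repeat path, matching the stricter success condition of~\cite{Marti2021}. Composing these transformations yields provability-equivalence of the two systems, and the theorem follows.

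The step I expect to be the main obstacle is the reverse direction of adaptation~1: when multiplicities are collapsed one has to make sure that each repeat leaf $l$ and its companion $c(l)$, formerly labelled with the same multiset, remain labelled with the same set, and that the contractions inserted along the way are placed so that the path from $c(l)$ to $l$ stays successful. This requires some care in the bookkeeping, but introduces no idea beyond those already used in the proof of Proposition~\ref{prop.FocusAdaptations}.
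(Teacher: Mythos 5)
Your proposal follows the paper's own route exactly: the theorem is imported from~\cite{Marti2021}, and the paper justifies this by observing that adaptations~1 and~2 (multisets with explicit contraction, multiset focus rules) are harmless and that adaptation~3 (allowing \RuU on repeat paths) is handled by Proposition~\ref{prop.FocusAdaptations}. One small caveat you gloss over in the easy direction: reading a set-based proof as a multiset proof with all multiplicities~$1$ is not literally rule-for-rule, since a rule whose two active formulas coincide (e.g.\ \RuOr{} applied to $\phi\lor\phi$, or a contraction hidden in the original set semantics) forces a duplicate in the multiset premiss, so a few \RuContr{} insertions are needed; this is a minor bookkeeping fix that does not affect the argument.
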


\noindent
We end this section with a few definitions that will be of importance later on.

The \emph{size} of a \Focus proof $\pi$ is the number of nodes in $\pi$.
For a path $\tau$ in $\pi$ we call a path $\tau'$ a \emph{subpath} of $\tau$, if $\tau$ can be written as $\tau = \sigma_0 \tau' \sigma_1$ for some paths $\sigma_0,\sigma_1$ in $\pi$. 

Let $\Gamma$ and $\Sigma$ be sequents. We write $\Gamma \seteq \Sigma$ if the underlying sets of the multisets $\Gamma$ and $\Sigma$ coincide. Importantly, if $\Gamma \seteq \Sigma$, then we may apply weakening and contraction rules to derive $\Gamma$ from $\Sigma$:
	\begin{align*}
		\begin{prooftree}
			\hypo{\Sigma}
			\infer1[\RuWeak,\RuContr]{\Gamma}
		\end{prooftree}
	\end{align*}

\begin{definition}[Subproof]
	For a node $v$ in $\pi$, the \emph{subproof of $\pi$ rooted at $v$}, written as $\pi_v$, is the result of recursively replacing every open leaf $l$ in $\pi_v$  with $\pi_{c(l)}$. In order to guarantee that \RuDischarge[] rules are labelled with unique discharge tokens, discharge tokens $\dx$ are replaced by fresh discharge tokens, whenever a \RuDischarge[\dx] rule is duplicated during a reduction step.
\end{definition}
Note that $\pi_v$ is well-defined, as $v$ is a descendent of $c(l)$ for every open leaf $l$. Hence, at some point $\pi_{c(l)}$ has no open leaves.

\begin{definition}
	The \emph{rank} of a cut with cut formula $\phi$ is $\rank(\phi)$. The \emph{cut-rank} of a \Focus derivation $\pi$ is the maximal rank of a cut in $\pi$ and is $0$ if there is no cut in $\pi$.
\end{definition}
	
	\section{Cut elimination strategy}\label{sec.ceStrategy}

We present a cut elimination procedure for the \Focus proof system. Our approach builds on the strategy developed for the \GKe proof system for modal logic with the eventually operator presented in \cite{Afshari2024}. The method is based on reductive cut elimination adjusted to cyclic proofs, where the cut reductions we employ are the expected ones and can be found in Appendix \ref{app.CutReductions}. We start with an informal explanation of the cut-elimination strategy.

\subsection{Main ideas}
One way to prove cut elimination for finitary proofs is by first proving \emph{cut admissibility}, in other words eliminating a cut at the root of a proof. In the context of cyclic proofs the notion of cut admissibility has to be extended, such that we first eliminate cuts that are \emph{in the root-cluster} -- those nodes from which there is a path to the root in the proof tree with back edges. If the root-cluster only consists of one node we retrieve the usual notion. 

Cut admissibility is shown by an induction on the \emph{rank} of the cut formulas, which is a linearisation of the trace relation $\trace$. Importantly $\rank(\phi) > \rank(\psi)$ if $\phi \trace \psi$ and $\psi \not\trace \phi$.

At the core of our strategy is the need to isolate the applications of \RuCut that present the greatest challenges. We thus split applications of \RuCut into two categories: Cuts that are located inside a cycle are called \emph{unimportant} and cuts that do not are called \emph{important}. We reduce unimportant cuts to important ones of the same rank and reduce the rank of important cuts. 

As the name suggests, unimportant cuts are easier to deal with. Cut-reductions on unimportant cuts do not affect formulas in focus, hence those can be pushed upwards and we can find successful repeats ``below'' the cuts. All remaining cuts will be important and of the same rank.

The treatment of important cuts is more complicated, as descendants of the cut formula might be in focus. Pushing up those cuts might put formulas out of focus and consequently undermine successful paths. In order to still find successful repeats we use a property of $\AFMC$: exactly one of $\phi$ or $\mybar{\phi}$ is not navy for any $\AFMC$-formula $\phi$. Recall that $\phi$ is navy if $\phi \equic \nu x. \chi$ for some $\nu$-formula $\nu x. \chi$. Assume that $\phi$ is not navy and consider the following important cut:
\[\begin{prooftree}
	\hypo{\pi_0}
	\infer[no rule]1{\Sigma_0,\phi}
	\hypo{\pi_1}
	\infer[no rule]1{\Sigma_1,\mybar{\phi}}
	\infer2[\RuCut]{\Sigma_0,\Sigma_1}
\end{prooftree}\] 
Then this property implies that no descendant of $\phi$ in $\pi_0$ of the same rank is a $\nu$-formula. As we may assume that only navy formulas are in focus, all descendants of $\phi$ in $\pi_0$ of the same rank are out of focus. We carry on by deleting all descendants of $\phi$ of the same rank in $\pi_0$ and all descendants of $\mybar{\phi}$ of the same rank in $\pi_1$ and ``merge'' those two proofs. This process is similar to pushing cuts upwards, unfolding cycles whenever necessary and introducing cuts for descendants of $\phi$ of lower rank. In the resulting proof $\rho$ we can find successful repeats, as all formulas in focus in $\pi_0$ are transferred over and therefore successful paths in $\pi_0$ are projected to successful paths in $\rho$.

The main difficulty compared to the system \GKe for modal logic with the eventually operator \cite{Afshari2024} are occurrences of conjunctions and disjunctions in the scope of fixpoint operators. Applying cut reductions leads to multiple cut formulas in sequents and multiple sequents connected by cuts. To deal with these situations we employ a multi-cut rule. Because the multi-cut may increase in size one extra difficulty in the termination proof is to show that pushing up multi-cuts is productive. 

As it is often the case, \emph{contractions} pose one of the main problems to cut-elimination. For finitary proof systems there are two approaches to deal with contractions: In the first approach a generalization of the cut rule is added to the system -- the \emph{mix rule}. This rule allows to introduce the cut-formula multiple times in the premisses of its rule and therefore functions as a combination of cut and contraction. All cut rules can then be replaced by mix rules and henceforth all mix rules are eliminated.
In the second approach the contraction rule is first shown to be admissible in the proof system without an explicit contraction rule and then cut rules are eliminated from the system without contractions. 

We take inspiration from both of these approaches. In order to eliminate unimportant cuts we introduce a mix rule. The proof is then partitioned into subproofs not containing modal rules -- on these finitary subproofs we can eliminate the mix rules as for finitary proofs. Before eliminating important cuts  first the subproofs rooted at the premisses of the cut-rule are pre-processed, such that those subproofs do not contain contractions anymore -- this elimination of contractions is done in Section \ref{sec.contractions}.

In the next subsection we introduce the necessary notions to make the definitions of important and unimportant cuts formal.

\subsection{Important and unimportant cuts}

Let $(G,E)$ be a directed graph. A \emph{strongly connected subgraph} of $G$ is a set of nodes $A \subseteq G$, such that from every node of $A$ there is a path to every other node in $A$. A \emph{cluster} of $G$ is a maximally strongly connected subgraph of $G$. A cluster is called \emph{trivial} if it consists of only one node and \emph{proper} otherwise.

Let $\pi$ be a \Focus proof. A \emph{cluster} of $\pi$ is a cluster of the proof tree with back edges $\cyclicPT$. Let $\calS_\pi$ be the set of proper clusters of $\pi$. We define a relation $\sccr$ on $\calS_\pi$ as follows: $S_1 \sccr S_2$ if $S_1 \neq S_2$ and there are nodes $v_1 \in S_1, v_2 \in S_2$ such that there is a path from $v_1$ to $v_2$ in $\cyclicPT$. The relation $\sccr$ is irreflexive, transitive and antisymmetric. We write $\depth(S)$ for the length of the longest path in $(\calS_\pi,\sccr)$ starting from the cluster $S$.

For a node $v$ in a proof $\pi$, we define the \emph{depth} of $v$ to be
\[
\depth(v)=\max\{\depth(S)\| S \in \calS_\pi \text{ and there is a path from }v \text{ to some } u \in S\}
\] 
where $\max\emptyset=0$. The \emph{depth} of a proof is defined as the depth of its root.

The \emph{component} of $v$, written $\comp(v)$ is the set of nodes $u \in \pi$, that are reachable from $v$ in $\cyclicPT$ with $\depth(u) = \depth(v)$. Note that $\comp(v)$ does not have to coincide with a cluster in $\pi$, but may contain multiple clusters. The component of the root is called the \emph{root-component} and the cluster of the root is called the \emph{root-cluster}.

\begin{definition}[Important cut]	Let $\sfC$ be an occurrence of a \RuCut rule at a node $v$ in a \Focus proof $\pi$. 
	We call $\sfC$ \emph{important} if $v$ is in a trivial cluster of $\pi$ and \emph{unimportant} otherwise. 
\end{definition}

Let $v'$ be a child node of $v$ in a \Focus derivation $\pi$. We call a formula $\phi'$ at $v'$ an \emph{immediate descendant} of $\phi$ at $v$ if either (i) $\phi$ and $\phi'$ are designated formulas in the rule description or (ii) $\phi = \phi'$ and they are not designated.
A formula $\psi$ at a node $u$ is called a \emph{descendant} of a formula $\phi$ at a node $v$ if there is a path $v=v_0$, \dots, $v_n=u$ containing respective formulas $\phi=\phi_0$, \dots, $\phi_n=\psi$ such that $\phi_{i+1}$ at $v_{i+1}$ is an immediate descendant of $\phi_i$ at $v_i$ for $i = 0,...,n-1$.
A descendant $\psi$ at a node $u$ of a formula $\phi$ at a node $v$ is called a \emph{component descendant} of $\phi$ if $u \in \comp(v)$.

The next lemma justifies the definition of unimportant cuts. It implies that pushing unimportant cuts upwards does not alter successful paths.

\begin{lemma}\label{lem.unimporantCompDescendent}
	Let $\sfC$ be an unimportant occurrence of a \RuCut with cut-formula $\phi$. Then all component descendants of $\phi$ are out of focus.
\end{lemma}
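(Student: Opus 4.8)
The statement to prove is Lemma~\ref{lem.unimporantCompDescendent}: if $\sfC$ is an unimportant occurrence of $\RuCut$ with cut-formula $\phi$ (so the node $v$ carrying the cut lies in a proper cluster $S$), then every component descendant of $\phi$ is out of focus. Recall that by definition of $\RuCut$ in Figure~\ref{fig.RulesFocus} the cut-formula itself is introduced out of focus: the two premisses carry $\phi^u$ and $\mybar{\phi}^u$. So the claim is that this ``out of focus'' status is never lost along any path that stays inside the component $\comp(v)$. The plan is a straightforward induction along paths in $\cyclicPT$ from $v$, tracking the immediate-descendant relation and checking that no rule moves a descendant of $\phi$ into focus while remaining in $\comp(v)$.

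**The induction.** Let $\psi$ at a node $u$ be a component descendant of $\phi$, witnessed by a path $v = v_0, \dots, v_n = u$ with formulas $\phi = \phi_0, \dots, \phi_n = \psi$, where each $u_i := v_i$ satisfies $u_i \in \comp(v)$ (this is automatic for the $v_i$ since they lie on a path from $v$ to $u$ and $\comp(v)$ is defined as all nodes reachable from $v$ at the same depth; I would note that $\depth$ is constant along such a path because all $v_i$ lie in the same cluster-reachability class). I will show by induction on $i$ that $\phi_i$ is out of focus. The base case $i=0$ holds because the premisses of $\RuCut$ carry the cut formula out of focus. For the inductive step, suppose $\phi_i^u$ at $v_i$ and consider how $\phi_{i+1}$ at $v_{i+1}$ arises as an immediate descendant. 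The only rules that can change the annotation of a formula are, by the remark in Section~\ref{sec.proofsystem}, the rules $\RuF$, $\RuU$, and $\RuMu$ (the latter always produces $\phi[\mu x.\phi/x]^u$, hence out of focus — no problem). The rule $\RuU$ puts formulas \emph{into} focus, so this is the case to rule out; the rule $\RuF$ only ever removes focus, so it is harmless, as is any other rule (which leaves the annotation of a non-designated side formula untouched, or acts on designated formulas preserving annotation, e.g. $\RuOr$, $\RuAnd$, $\RuBox$, $\RuNu$).

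**The key step.** The crux is therefore: an application of $\RuU$ on the path $v_0, \dots, v_n$ cannot put a descendant of $\phi$ into focus \emph{while staying in the component}. Here is where unimportance is used. An application of $\RuU$ has the shape: from $\Delta^u, \Sigma$ conclude $\Delta^f, \Sigma$; reading upwards, going from conclusion to premiss, this moves the formulas in $\Delta$ from focus to out of focus — wait, I need to be careful about direction. Proof trees are read upwards, so the premiss $\Delta^u, \Sigma$ is the child and $\Delta^f, \Sigma$ is the parent; as we walk \emph{up} the path from $v$, at an $\RuU$ node we pass from the conclusion $\Delta^f,\Sigma$ (closer to root) to the premiss $\Delta^u,\Sigma$ — so walking upward, $\RuU$ takes formulas \emph{out} of focus, and $\RuF$ (premiss $\Delta^f,\Sigma$, conclusion $\Delta^u,\Sigma$) takes them \emph{into} focus when walking up. So the dangerous rule, walking upward along our path from the cut, is $\RuF$. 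I must show that no $\RuF$ application on the path $v_0,\dots, v_n$ lies within the component, equivalently: if $v_i \to v_{i+1}$ is an application of $\RuF$ putting $\phi_{i+1}$ into focus, then $v_{i+1} \notin \comp(v)$. But a cluster in $\cyclicPT$ is a maximal strongly connected subgraph, so from $v$ there is a cycle returning to $v$; any path inside $\comp(v)$ from $v$ back to a node of depth $\depth(v)$ can be closed into a cycle. The point is that within a proper cluster every node lies on a cycle, and since $v$ itself is in the proper cluster $S$, the segment of our path lying in $\comp(v)$ can be completed — actually the cleanest argument: $\comp(v)$ contains $S$ and $S$ is a proper cluster, so there is a successful-condition obstruction; more precisely, if $v_{i+1}\in\comp(v)$ and an $\RuF$ were applied between $v_i$ and $v_{i+1}$, then since $\comp(v)$ nodes all have depth $\depth(v)$ and are mutually reachable within that depth-class... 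Here the real content is that $v_i, v_{i+1}$ being in $\comp(v)$ forces them onto a path that, when closed up with the back-edge structure, yields a repeat path; a repeat path containing an $\RuF$ application cannot be successful (condition~2 of successful path), but in fact that is about repeat paths specifically, not arbitrary paths in a cluster. Let me reconsider: the honest obstacle — and the step I expect to be the main obstacle — is pinning down exactly why $\comp(v)$, which may properly contain the root-cluster and several clusters, cannot be ``escaped downward in focus-status'' by an $\RuF$; I would prove it by arguing that any node $w$ with $\depth(w) = \depth(v)$ reachable from $v$ lies in \emph{some} proper cluster of the same depth, hence on a cycle, and then invoking that $\RuF$ applications break cycles in the sense needed — concretely, I would show: \emph{if $u \in \comp(v)$ and $u' \in \comp(v)$ is reachable from $u$ via a path $P$ containing an $\RuF$ application, then traversing a cycle through $u$ and $u'$ (which exists since both are in proper clusters at the component's depth and the component is connected at that depth) produces, after finitely many unfoldings, a repeat leaf whose repeat path contains $\RuF$ — contradicting that the leaf is discharged, i.e. that $\pi$ is a proof}. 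This last reduction to the successfulness condition for actual repeat leaves is the delicate bookkeeping point. Once that is in hand, no $\RuF$ occurs on our path while in the component, so the annotation of each $\phi_i$ with $v_i \in \comp(v)$ stays $u$, giving $\psi = \phi_n$ out of focus, as required.
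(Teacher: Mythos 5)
Your plan — induct along the component-descendant chain from the cut premiss and show that no $\RuF$ step can intervene while staying inside $\comp(v)$ — is the same as the paper's, and your identification of $\RuF$ as the sole dangerous rule (after some initial confusion about direction) is correct. But the justification of the key step fails. You claim that ``any node $w$ with $\depth(w)=\depth(v)$ reachable from $v$ lies in some proper cluster of the same depth,'' and further that $\comp(v)$ may contain ``several clusters'' that are ``connected at that depth,'' so that any two nodes of $\comp(v)$ lie on a common cycle. Both claims are wrong. Let $S_v$ be the proper cluster containing the cut. If some $w\in\comp(v)$ lay in a proper cluster $S'\neq S_v$, then $S_v\sccr S'$, hence $\depth(S')<\depth(S_v)=\depth(v)$, contradicting $w\in\comp(v)$; so $\comp(v)$ contains exactly one proper cluster, namely $S_v$. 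And if $\depth(v)=0$, $\comp(v)$ can in addition contain trivial-cluster nodes (e.g.\ along a path running out to an axiom), which lie on no cycle at all. So the inference ``hence on a cycle'' does not go through, and the subsequent ``after finitely many unfoldings, a repeat leaf whose repeat path contains $\RuF$'' reduction — which you yourself flag as the delicate bookkeeping point — is never actually carried out.

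The correct argument (the paper's) is a direct depth computation and needs none of this. First, no $\RuF$ application can lie in a proper cluster of a \Focus proof: every node of a proper cluster lies on the repeat path $\tau_l$ of some discharged leaf, and successful paths contain no $\RuF$. So any $\RuF$ node lies in a trivial cluster. Second, in a minimally focused proof the child of an $\RuF$ node is a $\RuDischarge[]$ node, hence a companion, hence in a proper cluster $S'$; so an $\RuF$ node $w$ always reaches some proper cluster. If such a $w$ is reachable from $v$, then $w\notin S_v$ (being in a trivial cluster) and $w$ cannot reach back into $S_v$ (else it would be strongly connected to $S_v$), so every proper cluster reachable from $w$ is strictly $\sccr$-below $S_v$, giving $\depth(w)<\depth(S_v)=\depth(v)$ and thus $w\notin\comp(v)$. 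This replaces both the false structural claim about $\comp(v)$ and the unfinished cycle-unfolding argument in one stroke.
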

\begin{proof}
	If $\sfC$ is unimportant then it is in a proper cluster. The cut-formula $\phi$ is always out of focus and in proper clusters there are no applications of \RuF. Thus, a descendant of $\phi$ can only be put in focus in a trivial cluster, which has to be in a different component. 
\end{proof}

\subsection{Minimally focussed proofs}
In the operations we perform on proof-trees we need a good handle on the shape of the proof-trees we are dealing with. We therefore introduce a normal form on proofs that align proper clusters with sequents that have formulas in focus. 

Any node $v$ in a proper cluster of a \Focus proof $\pi$ has formulas in focus, as it is on the path $\tau_l$ of a discharged leaf $l$ to its companion. For nodes in a trivial cluster this is not necessarily the case. We can apply \RuF and \RuU rules in a certain way to minimize nodes with formulas in focus. By doing so, nodes with formulas in focus resemble the proper clusters of the proof tree with back edges: Any node with formulas in focus is either in a proper cluster or is labelled with \RuU. 

Moreover, we can minimize the number of focused formulas at every node in a cluster. Without loss of generality we may also assume that all formulas in focus are navy and of the same rank, since this can be ensured by only focusing navy formulas of the same rank when a \RuF-rule is applied, and applying a \RuU-rule whenever a focused formula of lower rank is derived.

\begin{definition}[Minimally focussed]
	A \Focus proof is called \emph{minimally focused} if the following conditions are satisfied:
	\begin{enumerate}
		\item if $v$ is labelled with \RuF, then its child is labelled with \RuDischarge[];
		\item  if $\depth(v') < \depth(v)$ for a child $v'$ of $v$, then $v'$ is labelled with \RuU, where all formulas in its premiss are out of focus;
		\item in any rule application of \RuF all formulas in $\Delta$ are navy formulas with the same rank;
		\item for any node $v$ in a proper cluster $S$, where $k$ is the maximal rank of a formula in focus in $S$: If $v$ is labelled with $\Sigma,\phi^f$ where $\rank(\phi)<k$, than $v$ is labelled with \RuU with premiss $\Sigma, \phi^u$. These are the only applications of \RuU in proper clusters.
	\end{enumerate}
\end{definition}
\begin{lemma}
	Let $\pi$ be a \Focus proof. Then we can obtain a minimally focussed \Focus proof $\pi'$ from $\pi$ by only adding and deleting focus rules. 
\end{lemma}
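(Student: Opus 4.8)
The plan is to argue by induction on $\depth(\pi)$, throughout modifying only \RuF and \RuU rules while keeping the rest of the proof skeleton — in particular every repeat leaf together with its companion — fixed. In the base case $\depth(\pi)=0$ there are no proper clusters, hence no companion nodes and no repeat leaves, so $\pi$ is a finite wellfounded derivation whose leaves are all axioms; I would simply delete every \RuF and \RuU rule and declare every formula out of focus. Since \AxLit and the remaining rules place no real constraint on annotations, this is again a \Focus proof, and it is vacuously minimally focussed.

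For the inductive step let $\depth(\pi)=d>0$ and let $R$ be the root-component. For every node $v'$ whose parent lies in $R$ but with $v'\notin R$ we have $\depth(\pi_{v'})<d$ (indeed every edge leaving $R$ is a strict depth drop, and there are none internal to $R$), so the induction hypothesis applies to $\pi_{v'}$; I would reinstall the resulting minimally focussed subproofs at the end, inserting at each such edge the \RuU rule (with out-of-focus premiss) that condition~2 demands, and, where the root of the grafted subproof lies in a proper cluster, adjusting the focus rules immediately around that root so that condition~1 is also met. At all nodes of $R$ outside its proper clusters I would declare every formula out of focus, keeping only the new \RuF rules just below companions and the new \RuU rules at the edges leaving $R$, and deleting all other focus rules there. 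The substantive work is at the proper clusters $S\subseteq R$. Every node of $S$ lies on the repeat path $\tau_l$ of some discharged leaf $l$ with $c(l)\in S$, and $\tau_l$ is successful: it carries a formula in focus at each node, contains no \RuF, and passes through a \RuBox. Following focused threads upward along $\tau_l$ — a focused thread is never created, since there is no \RuF, and is destroyed only by \RuWeak or when a focused $\mu$-formula is unfolded by \RuMu — and spiralling around $\tau_l$ in $\pi_{c(l)}$, using $\sfS_{c(l)}=\sfS_l$ and the finiteness of $\Clos(\sfS_{c(l)}^-)$, one extracts an infinite focused trace looping around $S$ that never takes a $\mu$-unfolding step. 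Such a trace has non-increasing rank, hence eventually constant rank, and constant rank forces a constant $\equic$-class; that class, containing a fixpoint formula that is unfolded cofinitely often and avoiding $\mu$-unfoldings, is navy (also using Proposition~\ref{prop.aFree}.5, that no formula is both magenta and navy); say its rank is $k_S$. I would then re-annotate $S$: just below each companion $c(l)\in S$ insert a \RuF rule that puts into focus exactly the navy rank-$k_S$ formulas on the extracted loop; propagate these annotations upward along the repeat paths; insert a \RuU rule wherever this would otherwise produce a focused formula of rank below $k_S$ (these being the only \RuU rules left inside $S$, as condition~4 requires); and delete every remaining \RuF and \RuU inside $S$.

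Finally I would verify that the resulting derivation $\pi'$ is a minimally focussed \Focus proof. Minimal focussedness is built in: condition~1 holds because every surviving \RuF sits immediately below a \RuDischarge node; condition~2 because the only drops of $\depth$ at the present level occur on the edges leaving $R$, where \RuU rules with out-of-focus premisses were inserted, with the remaining drops handled inside the grafted subproofs; and conditions~3 and~4 by construction. That $\pi'$ is still a \Focus proof comes down to the repeat leaves: the end-sequent of a repeat leaf $l$ again equals that of $c(l)$, because the re-annotation along a repeat path is forced, and each $\tau_l$ is again successful, since a formula stays in focus at each of its nodes, no \RuF was introduced on it, and its passage through a \RuBox is untouched.

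The hard part will be the cluster construction: one must choose \emph{coherently} which navy rank-$k_S$ formulas to keep in focus — consistently at the nodes lying on several repeat paths at once and stably through the thread-splitting caused by \RuContr and by boolean rules — all while leaving each $\tau_l$ with a surviving focused thread; it is precisely here that alternation-freeness is indispensable, both to guarantee (via Proposition~\ref{prop.aFree}.5) that the surviving thread is navy and to pin down a single rank $k_S$ for the whole cluster.
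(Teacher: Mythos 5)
Your core re-annotation idea — inside each proper cluster $S$, keep in focus only navy formulas of a single maximal rank $k_S$ and insert \RuU to drop lower-rank focus — matches the paper's. But you take a substantially more complicated route to justify it, and this introduces a real gap which you yourself flag at the end, plus a structural error.

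The paper does not extract a single infinite focused trace. Instead, it locates the one \RuF rule at the parent of the root of $S$ and simply shrinks its active set $\Delta$ to the subset $\Delta_m$ of \emph{all} navy formulas in $\Delta$ of maximal rank $k$. Since no choice of ``which navy rank-$k$ formulas'' is being made — they are all kept — the coherence problem you worry about (several repeat paths through the same node, thread splitting under \RuContr and the boolean rules) simply does not arise. The verification is then a short backward-tracing argument: since there are no \RuF rules on repeat paths and $k$ is maximal, any rank-$k$ formula in focus anywhere in $S$ must descend from a rank-$k$ focused formula at the companion; equality of ranks forces $\equic$-equivalence, so that ancestor is navy and lies in $\Delta_m$, hence stays in focus after the restriction. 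Your König-style extraction of a loop of constant rank that avoids $\mu$-unfoldings and must therefore be navy proves roughly the same fact, but at greater cost and, crucially, it tempts you into keeping in focus only the formulas of \emph{that} loop — which is exactly why your proof, as you admit, does not close.

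There is also a structural error: you propose to ``insert a \RuF rule just below each companion $c(l)\in S$''. A proper cluster typically has a unique entry companion (its root), but it can contain several companions in its interior. Interior companions lie on the repeat paths of other discharged leaves, so inserting \RuF below them directly violates clause 2 of the success condition (no \RuF on a repeat path), and would undischarge those leaves. The \RuF rule must sit only at the parent of the root of $S$, as in the paper. A smaller slip is the claim that a focused thread is ``destroyed only by \RuWeak or when a focused $\mu$-formula is unfolded by \RuMu'': in this presentation \RuU is allowed on repeat paths of the input proof and can also unfocus, so your thread-extraction argument would need to account for it. Both issues dissolve once you adopt the paper's strategy of keeping \emph{all} navy rank-$k$ formulas in $\Delta$ and inserting exactly one \RuF per proper cluster.
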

\begin{proof}
	Annotations only matter on repeat paths, therefore we may employ focus rules in such a way, that all formulas on nodes in trivial clusters are out of focus, hence satisfying conditions 1 and 2. Now assume that there is a proper cluster $S$ that does not satisfy 3 or 4. Let the parent of the root of $S$ be labelled with~
	$\begin{prooftree}
		\hypo{\Delta^f,\Sigma}
		\infer1[\RuF]{\Delta^u,\Sigma}
	\end{prooftree}$.
	Let $\Delta_m$ be the subset of $\Delta$ consisting of all navy formulas in $\Delta$ of maximal rank $k$ and let $\Delta_r = \Delta \setminus \Delta_m$. We change the \RuF rule to~ $\begin{prooftree}
		\hypo{\Delta_m^f,\Delta_r,\Sigma}
		\infer1[\RuF]{\Delta_m^u, \Delta_r,\Sigma}
	\end{prooftree}$ and propagate the annotations upwards accordingly, where we apply \RuU rules, whenever formulas of rank lower than $k$ are in focus. It remains to show that discharged leaves remain discharged leaves. A formula in focus of rank $k$ can only originate from a formula in focus of the same rank, as there are no applications of \RuF on repeat paths and $k$ is the maximal rank of formulas in focus. Therefore, all navy formulas of maximal rank in focus in the original proof remain in focus in the adapted proof. Thus it holds that all discharged leaves are translated to discharged leaves, meaning that they are still repeat leaves and that on every sequent on the repeat path there is a formula of rank $k$ in focus. Doing so we satisfy conditions 3 and 4.
\end{proof}
As every proof can be transformed to a minimally focused proof of the same sequent by only adding and removing \RuF and \RuU rules, we always assume that \Focus proofs are minimally focused. 

\begin{proposition}
	If $\pi$ is minimally focussed, then a \RuCut rule is important iff all formulas in the conclusion of the cut are out of focus.
\end{proposition}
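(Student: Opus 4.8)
The plan is to obtain the equivalence from the structure of minimally focussed proofs, using two facts that are already available: (a) every node of a proper cluster of $\cyclicPT$ lies on the repeat path $\tau_l$ of some discharged leaf, and therefore --- as that $\tau_l$ is successful --- carries a formula in focus; and (b) in a minimally focussed proof, every node carrying a formula in focus either lies in a proper cluster or is the conclusion of a \RuU-rule. I would then prove each direction of the Proposition by contraposition.

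\emph{Important implies all conclusion formulas out of focus.} Let the cut $\sfC$ occur at a node $v$ and suppose the conclusion sequent $\sfS_v$ carries a formula in focus. By (b), either $v$ lies in a proper cluster --- making $\sfC$ unimportant by definition, contrary to assumption --- or $\sfR(v) = \RuU$; the latter is impossible since $\sfR(v) = \RuCut$. So if $\sfC$ is important, no formula of $\sfS_v$ is in focus. \emph{All conclusion formulas out of focus implies important.} Suppose instead $\sfC$ at $v$ is unimportant, so $v$ sits in a proper cluster. By (a), $v$ lies on the repeat path $\tau_l$ of a discharged leaf $l$; being discharged, $\tau_l$ is successful, so by the first clause of the definition of a successful path every sequent along $\tau_l$ contains a formula in focus --- in particular $\sfS_v$, which is precisely the conclusion of $\sfC$. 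This is the contrapositive of the statement, so the Proposition follows.

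The substance therefore lies entirely in fact (b). Fact (a) is read off from the remarks preceding the definition of minimally focussed proofs (using that in a \Focus proof every leaf is closed, hence every repeat leaf is discharged), and once (a) and (b) are granted the Proposition is immediate. To justify (b) I would trace the immediate-descendant chain of a focussed formula at a node $v$ that is not labelled \RuU back towards the root: since the annotation $f$ can be created, reading a rule from conclusion to premiss, only by \RuF (the \RuMu-rule yields an out-of-focus unfolding, and \RuU takes formulas out of focus going upwards), the focus at $v$ originates either at the root or at a \RuF-node below $v$; the root case is excluded because in a minimally focussed proof the root carries a focussed formula only if it lies in a proper cluster. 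In the remaining case, minimal-focussing condition~1 makes the child of that \RuF-node a \RuDischarge[]-node, i.e. a companion of a proper cluster $S$, and one then uses minimal-focussing conditions~2 and~4 --- together with the fact that $\depth$ never increases along an edge --- to argue that the focussed chain from that companion up to $v$ cannot leave $S$ without first passing through a \RuU-step clearing all focus; hence $v$ itself lies in $S$. I expect this last implication --- that a chain carrying a formula in focus cannot escape the proper cluster of its originating companion except through a fully unfocussing \RuU --- to be the genuinely delicate point; everything else is routine inspection of the proof rules and the definitions.
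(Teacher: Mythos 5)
The paper offers no explicit proof of this proposition; your derivation from facts (a) and (b), both stated informally in the paragraph preceding the definition of minimally focussed proofs, is exactly the intended argument, and the two-sided contraposition is correct and immediate once those facts are granted. The delicate step you isolate in justifying (b) does go through, but you need a slightly sharper statement than ``depth never increases along an edge'': leaving a proper cluster $S$ in $\cyclicPT$ \emph{strictly} decreases depth. (If $u$ is reachable from some $v \in S$ with $u \notin S$ and $u$ could still reach a cluster $S'$ with $\depth(S') = \depth(S)$, then either $S' = S$, making $u$ strongly connected to $v$ and hence in $S$, or $S \sccr S'$, forcing $\depth(S) > \depth(S')$; both are contradictions.) With strict decrease, the first node at which the focussed chain leaves $S$ satisfies the hypothesis of minimal-focussing condition~2 and is therefore a \RuU-node whose premiss clears all focus, terminating the chain.

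One small caveat worth recording: conditions 1--4 in the formal definition of minimally focussed do not by themselves exclude a trivial-cluster root carrying a formula in focus --- the one-node proof $p^f, \mybar{p}^u$ satisfies all four vacuously, yet violates fact (b) and hence this proposition. So your appeal to ``in a minimally focussed proof the root carries a focussed formula only if it lies in a proper cluster'' is really importing the additional invariant that the paper's transformation lemma actually establishes (all annotations $u$ at trivial-cluster nodes other than \RuU-conclusions), not a consequence of conditions 1--4 alone. This is a looseness in the paper's definition rather than a flaw in your argument, but it is the one place where the proof as sketched leans on something not literally written down.
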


\section{Elimination of important cuts}\label{sec.importantCuts}

In this section we develop the required technical machinery to eliminate important cuts. In particular, we will be able to prove the following key lemma.

\begin{lemma}[Main Lemma]\label{lem.cutsImportant}
	Let $\pi$ be a \Focus proof of the form
	\begin{align*}
		\begin{prooftree}
			\hypo{\hat{\pi}}
			\infer[no rule]1{\Sigma_0, \phi^u}
			\hypo{\hat{\tau}}
			\infer[no rule]1{\mybar{\phi}^u, \Sigma_1}
			\infer2[\RuCut]{\Sigma_0, \Sigma_1 }
		\end{prooftree}
	\end{align*}	
	where $\hat{\pi}$ and $\hat{\tau}$ are cut-free and contraction-free and $\phi$ is a $\mu$-formula. Then we can construct a \Focus proof $\pi'$ of $\Sigma_0,\Sigma _1$ with cut-rank $< \rank(\phi)$.
\end{lemma}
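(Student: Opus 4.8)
The plan is to eliminate all cuts of rank $\rank(\phi)$ in one stroke, by a global \emph{merge} of $\hat\pi$ and $\hat\tau$ driven by the left proof, leaving behind only cuts of strictly smaller rank. The key preliminary observation is about focus: every $\psi$ with $\psi\equic\phi$ lies in $\Clos(\phi)$ and, being magenta (as $\phi$ is a $\mu$-formula), is not navy by Proposition~\ref{prop.aFree}. Since $\pi$ is minimally focussed only navy formulas carry focus, so every descendant of $\phi$ in $\hat\pi$ of rank $\rank(\phi)$ --- equivalently, every descendant that is $\equic\phi$ --- is out of focus, and is therefore never principal in a \RuNu\ rule (only in \RuMu, \RuOr, \RuAnd, \RuBox\ or \RuWeak). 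Dually, $\mybar{\phi}$ is a $\nu$-formula, the rank-$\rank(\phi)$ descendants of $\mybar{\phi}$ in $\hat\tau$ are exactly the formulas that may carry focus, and are never principal in a \RuMu.

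\emph{The merge via a multicut.} Since disjunctions and conjunctions may occur $\equic$-equivalently to $\phi$, decomposing the cut formula splits it into several formulas; I therefore work with a \emph{multicut}, joining a left derivation of $\Sigma,\Delta^u$ to right derivations $\tau_\delta$ of $\mybar{\delta}^u,\Gamma_\delta$ for $\delta\in\Delta$ and concluding $\Sigma,\Gamma_{\delta_1},\dots,\Gamma_{\delta_n}$, where $\Delta$ is a multiset of out-of-focus formulas all $\equic\phi$. I define an operation $\merge{\pi_l}{\Delta}{\vec\tau}$ on cut-free, contraction-free $\pi_l$ and $\vec\tau$ returning a \Focus\ derivation of $\Sigma,\Gamma_{\delta_1},\dots,\Gamma_{\delta_n}$ of cut-rank $<\rank(\phi)$, by corecursion following the rules of $\pi_l$ along $\cyclicPT$: a rule not principal on any $\delta_i$ is reproduced, passing the $\tau_i$'s up to its premiss(es) (duplicating them at \RuAnd); a rule principal on some $\delta_i$ triggers the matching principal reduction (left \RuMu\ with right \RuNu, \RuOr\ with \RuAnd, \RuAnd\ with \RuOr, and \RuBox\ with \RuBox) \emph{after} first transforming the relevant right derivations so that $\mybar{\delta_i}$ becomes principal --- a nested recursion on the $\tau_i$'s which in the modal case is mutual, since a \RuBox\ in $\pi_l$ forces a synchronous \RuBox\ in the $\tau_i$'s. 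Each reduction replaces $\delta_i$ by its $\tracestep$-successors: those still $\equic\phi$ stay in $\Delta$, while those of strictly smaller rank are discharged by genuine cuts of rank $<\rank(\phi)$ whose right premisses are grafted subderivations of the $\tau_i$'s.

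\emph{Cycling, termination, soundness.} Throughout the construction $\Delta$ is a sub-multiset of the current sequent of $\pi_l$ and $\vec\tau$ a tuple of subderivations of $\hat\tau$; as $\hat\pi$ and $\hat\tau$ are finite proofs their sequents have bounded size, so only finitely many configurations occur along any branch. Whenever a configuration repeats I close the branch with a \RuDischarge\ rule pointing to the earlier occurrence, obtaining a \emph{finite} cyclic derivation $\pi'$; since every loop of $\hat\pi$ and of $\hat\tau$ passes through a \RuBox\ (guardedness), the construction is productive and the repeat paths so created are genuine. Because the cut formulas stay out of focus on the left, the focus annotations of $\pi'$ are inherited verbatim from those of $\hat\pi$ (and, inside the grafted pieces, from those of $\hat\tau$); hence every successful repeat path of $\hat\pi$, resp.\ of $\hat\tau$, projects to a successful repeat path of $\pi'$ --- it still carries a focussed formula at every sequent, contains no \RuF, and passes a \RuBox. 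Instantiating with $\pi_l=\hat\pi$, $\Delta=\{\phi\}$, $\tau_1=\hat\tau$ then yields the required $\pi'$.

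\emph{Main obstacle.} The crux is to show that the merge is well-defined and productive even though the multicut may \emph{grow} in size: one must arrange the (mutually recursive) transformation of the right derivations at modal steps so that between any two successive configuration-repeats a \RuBox\ is actually performed, and must organise the companion/discharge bookkeeping so that configurations do repeat despite the growth --- which is precisely where the finiteness of the cyclic source proofs, guardedness, and the assumed contraction-freeness (preventing uncontrolled duplication of cut formulas) enter.
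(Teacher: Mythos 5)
Your central observations are the right ones -- $\phi$ is a $\mu$-formula, so its $\equic$-equivalents are all magenta and hence never in focus in a minimally focussed proof; this is exactly the paper's starting point, and the intent to merge via a multicut driven by pushing principal reductions is on target. However, there are two genuine gaps.

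\textbf{The multicut shape is too restrictive.} Your multicut is star-shaped: a single left derivation $\pi_l$ of $\Sigma,\Delta^u$ feeding one right derivation $\tau_\delta$ per $\delta\in\Delta$. This breaks down at the principal $\RuAnd$/$\RuOr$ reduction (\RuAnd\ on the left, \RuOr\ on the right). If $\delta_i=\psi_0\land\psi_1$ is principal in an $\RuAnd$, the left derivation \emph{splits} into two premisses, one for each conjunct, while the corresponding right derivation applies $\RuOr$ once and stays a single proof now carrying both $\mybar{\psi_0}$ and $\mybar{\psi_1}$. In the alternation-free $\mu$-calculus both $\psi_0$ and $\psi_1$ may well be $\equic\phi$ (e.g. $\phi=\mu x.(\ldia x\land\lbox x)$, whose unfolding has two conjuncts each $\equic\phi$), so neither residual cut can be pushed down to lower rank. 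After this reduction you have \emph{two} left premisses cut-connected to a \emph{shared} right premiss; this is a chain, not a star, and cannot be expressed in the form $\merge{\pi_l}{\Delta}{\vec\tau}$. The paper's multicut (Definition~\ref{def:multi}) allows arbitrarily many proofs on \emph{both} sides, tied together by an acyclic connected cut-connection graph, precisely to accommodate this case. Once the left side has several proofs, you also have to decide which left proofs contribute focus annotations; the paper resolves this with the depth-sensitive policy in Definition~\ref{def.traversedConstruction} (keep focus only for left proofs of maximal depth, unfocus everything from the right side), which is what eventually stops fresh $\RuF$ rules from appearing on repeat paths. Your ``inherit annotations verbatim from $\hat\pi$ and, inside grafted pieces, from $\hat\tau$'' does not address this because your setup never arrives at multiple left proofs.

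\textbf{Termination is not argued, only flagged.} You correctly identify that the multicut may grow, that contraction-freeness matters, and that one must show a $\RuBox$ fires between configuration repeats -- but that \emph{is} the lemma, and asserting ``as $\hat\pi$ and $\hat\tau$ are finite ... only finitely many configurations occur along any branch'' is not true as stated: the multiset $\Delta$ can a priori grow without bound under repeated $\RuOr$-reductions, so the merged sequents need not range over a finite set. The paper's termination proof (Lemma~\ref{lem.traversedTerminates}) first bounds the number of proofs in a multicut between consecutive modal steps by $s\cdot 2^{M+1}$ using contraction-freeness (Claim 1), then bounds the number of modal nodes needed before a repeat is forced (Claims 2--3), and finally layers an induction on the depth $\depth(v)$ that is coupled to the focus policy above. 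Without this -- or some replacement -- the proposal remains a plan rather than a proof.
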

\noindent
We will obtain the proof of Lemma \ref{lem.cutsImportant} by the following approach:
	\begin{enumerate}
		\item In Subsection \ref{subsec.traversedProofs} we introduce \emph{traversed proofs}, these will be the intermediate objects in the elimination of important cuts.
		\item We proceed with defining a traversed proof $\rho_I$ from $\pi$ in Definition \ref{def.traversedInitial}.
		\item Then we define a construction transforming traversed proofs, that stops if a proof of lower cut-rank is obtained. [Definition \ref{def.traversedConstruction}]
		\item Finally, in Subsection \ref{subsec.importantTermination} we prove that this construction applied to $\rho_I$ terminates, meaning that it produces a \Focus proof $\pi'$ of cut-rank $<\rank(\phi)$. 
	\end{enumerate}

\subsection{Traversed proofs}\label{subsec.traversedProofs}

We will utilize a \emph{multicut rule} -- a derivable generalisation of the ordinary cut rule -- to avoid the nuisance of cut-reductions with a cut rule, which might lead to commuting cut rules without any progress. This is a common way to deal with this technicality, see for instance \cite{Fortier2013}. The multicut compresses several cut rules to one rule with multiple premisses. For example the following proof would be expressed by a multicut as follows:

\begin{align*}
	\begin{minipage}{0.35\textwidth}
		\begin{prooftree}
			\hypo{\Sigma, \phi}
			\hypo{\mybar{\phi},\psi,\Sigma,}
			\infer2[\RuCut]{\psi,\Sigma}
			\hypo{\mybar{\psi}, \Sigma}
			\infer2[\RuCut]{\Sigma}
		\end{prooftree}
	\end{minipage}
	\longrightarrow \qquad
	\begin{minipage}{0.4\textwidth}
		\begin{prooftree}
			\hypo{\Sigma, \phi}
			\hypo{\mybar{\phi},\psi,\Sigma,}
			\hypo{\mybar{\psi}, \Sigma}
			\infer3[\RuCutMulti]{\Sigma}
		\end{prooftree}
	\end{minipage}
\end{align*}

In the multicut rule we employ, annotations might vary between premisses of the rule and its conclusion. The distinctness in our setting is that we have to keep track of the already constructed part of the proof below the multicuts -- there we find successful repeats. We thus define \emph{traversed proofs}: proofs that are traversed by mutlicuts, meaning that on every branch of the proof there is at most one multicut. These will be our central technical objects in the elimination of important cuts.

A \emph{coloured graph} is a graph $(G,E)$, where every edge $e \in E$ is labelled with a colour $c$. We write $E_c(v,w)$ if there is an edge between $v$ and $w$ labelled with $c$.
\begin{definition}[Multicut]\label{def:multi}
	A \emph{multicut} $\calM = (\Pi, \Psi, \Tau, \sfG)$ is a quadruple such that $\Pi = \pi_1,...,\pi_m$ and $\Tau = \tau_1,...,\tau_n$ are multisets of \Focus proofs; $\Psi= \psi_1,...,\psi_k$ is a multiset of formulas; and $\sfG$ is an undirected coloured graph with nodes $\Pi \cup \Tau$ and edges coloured by formulas in $\Psi$; where $\Psi$ has two decompositions in multisets $\Psi = \Psi_1,...,\Psi_m$ and $\Psi = \Phi_1,...,\Phi_n$ such that the following conditions are satisfied:
	\begin{enumerate}
		\item $\pi_i$ is a proof of $\Gamma_i,\Psi_i^u$ for $i = 1,...,m$, 
		\item $\tau_j$ is a proof of $\Delta_j, \mybar{\Phi_j'}$, where $(\Phi_j')^- = \Phi_j$ for $j = 1,...,n$ and
		\item for a node $v$, the graph $\calG_v$ is \emph{connected}, \emph{acyclic} and
		each $\psi \in \Psi$ is associated a unique edge $E_\psi(\pi_i,\tau_j)$ for some $i = 1,...,m$ and $j= 1,...,n$ such that $\psi \in \Psi_i$ and $\psi \in \Phi_j$.
	\end{enumerate}
	The sequent $\Gamma_1,...,\Gamma_m,\Delta_1,...,\Delta_n$ is called the \emph{conclusion of} $\calM$.
\end{definition}

We call $\sfG$ the \emph{cut-connection graph of $\calM$} and call $\pi$ and $\tau$ \emph{cut-connected via $\psi$} if $E_{\psi}(\pi,\tau)$. An edge $E_{\psi}(\pi_i,\tau_j)$ corresponds to a cut with cut-formula $\psi$ and premisses $\pi$ and $\tau$.
If no confusion arises we will denote a multicut $\calM = (\Pi, \Psi, \Tau, \sfG)$ by $\merge{\Pi}{\Psi}{\Tau}$ and treat the cut-connection graph $\sfG$ implicitly.
If $m,n$ and $k$ denote the sizes of $\Pi$, $\Tau$ and $\Psi$, respectively, than the cut-connection graph $\calG$ consists of $m+n$ nodes and $k$ edges. As $\calG$ is connected and acyclic it holds that $m+n = k +1$. 

We can now define a proof-like object built around the multicut's structure, a \emph{formula-traversed proof}. Fix a formula $\phi$. Intuitively a $\phi$-traversed proof is a derivation (not necessarily a proof) with undischarged leaves of a special form: all non-axiomatic leaves in a $\phi$-traversed proof are conclusions of some multicut rule with cut-formulas from $\Clos(\phi)$.
\begin{definition}[Traversed proof]\label{def.traversed}
	A \emph{$\phi$-traversed proof} $\rho$ of a sequent $\Sigma$ is a finite derivation of $\Sigma$, where all leaves $v$ are either closed or \emph{traversed leaves}, meaning that they are labelled with a sequent $\sfS_v$ together with a multicut $\calM_v = (\Pi, \Psi, \Tau, \sfG)$ where $\Psi \subseteq \Clos(\phi)$; and, if $\Gamma$ is the conclusion of $\calM_v$, then $\sfS_v^- = \Gamma^-$.

	If $\phi$ is clear from the context we will just write \emph{traversed proof}.
	We define the \emph{depth} of a traversed leaf $v$ as $\depth(v) = \max\{\depth(\pi) \| \pi \in \Pi\}$. Note that we only consider proofs in $\Pi$ in this definition and not the ones in $\Tau$.
	
	\smallskip\noindent
	A traversed leaf is called \emph{tidy} if 
	\begin{enumerate}
		\item $\Psi \neq \emptyset$ and
		\item $\phi \equic \psi$ for all $\psi \in \Psi$.
	\end{enumerate}
	Given the notation of  Definition \ref{def:multi}, it follows that for any tidy traversed leaf $\Psi_i \neq \emptyset$  and
	$\Phi_j \neq \emptyset$  for all indices $i,j$.
	A traversed proof is called \emph{tidy} if all its traversed leaves are tidy. 
\end{definition}

We will denote a traversed leaf $v$ labelled with a sequent $\Sigma$ and a multicut $\calM_v = (\Pi, \Psi, \Tau, \sfG)$ by 
\[
\begin{prooftree}
	\hypo{\calM_v}
	\infer[no rule]1{\Sigma}
\end{prooftree}\]
and, if we do not want to deal with the cut-connection graph explicitly, by
\[
\begin{prooftree}
	\hypo{\merge{\Pi}{\Psi}{\Tau}}
	\infer[no rule]1{\Sigma}
\end{prooftree}\]

Ignoring the annotations for a moment, the traversed leaves of a traversed proof can also be interpreted as a multicut rule of the form 
\begin{align*}
	\begin{prooftree}
		\hypo{\pi_1}
		\infer[no rule]1{\Gamma_1, \Psi_1}
		\hypo{\cdots}
		\hypo{\pi_m}
		\infer[no rule]1{\Gamma_m, \Psi_m}
		\hypo{\tau_1}
		\infer[no rule]1{\Delta_1,\mybar{\Phi_1}}
		\hypo{\cdots}
		\hypo{\tau_n}
		\infer[no rule]1{\Delta_n,\mybar{\Phi_n}}
		\infer6[\RuCutMulti]{\Gamma_1, ...,\Gamma_m,\Delta_1,...,\Delta_n}
	\end{prooftree}
\end{align*}
Additionally, in premisses of the multicut, formulas might be in focus that are out of focus in its conclusion.
In this sense, every tidy $\phi$-traversed proof $\rho$ corresponds to a \Focus proof $\pi$, where on every branch of the proof there is at most one multicut of rank $\rank(\phi)$. Hence, transforming a $\phi$-traversed proof to a traversed proof without traversed leaves corresponds to eliminating multicuts of rank $\rank(\phi)$.

Given a multicut $\calM$, we need an operation that removes an edge labelled with $\psi$ from the cut-connection graph. This might be necessary because a cut of lower rank in the proof is applied or one of the cut formulas is weakened. The multicut $\del{\calM}{\pi}{\psi}$ then consists of the remaining nodes connected to $\pi$.
\begin{definition}
	Let $\calM = (\Pi, \Psi, \Tau, \sfG)$ be a multicut, $\pi \in \Pi$, $\psi \in \Psi$, $\tau \in \Tau$ and let $E_{\psi}(\pi,\tau)$ be an edge in $\sfG$. 
	We define $\del{\calM}{\pi}{\psi}$ to be the multicut $(\Pi', \Psi', \Tau', \sfG')$ obtained as follows: Remove $E_{\psi}(\pi,\tau)$ from $\sfG$ and let $\sfG'$ be the subgraph of $\sfG$ of nodes connected to $\pi$. Let $\Pi' \cup \Tau'$ be the multiset of nodes of $\sfG'$ such that $\Pi'\subseteq \Pi$ and $\Tau'\subseteq \Tau$ and let $\Psi'\subseteq \Psi$ be the multiset of colours of edges occurring in $\sfG'$. Note that $\psi$ is in the conclusion of $\del{\calM}{\pi}{\psi}$.
	
	The multicut $\del{\calM}{\tau}{\psi}$ is defined analogously replacing $\pi$ by $\tau$.	
\end{definition}

\begin{lemma}\label{lem.traversedTidy}
	Let $\rho$ be a $\phi$-traversed proof with cut-rank $< \rank(\phi)$. Then $\rho$ can be transformed to a tidy $\phi$-traversed proof $\rho'$ with cut-rank $< \rank(\phi)$ without introducing extra \RuF rules.
\end{lemma}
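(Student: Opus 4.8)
The plan is to transform $\rho$ into a tidy traversed proof by repairing each traversed leaf locally, i.e.\ working leaf-by-leaf. Fix a traversed leaf $v$ of $\rho$, labelled with sequent $\Sigma$ and multicut $\calM_v = (\Pi,\Psi,\Tau,\sfG)$ with conclusion $\Gamma$ satisfying $\Sigma^- = \Gamma^-$. Tidiness fails for one of two reasons: either $\Psi$ contains a formula $\psi$ with $\psi \not\equic \phi$, or $\Psi = \emptyset$. I would handle the first defect by showing that any such $\psi$ has $\rank(\psi) < \rank(\phi)$ — since $\psi \in \Clos(\phi)$ we have $\phi \trace \psi$, and $\psi \not\equic \phi$ means $\psi \not\trace \phi$, so by the defining property of $\rank$ indeed $\rank(\psi) < \rank(\phi)$. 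Thus the cut along the edge $E_\psi(\pi_i,\tau_j)$ is a cut of rank $< \rank(\phi)$, and so it may be ``absorbed'' into the body of the traversed proof: replace the single traversed leaf carrying $\calM_v$ by a genuine \RuCut application on $\psi$ whose premisses are themselves traversed leaves carrying the two sub-multicuts $\del{\calM_v}{\pi_i}{\psi}$ and $\del{\calM_v}{\tau_j}{\psi}$ (with $\psi$, resp.\ $\mybar\psi$, now occurring in their conclusions, as recorded in the definition of $\del{\cdot}{\cdot}{\cdot}$). Since $\rho$ already has cut-rank $< \rank(\phi)$, this added \RuCut does not raise the cut-rank. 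Iterating this over all edges with a non-$\equic$-$\phi$ colour, which strictly decreases the number of edges in the cut-connection graph, terminates and leaves every remaining traversed leaf with $\psi \equic \phi$ for all its cut formulas.

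It remains to deal with traversed leaves where, after the above pruning, $\Psi = \emptyset$. By connectedness and acyclicity of $\sfG_v$ and the identity $m+n = k+1$, an empty $\Psi$ forces the cut-connection graph to be a single node: either $\Pi = \{\pi\}$, $\Tau = \emptyset$, or $\Pi = \emptyset$, $\Tau = \{\tau\}$. In the first case $\pi$ is a \Focus proof of $\Gamma^u = \Gamma$ (up to annotations), so I can simply splice $\pi$ in place of the traversed leaf: since $\Sigma^- = \Gamma^-$ agrees with $\pi$'s endsequent up to annotations, I prepend a block of \RuWeak,\RuContr,\RuU rules (available because $\Sigma \seteq \Gamma$ and we only need to adjust annotations downwards) to recover exactly $\Sigma$ at the leaf's old position — crucially this uses no new \RuF. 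The case $\Tau = \{\tau\}$ is symmetric, noting $\tau$ proves $\Delta,\mybar{\Phi'}$ with $\Phi = \emptyset$, hence proves $\Delta$, again matching $\Sigma^-$ up to annotations. After both steps, every remaining traversed leaf has $\Psi \neq \emptyset$ and all its cut formulas $\equic \phi$, so the resulting derivation $\rho'$ is a tidy $\phi$-traversed proof; its cut-rank is still $< \rank(\phi)$ and no \RuF rules were introduced.

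The main obstacle I anticipate is bookkeeping around annotations and the ``up to annotations'' matches: when I splice a genuine proof $\pi$ (proving $\Gamma$ with some formulas possibly in focus) underneath a position that in $\rho$ carried the sequent $\Sigma$ (with its own annotation pattern), I must check that the gap can always be bridged purely by weakening, contraction and \emph{un}focusing, never by focusing — otherwise the ``without introducing extra \RuF rules'' clause fails, and that clause is exactly what later guarantees successful repeats are preserved. The definition of traversed leaf only constrains $\sfS_v^- = \Gamma^-$, so in principle $\Sigma$ could demand a focus annotation that $\Gamma$ lacks; I would argue this does not happen because in the construction producing $\rho$ (Definitions \ref{def.traversedInitial}, \ref{def.traversedConstruction}) the conclusion sequent of a traversed leaf always carries annotations that are $\leq$ (in the focus order, $f \leq u$) those of the multicut's conclusion, so only \RuU, \RuWeak, \RuContr are ever needed. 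A secondary, more routine point is verifying that after deleting an edge the two pieces $\del{\calM_v}{\pi_i}{\psi}$ and $\del{\calM_v}{\tau_j}{\psi}$ are again well-formed multicuts (connected, acyclic, with the two required decompositions of the colour multiset) and that their conclusions, together with the cut formula $\psi$, reassemble to $\Gamma$ — this is immediate from the definitions but should be spelled out.
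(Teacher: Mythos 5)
Your proposal is correct and follows essentially the same route as the paper: absorb any edge coloured by some $\psi \not\equic \phi$ as a genuine \RuCut on $\psi$ of rank $< \rank(\phi)$, splitting $\calM_v$ into $\del{\calM_v}{\pi_i}{\psi}$ and $\del{\calM_v}{\tau_j}{\psi}$, and splice in the unique remaining proof when $\Psi = \emptyset$ (forced to be a singleton since $m+n=k+1$). Your explicit attention to the annotation bookkeeping at the splice point is a welcome refinement the paper leaves implicit; note only that the definition of a traversed leaf guarantees $\sfS_v^- = \Gamma^-$ as an equality of underlying \emph{multisets}, not merely of sets, so the bridge never needs weakening or contraction, only focus-rule adjustments — and your argument that these are always \RuU rather than \RuF is exactly what is required by the lemma's side condition.
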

\begin{proof}
	Let $v$ be a traversed leaf in $\rho$ labelled with a sequent $\calS$ and a multicut $\calM_v$ that is not tidy. 
	If $\Psi$ is empty, then we replace $v$ by $\pi_1$ if $m = 1$ or by $\tau_1$ if $n=1$.

	If $\Psi = \Psi',\psi$ and $\psi \not\equic \phi$, then let $\pi$ and $\tau$ be proofs that are cut-connected via $\psi$. Let $\calS^l,\psi$ be the conclusion of $\del{\calM}{\pi}{\psi}$ and $\calS^r,\mybar{\psi}$ be the conclusion of $\del{\calM}{\tau}{\psi}$. Then we replace $v$ by
	\begin{align*}
		\begin{prooftree}
			\hypo{\del{\calM}{\pi}{\psi}}
			\infer[no rule]1{\calS^l, \psi}
			\hypo{\del{\calM}{\tau}{\psi}}
			\infer[no rule]1{\calS^r, \mybar{\psi}^u}
			\infer[]2[\RuCut]{\calS}
		\end{prooftree}
	\end{align*}
	As $\psi \not\equic \phi$, this cut has rank lower than $\rank(\phi)$ and we obtain a $\phi$-traversed proof with cut-rank $< \rank(\phi)$.
\end{proof}

\subsection{Proof transformations}\label{subsec.constrImportant}

\begin{definition}\label{def.traversedInitial}
Let $\pi$ be a \Focus proof as given in Lemma \ref{lem.cutsImportant}. We define the \emph{initial traversed proof} $\rho_I$ to be the $\phi$-traversed proof of $\Sigma_0,\Sigma _1$ consisting of a traversed leaf labelled with $\Sigma_0,\Sigma _1$ together with $\merge{\hat{\pi}}{\phi}{\hat{\tau}}$, this we denote by
\begin{align*}
	\begin{prooftree}
		\hypo{\merge{\hat{\pi}}{\phi}{\hat{\tau}}}
		\infer[no rule]1{\Sigma_0,\Sigma_1}
	\end{prooftree}
\end{align*}
\end{definition}

The high level strategy to transform $\rho_I$ to a traversed proof without traversed leaves is as follows: We start by pushing up traversed leaves and unfold proofs whenever a companion node is reached. This is done similarly as one would push up multicuts. 
We continue pushing up the traversed leaves in the traversed proof until we find successful repeats below traversed leaves. This check will be done whenever a modal rule gets introduced.

In order to guarantee that we find such a successful repeat we have to be very careful about which formulas we put in focus. Let $v$ be a traversed leaf labelled with 
\[
\begin{prooftree}
	\hypo{\merge{\Pi}{\Psi}{\Tau}}
	\infer[no rule]1{\Gamma_1,...,\Gamma_m, \Delta_1,...,\Delta_n}
\end{prooftree}
\]
We have to decide on which formulas in $\Gamma_1,...,\Gamma_m, \Delta_1,...,\Delta_n$ we keep the annotations as in the proofs in $\Pi$ and $\Tau$. Our strategy is as follows: All formulas in $\Delta_1,...,\Delta_n$ will always be unfocussed; Formulas in $\Gamma_i$ keep the same annotation as in $\pi_i$ if $\depth(\pi_i) = \depth(v)$ and will be unfocussed otherwise for $i = 1,...,m$. Recall that $\depth(v)$ is the maximal depth of the proofs $\pi_1,...,\pi_m$.

The reason for this asymmetry stems from the following observation: The formula $\phi$ is a $\mu$-formula, therefore all formulas in $\Psi$ are magenta and all formulas in $\mybar{\Psi}$ are navy. As $\pi$ is minimally focussed only navy formulas are in focus. This means that in the proofs $\pi_1,...,\pi_m \in \Pi$ formulas from $\Psi$ are out of focus, whereas in the proofs $\tau_1,...,\tau_n \in \Tau$ formulas from $\mybar{\Psi}$ might be in focus. By deleting the formulas from $\mybar{\Psi}$ in the proofs $\tau_1,..,\tau_n$ we can not ensure that successful paths are still successful. Deleting formulas from $\Psi$ in the proofs $\pi_1,...,\pi_m$ on the other hand never removes formulas in focus. 

We thus only keep annotations on formulas originating from the proofs $\pi_1,...,\pi_m$. If we keep the annotations from all those proofs this could also lead to trouble -- we also add applications of \RuF potentially destroying the success-condition on paths. We therefore opt to only keep annotations coming from those proofs in $\pi_1,...,\pi_m$ of maximal depth. This guarantees that at some point no \RuF rules are applied anymore. In the case that all formulas become out of focus, this also ensures that $\depth(v)$ got reduced and hence we can employ induction on $\depth(v)$ in our termination argument.

In the next definition we will give a formal description of these intuitions.


\begin{definition}\label{def.traversedConstruction}
We define the \emph{\traversedTrans}; it transforms a traversed proof with traversed leaves preserving the cut-rank.

Let $\rho$ be a $\phi$-traversed proof.  We may always assume that $\rho$ is tidy (see Lemma \ref{lem.traversedTidy}). If all leaves are closed we are done. Otherwise consider the leftmost traversed leaf $v$ labelled with 
\begin{align*}
	\begin{prooftree}
		\hypo{\merge{\Pi}{\Psi}{\Tau}}
		\infer[no rule]1{\Gamma_1,...,\Gamma_m, \Delta_1,...,\Delta_n}
	\end{prooftree}
\end{align*}
We transform $\rho$ by a case distinction on the last applied rules in $\Pi$ and $\Tau$. 

\begin{itemize}
	\item \textbf{\RuBox rule.} If the last applied rule is \RuBox in $\pi_i$  for all $i = 1,...,m$ and in $\tau_j$ for all $j = 1,...,n$, we make the following case distinction.
\begin{itemize}
	\item If there is a node $c$ in $\rho$, that is an ancestor of $v$, such that $\sfS_c =_{Set} \Gamma_1,...,\Gamma_m, \Delta_1,...,\Delta_m$ and such that the  path from $c$ to $v$ is successful, then insert a \RuDischarge[\dx] rule at $c$ and replace  $v$ by
	\[
	\begin{prooftree}
		\hypo{[\sfS_c]^{\dx}}
		\infer[]1[\RuWeak, \RuContr]{\Gamma_1,...,\Gamma_m, \Delta_1,...,\Delta_n}
	\end{prooftree}
	\] with fresh discharge token $\dx$. If there is such an ancestor that is already labelled with $\RuDischarge[\dy]$, then let the new leaf be discharged by $\dy$ and do not insert an extra \RuDischarge[\dx] rule.
	\item  Else we apply a \RuBox rule in $\rho$ and delete the rule in all proofs $\pi_i$ and $\tau_j$. This is always possible: Because $v$ is tidy there is one less cut-formula in $\Psi$ than proofs in $\Pi, \Tau$. Every formula $\psi$ in $\Psi$ is modal, thus either $\psi$ or $\mybar{\psi}$ is of the form $\lbox \chi$. Therefore there is exactly one formula of the form $\lbox \chi$ in $\Gamma_1,...,\Gamma_m,\Delta_1,...,\Delta_n$ and the rule $\RuBox$ is applicable.
\end{itemize}
\end{itemize}
Else we pick $i \in \{1,...,m\}$ or $j \in \{1,...n\}$ and reduce $\pi_i$ or $\tau_j$. We let $\Pi = \Pi',\pi_i$ and $\Tau = \Tau',\tau_j$.
\begin{itemize}
	\item \textbf{\RuDischarge rule.} If there is an $i$ such that the last applied rule in $\pi_i$ is \RuDischarge, then $\pi_i$ has the form
\begin{align*}
	\begin{prooftree}
		\hypo{\pi_i'}
		\infer[no rule]1{\Gamma_i, \Psi_i^u}
		\infer1[\RuDischarge]{\Gamma_i, \Psi_i^u}
	\end{prooftree}
\end{align*}

We unfold $\pi_i$, meaning that we let $\tilde{\pi}_i$ be the proof obtained from $\pi_i'$ by replacing every discharged leaf labelled with $\dx$ by $\pi_i$. \footnote{Discharge tokens $\dy$ are replaced by fresh discharge tokens, whenever a \RuDischarge[\dy] rule is duplicated.}

We replace $v$ by
\begin{align*}
	\begin{prooftree}
		\hypo{\merge{\Pi',\tilde{\pi}_i}{\Psi}{\Tau}}
		\infer[no rule]1{\Gamma_1,...,\Gamma_m, \Delta_1,...,\Delta_n}
	\end{prooftree}
\end{align*}
Analogously if there is a $j$ such that the last applied rule in $\tau_j$ is \RuDischarge.

\item \textbf{\RuF rule in $\Pi$.} If there is an $i$ such that the last applied rule in $\pi_i$ is \RuF, then $\pi_i$ has the form
\begin{align*}
	\begin{prooftree}
		\hypo{\pi_i'}
		\infer[no rule]1{\Gamma_i', \Psi_i^u}
		\infer1[\RuF]{\Gamma_i, \Psi_i^u}
	\end{prooftree}
\end{align*}
Note that all formulas in $\Psi_i$ are magenta, thus due to Proposition \ref{prop.aFree} no formula in $\Psi_i$ is navy. As $\pi$ is minimally focussed it follows that no formula in $\Psi_i$ is put in focus in \RuF. 
We make a case distinction:
\begin{itemize}
	\item If $\depth(\pi_i) = \depth(v)$, then replace $v$ by
	\begin{align*}
		\begin{prooftree}
			\hypo{\merge{\Pi',{\pi}_i'}{\Psi}{\Tau}}
			\infer[no rule]1{\Gamma_1,...,\Gamma_i',...,\Gamma_m, \Delta_1,...,\Delta_n}
			\infer1[\RuF]{\Gamma_1,...,\Gamma_i,...,\Gamma_m, \Delta_1,...,\Delta_n}
		\end{prooftree}
	\end{align*}
	
	\item Otherwise replace $\pi_i$ by $\pi_i'$ without applying a \RuF rule.
\end{itemize}

\item \textbf{\RuU rule in $\Pi$.} If there is an $i$ such that the last applied rule in $\pi_i$ is \RuU, then $\pi_i$ has the form
\begin{align*}
	\begin{prooftree}
		\hypo{\pi_i'}
		\infer[no rule]1{\Gamma_i^u, \Psi_i^u}
		\infer1[\RuU]{\Gamma_i', \Psi_i^u}
	\end{prooftree}
\end{align*}
We make a case distinction:
\begin{itemize}
	\item If there are formulas in focus in $\Gamma_i$, then replace $v$ by
	\begin{align*}
		\begin{prooftree}
			\hypo{\merge{\Pi',{\pi}_i'}{\Psi}{\Tau}}
			\infer[no rule]1{\Gamma_1,...,\Gamma_i^u,...,\Gamma_m, \Delta_1,...,\Delta_n}
			\infer1[\RuU]{\Gamma_1,...,\Gamma_i,...,\Gamma_m, \Delta_1,...,\Delta_n}
		\end{prooftree}
	\end{align*}
	\item Otherwise replace $\pi_i$ by $\pi_i'$ without applying an \RuU rule.
	
\end{itemize}

\item \textbf{\RuF rule or \RuU rule in $\Tau$.} If there is a $j$ such that the last applied rule \Ru in $\tau_j$ is \RuF or \RuU, then $\tau_j$ has the form
\begin{align*}
	\begin{prooftree}
		\hypo{\tau_j'}
		\infer[no rule]1{\Delta_j', \mybar{\Phi_j}^b}
		\infer1[\Ru]{\Delta_j, \mybar{\Phi_j}^a}
	\end{prooftree}
\end{align*}
We replace $\pi_i$ by $\pi_i'$.

\item \textbf{Other cases.} In the rest of the cases we push the traversed leaf upwards. Those transformations resemble the expected cut reductions for the multicut rule. As they are standard we leave those cases out here -- they can be found in Appendix \ref{app.constrImportant}. Note that we assumed that $\pi$ does not contain contractions and therefore no proof in $\Pi$ or $\Tau$ contains contractions as well.
\end{itemize}
\end{definition}

\begin{remark}
	It may seem that the construction is formulated in a non-deterministic way, yet this is only superficially so. All choices can be made canonical, depending on an arbitrary but fixed order on proof rules and proofs in $\Pi \cup \Tau$. For example, we could give priority to cases where a formula $\psi \in \Psi$ is principal on both sides and take an arbitrary order on $\Pi \cup \Tau$, where proofs in $\Pi$ are of higher priority than proofs in $\Tau$. Importantly, the particular choice of orders does not matter in the termination proof.
\end{remark}

\subsection{Proof of termination}\label{subsec.importantTermination}

We prove that the \traversedTrans given in Subsection \ref{subsec.constrImportant} yields the desired proof. First we show that the transformation only terminates if a traversed proof without traversed leaves is reached. In Lemma \ref{lem.traversedTerminates} we then show that the algorithm terminates when applied to $\rho_I$. 

\begin{lemma}\label{lem.traversedApplicable}
	If $v$ is a tidy traversed leaf in a traversed proof $\rho$, then one of the cases in the case distinction in Definition \ref{def.traversedConstruction} is applicable.
\end{lemma}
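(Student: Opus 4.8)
The plan is to argue by inspecting the last rule applied in each of the proofs $\pi_1,\dots,\pi_m \in \Pi$ and $\tau_1,\dots,\tau_n \in \Tau$ attached to the tidy traversed leaf $v$, and to show that in every possible configuration at least one bullet of Definition~\ref{def.traversedConstruction} fires. First I would dispose of a degenerate point: since $v$ is a traversed \emph{leaf} it is not closed, so none of the $\pi_i$, $\tau_j$ is merely an axiom leaf that has already been absorbed; each carries a genuine last rule. Moreover, because $v$ is tidy we have $\Psi\neq\emptyset$ and $\Psi_i\neq\emptyset$, $\Phi_j\neq\emptyset$ for all $i,j$ (as recorded after Definition~\ref{def.traversed}), so every $\pi_i$ has at least one cut-formula from $\Psi$ in its end-sequent and likewise every $\tau_j$.

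The case analysis then goes as follows. If some $\pi_i$ ends with \RuDischarge, or \RuF, or \RuU, the corresponding bullet (``\RuDischarge rule'', ``\RuF rule in $\Pi$'', ``\RuU rule in $\Pi$'') applies directly; symmetrically for \RuDischarge in some $\tau_j$, and for \RuF/\RuU in some $\tau_j$ via the ``\RuF rule or \RuU rule in $\Tau$'' bullet. So assume no proof in $\Pi\cup\Tau$ ends with one of these structural/annotation rules. Then every $\pi_i$ and every $\tau_j$ ends with a rule from $\{\AxLit,\RuOr,\RuAnd,\RuMu,\RuNu,\RuBox,\RuWeak\}$ (recall $\hat\pi,\hat\tau$, hence all proofs in $\Pi\cup\Tau$, are contraction-free and cut-free, so \RuContr and \RuCut are excluded). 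If some proof ends with a rule whose principal formula lies in its context part $\Gamma_i$ (resp.\ $\Delta_j$) — i.e.\ not a cut-formula — or with \AxLit, then one of the ``other cases'' reductions in Appendix~\ref{app.constrImportant} pushes the traversed leaf upward through that rule; the same holds when a cut-formula is weakened, or when a $\psi\in\Psi$ is principal on one side only, which triggers the standard multicut-commutation/principal reductions. The remaining possibility is that for \emph{every} $i$ the last rule of $\pi_i$ is principal on a cut-formula $\psi\in\Psi_i$, and similarly for every $\tau_j$; since matched cut-formulas $\psi$ and $\mybar\psi$ have dual outermost connectives and $\Psi\subseteq\Clos(\phi)$ consists of modal formulas (as $\phi$ is here — wait, more precisely this final sub-configuration forces the principal rules to be the key reductions: $\RuBox$ against $\RuBox$, boolean against boolean, fixpoint against fixpoint), the principal-reduction bullet applies; and in the specific all-\RuBox situation the dedicated ``\RuBox rule'' bullet (either inserting a \RuDischarge repeat or performing the shared \RuBox) is exactly the one that fires, using that tidiness gives $m+n=k+1$ so there is precisely one unmatched box-formula in the conclusion.

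The concrete obstacle I anticipate is not conceptual but bookkeeping: one must check that the ``principal on a cut-formula on both sides for all proofs simultaneously'' configuration is genuinely covered, i.e.\ that the dual pair $\psi,\mybar\psi$ of some selected edge $E_\psi(\pi_i,\tau_j)$ really has matching principal rules ready to reduce — this uses that $\psi$ and $\mybar\psi$ have dual main connectives and that a fixpoint unfolding on one side is met by the dual unfolding on the other — and that when all cut-formulas are modal this degenerates exactly into the \RuBox/\RuBox clause, so there is no gap between the ``\RuBox rule'' bullet and the appendix reductions. I would therefore organise the proof as: (1) exclude \RuContr, \RuCut by the contraction-/cut-freeness hypothesis and note tidiness forces a cut-formula into each end-sequent; (2) if any proof ends with \RuDischarge/\RuF/\RuU, invoke the matching bullet; (3) otherwise, if any proof ends with a rule principal on a context formula or with \AxLit or weakening a cut-formula, invoke an appendix reduction; (4) in the complementary case every proof is principal on a cut-formula, and a short duality argument on outermost connectives shows a principal reduction (the \RuBox clause when the formula is modal, the boolean or fixpoint principal reduction otherwise) is available. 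This exhausts all configurations, so some case of Definition~\ref{def.traversedConstruction} always applies.
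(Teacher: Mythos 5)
The overall plan you lay out---dispatch structural rules first, then rules principal on context formulas, then rules principal on cut-formulas---matches the paper's organisation. Your handling of the easy cases (\RuDischarge, \RuF, \RuU in $\Pi$ or $\Tau$; non-principal rules; weakening of a cut-formula; all-\RuBox) is sound and aligns with the paper. The gap is in your step (4).

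In the residual configuration, every $\pi_i$ (resp.\ $\tau_j$) ends with a rule principal on \emph{some} cut-formula in $\Psi_i$ (resp.\ $\ol{\Phi_j}$), but $\Psi_i$ may contain several formulas, and the one on which $\pi_i$ is principal need not match the one on which its cut-connected partner $\tau_j$ is principal. Nothing in ``$\psi$ and $\ol{\psi}$ have dual outermost connectives'' tells you that there is an edge $E_\psi(\pi_i,\tau_j)$ for which $\psi$ is principal at the root of $\pi_i$ \emph{and} $\ol\psi$ is principal at the root of $\tau_j$; one could a priori have a chain $\pi_1\to\tau_1\to\pi_2\to\tau_2\to\cdots$ where each proof is principal on the ``wrong'' incident edge, so no single edge is doubly selected. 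Your appeal to a ``short duality argument'' is precisely the place where a real argument is needed, and you do not supply one.

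The paper closes this gap with a counting argument that exploits the structure of the cut-connection graph. After discarding proofs ending with \RuBox and edges labelled with modal cut-formulas, the remaining subgraph $\sfG'$ is a non-empty connected acyclic graph, hence a tree: it has $m'+n'$ nodes and $k'=m'+n'-1$ edges. Each node selects exactly one incident edge (labelled by the cut-formula on which its last rule is principal), so there are strictly more selections than edges; by pigeonhole some edge is selected by two nodes, which in a tree must be its two endpoints. That edge $E_\psi(\pi_i,\tau_j)$ is the matched pair you need. You do observe $m+n=k+1$ but you deploy it only to identify the single unmatched box-formula in the \RuBox bullet; the essential use of the identity (applied to the subgraph $\sfG'$, not the whole graph) to force a doubly-selected edge is missing from your argument. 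Without it, the claim that ``the principal-reduction bullet applies'' is unsupported, so the proof is incomplete.
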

\begin{proof}
	Let $v$ be labelled with $\merge{\Pi}{\Psi}{\Tau}$. If there is a proof in $\Pi \cup \Tau$, where the last applied rule is different from a rule with principal formula in $\Psi$ and different than \RuBox, then we can transform that proof. 
	Otherwise for all $i = 1,...,m$ the last applied rule in $\pi_i$ is either \RuBox or a rule with principal formula in $\Psi_i$ and analogously for all  $j = 1,...,n$ the last applied rule in $\tau_j$  is either \RuBox or a rule with principal formula in $\mybar{\Phi_j}$.
	If the last applied rule in all those proofs is \RuBox we are in the first case of Definition \ref{def.traversedConstruction}. 
	Else let $\Psi'$ be the non-empty subset of $\Psi$ consisting of all non-modal formulas in $\Psi$. Let $\Pi' \subseteq \Pi$ and $\Tau' \subseteq \Tau$ be the respective subset of proofs of $\Pi$ and $\Tau$, where the last applied rule is different than \RuBox. Let $\sfG'$ be the subgraph of the cut-connection graph $\sfG$ with nodes $\Pi'\cup \Tau'$ and edges labelled with formulas in $\Psi'$. Then $\sfG'$ is non-empty and acyclic. Moreover, we may assume that $\sfG'$ is connected, otherwise continue with a maximally connected subgraph of $\sfG'$. 
	Let $k'= |\Psi'|$, $m'= |\Pi'|$ and $n' = |\Tau'|$, then $m'+ n' = k' + 1$. At every node in $\Pi'\cup \Tau'$ the principal formula of the last applied rule in the proof is in $\Psi'$ or in $\mybar{\Psi'}$. 
	As $k' < m' + n'$ there is $\psi \in \Psi'$ and an edge $E_\psi(\pi_i,\tau_j)$ in $\calG_v'$  such that $\psi$ is principal in the last applied rule in $\pi_i$ and $\mybar{\psi}$ is principal in the last applied rule in $\tau$.
\end{proof}

\begin{lemma}\label{lem.traversedTerminates}
	The \traversedTrans given in Definition \ref{def.traversedConstruction} applied to the initial traversed proof $\rho_I$ terminates and yields a \Focus proof $\rho_T$.
\end{lemma}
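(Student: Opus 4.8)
The plan is to set up a well-founded measure on traversed proofs that strictly decreases under each step of the traversed leaf reduction algorithm, and then separately argue that the only way the algorithm can stop is by producing a genuine \Focus proof $\rho_T$ (using Lemma~\ref{lem.traversedApplicable} for the second part, which is already in hand: if some traversed leaf were left, a case would still apply). The real work is the termination measure. I would organise it lexicographically in three layers, processed leaf-by-leaf on the leftmost open traversed leaf $v$ labelled $\merge{\Pi}{\Psi}{\Tau}$.

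First layer: $\depth(v)$, the maximal depth among the proofs $\pi_1,\dots,\pi_m$ in $\Pi$. The point of the asymmetric focusing discipline of Definition~\ref{def.traversedConstruction} is precisely that this quantity never increases, and that it strictly drops whenever the construction reaches a situation forcing all annotations to be dropped (so that all formulas in the conclusion become out of focus). Between such drops we are within a single ``component'' and the companion-node bookkeeping applies. Second layer: the ``finitary'' size of the not-yet-traversed segments of the proofs in $\Pi\cup\Tau$ — roughly, the total number of nodes strictly above the topmost multicuts but below the next modal rule — which decreases on every non-modal, non-discharge push-up step and is unaffected by the $\RuBox$ step (which resets to a new, smaller block after consuming modalities). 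Because the proofs are cyclic and finite, and by the minimally-focussed normal form there are no $\RuF$ on repeat paths, the unfolding (\RuDischarge) step can only be triggered finitely often before a modal rule or a successful repeat is found; I would encapsulate this using a Dershowitz–Manna multiset ordering (Proposition~\ref{prop.DerMannaWellfounded}) over the relevant node-ranks, exactly as is done for finitary reductive cut-elimination, localised to the sub-block between modal rules. Third layer: the rank/number of cut-formulas in $\Psi$, which matters only for the tidying steps invoked via Lemma~\ref{lem.traversedTidy}; these strictly lower $|\Psi|$ or the rank, and Lemma~\ref{lem.traversedTidy} guarantees they do not create new $\RuF$ rules, so they do not disturb the first two layers.

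The delicate part — and the place I expect the argument to be most intricate — is showing that the $\RuBox$ case genuinely makes progress rather than looping: one must verify that whenever all of $\pi_1,\dots,\pi_m,\tau_1,\dots,\tau_n$ end in $\RuBox$, either the successful-repeat test fires (because along the path from some ancestor $c$ with $\sfS_c \seteq \Gamma_1,\dots,\Gamma_m,\Delta_1,\dots,\Delta_n$ we passed a $\RuBox$, kept a focused formula of maximal rank throughout, and used no $\RuF$), or else the new traversed leaf above the freshly introduced $\RuBox$ is ``smaller'' in the first two layers. This is where the minimally-focussed invariant, Lemma~\ref{lem.unimporantCompDescendent}, and Proposition~\ref{prop.aFree}.5 (no formula is both magenta and navy, so $\mu$-formulas in $\Psi$ are never in focus) all get used: the formulas carried over from $\Pi$ of maximal depth preserve focus, so a successful path in some $\pi_i$ of maximal depth projects to a successful path in $\rho$, forcing the repeat test to eventually succeed before the finitely many companion nodes available at that depth are exhausted. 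The formal bound on how long this can take would come from König's Lemma applied to the finitely-branching tree of possible ancestors with matching sequents (there are only finitely many sequents over $\Clos(\phi)$ up to $\seteq$), mirroring Lemma~\ref{lem.wqoMaximalLength}; I would invoke that finiteness to close the first layer. Once all three layers are in place, the lexicographic product is well-founded, the algorithm terminates, and by Lemma~\ref{lem.traversedApplicable} the terminal object has no traversed leaves, hence is a \Focus proof $\rho_T$; that its cut-rank is $<\rank(\phi)$ is immediate since every explicitly introduced cut (in tidying, or in the pushed-up ``other cases'') has a cut-formula $\psi$ with $\psi\not\equic\phi$, hence $\rank(\psi)<\rank(\phi)$.
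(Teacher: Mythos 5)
Your plan correctly identifies the main ingredients — depth as the coarsest measure, boundedness of non-modal segments, pigeonhole on the finitely many sequents over $\Clos(\phi)$ to find successful repeats, and K\"onig's Lemma to close — and these do all appear in the paper's argument. However, the \emph{lexicographic measure} framing has a gap that the paper's actual proof avoids by taking a different route.

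The problem is with your second layer. You propose to measure ``the total number of nodes strictly above the topmost multicuts but below the next modal rule'' and claim it decreases on every non-modal, non-discharge push-up and is ``unaffected by the $\RuBox$ step.'' But this quantity is \emph{not} monotone. Most seriously, in the $\RuDischarge$ case $\pi_i$ is replaced by its unfolding $\tilde\pi_i$, obtained by substituting $\pi_i$ itself into every discharged leaf; the unfolded proof is strictly \emph{larger}, so your measure increases. You flag this and appeal to a Dershowitz--Manna ordering on ``node-ranks,'' but there is no obvious multiset that decreases here: unfolding is not a local simplification, and the ranks of the formulas involved do not change. Similarly, the claim that $\RuBox$ ``resets to a smaller block'' is unsubstantiated; what the $\RuBox$ case does is either discharge a leaf or push the box upward, and neither is forced by any size-comparison you have set up. So as stated, the proposed ranking function is not well-founded along the reduction sequence, and the lexicographic product does not establish termination.

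The paper sidesteps this by not constructing a per-step decreasing measure at all. Instead it tracks, for each node $v$ in $\rho$ and each proof $\pi_i$ in the multicut, a ``projection'' node $u_i(v)\in\hat\pi$ (and $w_j(v)\in\hat\tau$), so that each root-to-leaf path $\alpha$ in $\rho$ projects to a forest of paths inside $\hat\pi$ and $\hat\tau$. Because $\hat\pi,\hat\tau$ are \emph{finite} \Focus proofs, the maximal length $M$ of a path without a modal node is finite, and the forest branching is bounded by $2$; this gives the explicit bound $s\cdot 2^{M+1}$ on the number of construction steps between consecutive modal rules (Claim~1), and iterating yields a bound $f_M(n)$ on how long a path must be to contain $n$ modal rules (Claim~2). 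Claim~3 then shows that within a fixed depth $d$, once the projections have entered proper clusters of $\hat\pi$ — after at most $n_l$ modal rules — all subsequent nodes have a formula in focus and no $\RuF$ is applied, so by pigeonhole on the at most $2^{n_l^2 n_r}$ sequents (up to $\seteq$), a successful repeat must fire within $n_l + 2^{n_l^2 n_r}$ modal nodes. Multiplying by the number of depths bounds the total number of modal nodes on any path, and combining with Claim~2 bounds the height of $\rho$; K\"onig's Lemma finishes. The crucial difference from your sketch is that the argument is a \emph{global bound on path length}, not a locally decreasing ranking — the $\RuDischarge$ unfoldings are harmless precisely because they are already ``counted'' in the size of the forest of projections, whose branches cannot be longer than $M$ regardless of how many times the cycle in $\hat\pi$ is traversed. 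If you want to salvage the lexicographic framing, you would need to replace the second layer with a quantity along those lines (e.g.\ the number of nodes in the forest of projections minus the number of steps already taken below the current block), but at that point you are really re-deriving the path-length bound, so it is cleaner to state it directly.

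One more small point: your closing remark on cut-rank $<\rank(\phi)$ is true but belongs to Lemma~\ref{lem.cutsImportant}, not to the termination lemma itself, which only asserts that the output is a \Focus proof. It does not hurt, but it is not needed here.
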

\begin{proof}	
	Let $\rho_k$ and $\rho_l$ be traversed proofs. We write $\rho_k < \rho_l$ if $\rho_l$  can be obtained from $\rho_k$ by the construction from Definition \ref{def.traversedConstruction} and $\rho_k \neq \rho_l$. It holds that $<$ is irreflexive, antisymmetric and transitive.
	Moreover, if $\rho_k < \rho_l$, then $\rho_k$ is a subproof of $\rho_l$, in the sense that $\rho_l$ can be obtained from $\rho_k$ by replacing some traversed leaves in $\rho_k$ by traversed proofs and inserting nodes labelled with \RuDischarge[]. Thus, $\rho_l$ consists of at least the nodes in $\rho_k$ and we can identify nodes in $\rho_k$ with nodes in $\rho_l$.
	
	From now on, whenever we speak about a traversed proof $\rho$, we mean a traversed proof $\rho \geq \rho_I$.	
	Let $\rho \geq \rho_I$ and let $v$ be an open leaf in $\rho$ labelled with
	\begin{align*}
		\begin{prooftree}
			\hypo{\merge{\Pi}{\Psi}{\Tau}}
			\infer[no rule]1{\Gamma_1,...,\Gamma_m, \Delta_1,...,\Delta_n}
		\end{prooftree}
	\end{align*}

	For $i = 1,...,m$ and $j= 1,...n$ we define nodes $u_i(v) \in \hat{\pi}$ and  $w_j(v) \in \hat{\tau}$ with
	\begin{enumerate}
		\item $\pi_i = \hat{\pi}_{u_i(v)}$ and $\tau_j = \hat{\tau}_{w_j(v)}$,
		\item $\sfS(u_i(v)) = \Gamma_i', \Psi_i^u$, where $\Gamma_i' = \Gamma_i$ or $(\Gamma_i')^u = \Gamma_i$ and 
		\item  $\sfS(w_j(v)) = \Delta_j, \mybar{\Phi_j'}$, where $(\Delta_j')^u = \Delta_j$ and $(\Phi_j')^u = \Phi_j$. 
	\end{enumerate}
	The nodes $u_i(v)$ and  $w_j(v)$ are defined by recursion on the construction. For $\rho_I$ define $u_1(v)$ to be the root of $\hat{\pi}$ and $w_1(v)$ to be the root of $\hat{\tau}$. 
	
	For the recursion step we follow the case distinction. For example,	let $\rho$ be a traversed proof with leftmost open leaf $v$, where the last applied rules in $\pi_i$ and $\tau_j$ are \RuMu and \RuNu, respectively, with fitting principal formulas. Let $\rho'$ be obtained from $\rho$ in one step with leftmost open leaves $v'$ and $v$, respectively, where $\pi_i$ and $\tau_j$ are transformed. Then $u_i(v')$ is the child of $u_i(v)$ and $w_j(v')$ is the child of $w_j(v)$. For $g \neq i$ and $h \neq j$ we define $u_g(v') = u_g(v)$ and $w_h(v') = w_h(v)$.
	
	The case where a splitting rule, i.e. $\RuAnd$ or $\RuOr$ where the principal formula is in $\Psi$, is applied is more complicated. In this case a proof is added to the multiset $\Pi$ or $\Tau$. Then $u_{i'}(v')$ is a child of $u_i(v)$, where $i'$ might be different than $i$. Following the construction described above it should still be clear how to define $u_i(v)$ and $w_j(v)$.
	
	\medskip
	Let $\alpha = a_0...a_d$ be a path in $\rho$ from the root of $\rho$ to a traversed leaf. For every node $a_k$ on $\alpha$ there is $\rho' < \rho$, where $a_k$ is the leftmost open leaf.  If there are multiple ones, then choose the minimal. Note that we intentionally overuse $a_k$ to denote the node in $\rho$ and the traversed leaf in $\rho'$. To fix notation we let $a_k$ in $\rho'$ be labelled with 
	\begin{align*}
		\begin{prooftree}
			\hypo{\merge{\Pi^k}{\Psi^k}{\Tau^k}}
			\infer[no rule]1{\Gamma^k_1,...,\Gamma^k_{m'}, \Delta^k_1,...,\Delta^k_{n'}}
		\end{prooftree}
	\end{align*}

	The above definitions of $u_i(v)$ and $w_j(v)$ extend to nodes $v$ in a traversed proof $\rho$ below an open leaf. For such nodes $v$ and $v'$ it moreover holds, that if $v'$ is a child of $v$, then either 
	\begin{enumerate}
		\item $u_i(v) = u_i(v')$ or
		\item $u_i(v)$ is an ancestor of $u_{i'}(v')$. Then all but at most one node on the ancestor path are labelled with \RuDischarge[], \RuF or \RuU.  
	\end{enumerate}
	The same holds for the nodes $w_j(v)$ and $w_{j'}(v')$.
	
	Let $\alpha$ be the path from the root of $\rho$ to an open leaf $v$. By the above definition we can define corresponding paths $\alpha_i$ in $\hat{\pi}$ for $i = 1,...,m$ and $\beta_j$ in $\hat{\tau}$ for $j = 1,...,n$. We call $\alpha_i$ the \emph{$i$-th projection} of $\alpha$ to $\hat{\pi}$.
	
	\medskip
	Let $n_l$ be the size (i.e. the number of nodes) of $\hat{\pi}$ and $n_r$ be the size of $\hat{\tau}$. Let $v$ be a node on the path $\alpha$, by the above argumentation $\sfS_v$ is a sequent consisting of the union of sequents in $\{\Gamma_i \| u_i \in \hat{\pi}\}$, $\{(\Gamma_i)^u \| u_i \in \hat{\pi}\}$ and $\{(\Delta_j)^u \| w_j \in \hat{\tau}\}$. Hence nodes on $\alpha$ can only be labelled with at most $2^{n_l^2 \cdot n_r}$ sequents up to $=_{Set}$.

	\bigskip 
	
	Next we want to show that for every $n$: If a path $\alpha$ in $\rho$ has certain length (depending on $n$) then there are $n$ modal nodes on $\alpha$. For that aim we define $M$ to be the maximal length of a path in $\pi$ without a modal node. Notably, $1 \leq M < \max{\{n_l,n_r\}}$.
	
	\claim{1}
	Let $\alpha$ be a path in $\rho$ starting from a node $a_0$. Let $s = |\Pi^0|+ |\Tau^0|$. If $l(\alpha) \geq s \cdot 2^{M+1}$, then there is a modal node $a_k$ on $\alpha$. Moreover $|\Pi^k| + |\Tau^k| \leq s\cdot 2^{M+1}$.
	
	\claimproof{1}
	Let $\alpha = a_0a_1...a_k$ a path without a modal node and let $s_j = |\Pi^j| + |\Tau^j|$ for $j = 0,...,k$. Let $\alpha_i$ be the $i$-th projection of $\alpha$ in $\hat{\pi}$ for $i = 1,...,|\Pi^k|$. Then the first node of $\alpha_i$ is $u_g(a_0)$ for some $g = 1,...,|\Pi^0|$. Thus the paths $\alpha_i$ form a forest $F_l$ consisting of $|\Pi^0|$ many trees with roots $u_g(a_0)$ for $g = 1,...,|\Pi^0|$. Analogously the paths $\beta_j$ in $\hat{\tau}$ form a forest $F_r$. Let $F = F_l \cup F_r$, then $F$ consists of $s$ trees. Due to the shape of the rules in the \Focus system every node in $F$ has at most two children.\footnote{Note that this would not be possible if we would allow contraction rules in $\pi$, as a reduction with a contraction would potentially double the size of the multicut.} If the modal rule is never applied in $\alpha$, the length of all branches in $F$ are bound by $M$. Thus every tree in $F$ consists of at most $2^M$ nodes and therefore $|F| \leq s \cdot 2^M$.
	
	If a traversed leaf is transformed in the construction, i.e. a child is added, then also one proof of $\Pi$ or $\Tau$ is transformed. After that we might add \RuDischarge[] rules. But as we reuse \RuDischarge[] rules for all leaves labelled with the same sequent up to $\seteq$, there are at most as many nodes labelled with \RuDischarge[], as other nodes. Let $k$ be the length of $\alpha$, then $s + k/2 \leq |F|$.  Hence $k \leq 2|F| - 2s \leq s \cdot 2^{M+1}$, meaning that after at most $2^{M+1}$ transformations all proofs in $\Pi$ and $\Tau$ must have a modal node at the root.
	In every step of the construction there is at most one proof added to $\Pi$ or $\Tau$, hence $s_{j+1} \leq s_j +1$ and therefore $s_k \leq s + k \leq 2|F| \leq s \cdot 2^{M+1}$.
	\claimproofend

	\claim{2}
	Let $\alpha$ be a path in $\rho$ starting from the root. If $l(\alpha) \geq 2^{(M+1)\cdot (n +2) + 1}$, then there are at least $n$ modal rules on $\alpha$.
	
	\claimproof{2}
	For the root $r$ of $\rho$ it holds that $s= |\{\hat{\pi}\}| + |\{\hat{\tau}\}| = 2$.
	We can find modal nodes $b_1,...,b_n$ on $\alpha$ using Claim 1. Doing so the length of the path from $r$ to $a_n$ can be bound by $\sum_{j=1}^n s_{j} \cdot 2^{M+1} = \sum_{j=1}^n 2 \cdot 2^{(M+1)\cdot j} \cdot  2^{M+1} = 2 \cdot \sum_{j=2}^{n+1} 2^{(M+1)\cdot j} \leq 2 \cdot 2^{(M+1) \cdot (n + 2)}$, where $s_j = |\Pi^j| + |\Tau^j|$ corresponds to the number of proofs in the traversed leaf at $b_j$ for $j = 1,...,n$.
	\claimproofend

	Note that in the construction of $\rho$ there was a modal node added only if the root of every proof in $\Pi$ was a modal node as well. Hence there are also $n$ modal nodes on every projection $\alpha_i$ for $i = 1,...,m$.
	
	For later use we define the function $f_M(n) = 2^{(M+1)\cdot (n +2) + 1}$
	
	\bigskip
	For a node $v$ in a traversed proof $\rho$ we let $\depth_t(v)= \max\{\depth(u_i(v)) \| i = 1,...,m\}$. Note that, if $v$ is a traversed leaf, then $\depth_t(v) = \depth(v)$.
	
	\claim{3}
	Let $a$ be a node in $\rho$ with $\depth_t(a) = d$. Then between $a$ and every traversed leaf $v$ with $\depth(v) = d$ there are at most $n_l + 2^{n_l^2\cdot n_r}$ many modal nodes.
	
	\claimproof{3}
	Suppose that $v$ is a traversed leaf and $\alpha = a_0a_1\cdots$ is the path from $a = a_0$ to $v$ with more than $n_l + 2^{n_l^2\cdot n_r}$ many modal nodes on $\alpha$. Let $b$ be the lowest node on $\alpha$, such that there are $n_l$ modal rules between $a$ and $b$ and let $\beta$ be the subpath of $\alpha$ from $b$ to $v$. 
	
	Let $w_1,...,w_k$ be a path in $\hat{\pi}$, where $\depth(w_j) = d$ for all $j= 1,...,k$. If the length $k \geq n_l$, then $w_k$ is in a proper cluster. Hence, if $\depth(u_i(a_j))= d$, we have that $u_i(a_j)$ is in a proper cluster for all $a_j \in \beta$. In proper clusters no \RuF rules are applied. In the construction an \RuF rule is only added if for some $i$ the root of $\pi_i$ is labelled with $\RuF$ and it holds $\depth(\pi_i) = \depth(a_j)$. For nodes in $\beta$ this is not possible, as long the depth of $a_j$ is $d$. Moreover, for every $w$ in $\beta$, there is a formula in focus, as there is an $i$ such that $u_i(w)$ is in a proper cluster of depth $d$ and the same formulas in focus are added to $\rho$. This is the case as all formulas $\psi \in \Psi$ are out of focus in the proofs $\pi_i$.
	
	There are more than $2^{n_l^2\cdot n_r}$ modal nodes on $\beta$. By the above argumentation those modal nodes are labelled with at most $2^{n_l^2 \cdot n_r}$ many sequents up to $=_{Set}$. Hence there are modal nodes $c$ and $w$ in $\beta$, such that $\sfS(c) =_{Set} \sfS(w)$. On the path from $c$ to $w$ there is a modal rule applied, no \RuF rules are applied and all sequents have a formula in focus. Hence the path from $c$ to $w$ is successful and the node $w$ would get discharged in the construction. This contradicts the fact that the path $\alpha$ has more than $n_l + 2^{n_l^2\cdot n_r}$ modal nodes. 
	\claimproofend

	Let $d = \depth(\hat{\pi})$. Iterating Claim 3 we obtain that for every traversed leaf $v$, the path $\alpha$ from the root of $\rho$ to $v$ has at most $(d+1) \cdot (n_l + 2^{n_l^2\cdot n_r})$ many modal nodes.
	
	Combining this with Claim 2, we obtain that the height of traversed leaves is bound by $f_M((d+1) \cdot (n_l + 2^{n_l^2\cdot n_r}))$. In conclusion, as every constructed tree is finitely branching, after finitely many steps a traversed proof $\rho_T$ without traversed leaves -- a \Focus proof -- is constructed.		
\end{proof}

\subsection{Example}

Let $\phi, \psi, \chi$ and $\delta$ be the following formulas, with their intuitive meaning written on the right: 
\begin{align*}
	\phi \isdef& \nu x. \lbox x \land \mu y. \ldia y \lor \mybar{p}, \qquad &&\text{``everywhere $\mybar{p}$ is reachable''}\\
	\psi \isdef& \mu x. \ldia x \lor p, \qquad &&\text{``$p$ is reachable''}\\
	\chi \isdef& \mu x. \ldia x \lor q, \qquad &&\text{``$q$ is reachable''}\\
	\delta \isdef& \mu x. \ldia x \lor ({p} \land \mybar{q}), \qquad &&\text{``$p \land \mybar{q}$ is reachable''}\\
\end{align*}
Note that ``$p$ is reachable'' means that there is a finite path to a state where $p$ holds. The negation $\mybar{\delta}$ of $\delta$ translates to $\nu x. \lbox x \land (\mybar{p} \lor q)$ which intuitively means ``everywhere $p$ implies $q$''. 
The negation $\mybar{\phi}$ of $\phi$ is $\mu x. \ldia x \lor \nu y. \lbox y \land p$ and reads as ``there is a reachable state, where everywhere it holds $p$''. 
It thus holds that $\mybar{\phi}$ and $\mybar{\delta}$ imply $\chi$, in other words the sequent $\phi,\delta, \chi$ is valid. We give a \Focus proof using an important cut with $\psi$:
\begin{align*}
	\begin{prooftree}
		\hypo{\hat{\pi}}
		\infer[no rule]1{{\phi},\psi}
		\hypo{\hat{\tau}}
		\infer[no rule]1{\mybar{\psi},{\delta},\chi}
		\infer2[\RuCut]{{\phi},{\delta},\chi}
	\end{prooftree}
\end{align*}
where the proofs $\hat{\pi}$ and  $\hat{\tau}$  are given as follows. We let $\gamma \isdef \mu y. \ldia y \lor \mybar{p}$ and mention that $\mybar{\psi} = \nu x. \lbox x \land \mybar{p}$. Note that in $\hat{\tau}$ the cut-formula $\mybar{\psi}$ is the only formula containing a $\nu$-operator and is therefore essential in the successful repeat. In this example we omit annotations of $u$ for readability.
\begin{align*}
	\begin{minipage}{0.47\textwidth}
		\begin{prooftree}
			\hypo{[\phi^f, \psi]^\dx}
			\infer1[\RuBox]{\lbox \phi^f, \ldia \psi}
			\infer1[\RuWeak]{\lbox \phi^f, \ldia \psi, p}
			\hypo{}
			\infer1[\AxLit]{\mybar{p},p}
			\infer1[\RuWeak]{\ldia \gamma, \mybar{p}, \ldia \psi, p}
			\infer1[\RuOr]{\ldia \gamma \lor \mybar{p}^u, \ldia \psi, p}
			\infer1[\RuMu]{\gamma^f, \ldia \psi, p}
			\infer2[\RuAnd]{\lbox \phi \land \gamma^f, \ldia \psi, p}
			\infer1[\RuNu]{\mathllap{b:\quad} \phi^f, \ldia \psi, p}
			\infer1[\RuOr]{\phi^f, \ldia \psi \lor p}
			\infer1[\RuMu]{\mathllap{a:\quad}\phi^f, \psi}
			\infer1[\RuDischarge[\dx]]{\phi^f, \psi}
			\infer1[\RuF]{\phi^u, \psi}
		\end{prooftree}
	\end{minipage}
	\begin{minipage}{0.48\textwidth}
		\begin{prooftree}
			\hypo{[\mybar{\psi}^f,\delta, \chi]^{\dy}}
			\infer1[\RuBox]{\lbox \mybar{\psi}^f,\ldia \delta, \ldia \chi}
			\infer1[\RuWeak]{\mathllap{s:\quad}\lbox \mybar{\psi}^f,\ldia \delta, p \land \mybar{q},\ldia \chi, q}
			\hypo{}
			\infer1[\AxLit]{\mybar{p}^f, p, q}
			\hypo{}
			\infer1[\AxLit]{\mybar{p}^f,\mybar{q}, q}
			\infer2[\RuAnd]{\mybar{p}^f, p \land \mybar{q}, q}
			\infer1[\RuWeak]{\mathllap{t:\quad}\mybar{p}^f,\ldia \delta, p \land \mybar{q},\ldia \chi, q}
			\infer2[\RuAnd]{\lbox \mybar{\psi} \land \mybar{p}^f,\ldia \delta, p \land \mybar{q},\ldia \chi, q}
			\infer1[\RuNu]{\mathllap{r:\quad}\mybar{\psi}^f,\ldia \delta, p \land \mybar{q},\ldia \chi, q}
			\infer1[\RuOr]{\mybar{\psi}^f,\ldia \delta, p \land \mybar{q},\ldia \chi \lor q}
			\infer1[\RuMu]{\mybar{\psi}^f,\ldia \delta, p \land \mybar{q},\chi}
			\infer1[\RuOr]{\mybar{\psi}^f,\ldia \delta \lor (p \land \mybar{q}),\chi}
			\infer1[\RuMu]{\mathllap{w:\quad} \mybar{\psi}^f,{\delta},\chi}
			\infer1[\RuDischarge[\dy]]{\mybar{\psi}^f,{\delta},\chi}
			\infer1[\RuF]{\mybar{\psi}^u,{\delta},\chi}
		\end{prooftree}
	\end{minipage}
\end{align*}

We want to eliminate the important cut as in the construction given in Subsection \ref{subsec.constrImportant}. We start by defining the traversed proof $\rho_I$ as above by
\[\begin{prooftree}
	\hypo{\merge{\hat{\pi}}{\psi}{\hat{\tau}}}
	\infer[no rule]1{\phi,\delta,\chi}
\end{prooftree}\]
We proceed by reducing $\hat{\pi}$. The last applied rule in $\hat{\pi}$ is \RuF and $\depth(\hat{\pi})$ is maximal (there is only one proof). We therefore add \RuF to $\rho_I$. Afterwards the proof is unfolded and then $\psi$ is principal. 
On the right hand side in $\hat{\tau}$ the \RuF rule is ignored and then the proof is unfolded. The following rules \RuMu and \RuOr are non-principal and the cut will be pushed upwards. This yields the following traversed proof. Note that $\hat{\pi}_a$ denotes the subproof of $\hat{\pi}$ rooted at the node $a$.
\[\begin{prooftree}
	\hypo{\merge{\hat{\pi}_a}{\psi}{\hat{\tau}_r}}
	\infer[no rule]1{{\phi}^f,\ldia \delta, p \land \mybar{q},\ldia \chi, q}
	\infer1[\RuOr]{{\phi}^f,\ldia \delta, p \land \mybar{q},\ldia \chi \lor q}
	\infer1[\RuMu]{{\phi}^f,\ldia \delta, p \land \mybar{q},\chi}
	\infer1[\RuOr]{{\phi}^f,\ldia \delta \lor (p \land \mybar{q}),\chi}
	\infer1[\RuMu]{{\phi}^f,{\delta},\chi}
	\infer1[\RuF]{{\phi}^u,{\delta},\chi}
\end{prooftree}\]
Now $\psi$ is principal on both sides and gets reduced. First the reduction for \RuMu is applied and then for \RuOr, giving the following traversed proof
\[\begin{prooftree}
	\hypo{\merge{\hat{\pi}_b}{\ldia\psi,p}{\hat{\tau}_s,\hat{\tau}_t}}
	\infer[no rule]1{{\phi}^f,\ldia \delta, p \land \mybar{q},\ldia \chi, q,\ldia \delta, p \land \mybar{q},\ldia \chi, q}
	\infer1[\RuContr]{{\phi}^f,\ldia \delta, p \land \mybar{q},\ldia \chi, q}
	\infer1[\RuOr]{{\phi}^f,\ldia \delta, p \land \mybar{q},\ldia \chi \lor q}
	\infer1[\RuMu]{{\phi}^f,\ldia \delta, p \land \mybar{q},\chi}
	\infer1[\RuOr]{{\phi}^f,\ldia \delta \lor (p \land \mybar{q}),\chi}
	\infer1[\RuMu]{{\phi}^f,{\delta},\chi}
	\infer1[\RuF]{{\phi}^u,{\delta},\chi}
\end{prooftree}\]
This traversed proof is not tidy, as $p \not\equic \psi$. We transform it into a tidy traversed proof by adding a cut of lower rank.
\[\begin{prooftree}
	\hypo{\merge{\hat{\pi}_b}{\ldia\psi}{\hat{\tau}_s}}
	\infer[no rule]1{{\phi}^f,\ldia \delta, p \land \mybar{q},\ldia \chi, q,p}
	\hypo{\hat{\tau}_t}
	\infer[no rule]1{\mybar{p},\ldia \delta, p \land \mybar{q},\ldia \chi, q}
	\infer2[\RuCut]{{\phi}^f,\ldia \delta, p \land \mybar{q},\ldia \chi, q,\ldia \delta, p \land \mybar{q},\ldia \chi, q}
	\infer1[\RuContr]{{\phi}^f,\ldia \delta, p \land \mybar{q},\ldia \chi, q}
	\infer1[\RuOr]{{\phi}^f,\ldia \delta, p \land \mybar{q},\ldia \chi \lor q}
	\infer1[\RuMu]{{\phi}^f,\ldia \delta, p \land \mybar{q},\chi}
	\infer1[\RuOr]{{\phi}^f,\ldia \delta \lor (p \land \mybar{q}),\chi}
	\infer1[\RuMu]{{\phi}^f,{\delta},\chi}
	\infer1[\RuF]{{\phi}^u,{\delta},\chi}
\end{prooftree}\]

We continue reducing non-principal rules, until a \RuBox rule is applied on the left branch and the cut-formula gets weakened on the right branch.

\[\begin{prooftree}
	\hypo{\merge{\hat{\pi}_a}{\psi}{\hat{\tau}_w}}
	\infer[no rule]1{{\mathllap{v:\quad}\phi}^f,\delta,\chi}
	\infer1[\RuBox]{\lbox{\phi}^f,\ldia \delta,\ldia \chi}
	\infer1[\RuWeak]{\lbox{\phi}^f,\ldia \delta,\ldia \chi, p}
	\hypo{}
	\infer1[\AxLit]{\mybar{p}, p}
	\infer1[\RuWeak]{\ldia \gamma, \mybar{p},\ldia \delta,\ldia \chi, p}
	\infer1[\RuOr]{\ldia \gamma \lor \mybar{p}^u,\ldia \delta,\ldia \chi, p}
	\infer1[\RuMu]{\gamma^f,\ldia \delta,\ldia \chi, p}
	\infer2[\RuAnd]{\lbox{\phi}\land \gamma^f,\ldia \delta,\ldia \chi, p}
	\infer1[\RuNu]{{\phi}^f,\ldia \delta,\ldia \chi, p}
	\infer1[\RuWeak]{{\phi}^f,\ldia \delta, p \land \mybar{q},\ldia \chi, q,p}
	\hypo{\hat{\tau}_t}
	\infer[no rule]1{\mybar{p},\ldia \delta, p \land \mybar{q},\ldia \chi, q}
	\infer2[\RuCut]{{\phi}^f,\ldia \delta, p \land \mybar{q},\ldia \chi, q,\ldia \delta, p \land \mybar{q},\ldia \chi, q}
	\infer1[\RuContr]{{\phi}^f,\ldia \delta, p \land \mybar{q},\ldia \chi, q}
	\infer1[\RuOr]{{\phi}^f,\ldia \delta, p \land \mybar{q},\ldia \chi \lor q}
	\infer1[\RuMu]{{\phi}^f,\ldia \delta, p \land \mybar{q},\chi}
	\infer1[\RuOr]{{\phi}^f,\ldia \delta \lor (p \land \mybar{q}),\chi}
	\infer1[\RuMu]{\mathllap{c:\quad}{\phi}^f,{\delta},\chi}
	\infer1[\RuF]{{\phi}^u,{\delta},\chi}
\end{prooftree}\]
Now the traversed leaf $v$ is labelled with the same sequent as its ancestor $c$ and the path from $c$ to $v$ is successful. We can therefore insert a $\RuDischarge[\dz]$ rule at $c$ and discharge $v$ by $\dz$. This yields a \Focus proof of $\phi,\delta,\chi$, where the only cut is of lower rank. Note that in the construction of Definition \ref{def.traversedConstruction} this check is only carried out when a \RuBox rule would be applied. Thus the proof would get transformed further until we reach  a node labelled with $\lbox{\phi}^f,\ldia \delta,\ldia \chi$ again and only then discharge the leaf.

\section{Elimination of unimportant cuts}\label{sec.unimportantCuts}
We push unimportant cuts upwards 
using the cut reductions in Appendix \ref{app.CutReductions} 
and invoke Lemma \ref{lem.unimporantCompDescendent}: All component descendants of cut formulas of unimportant cuts are out of focus. This implies that cut reductions do not alter formulas in focus and we can therefore push all cuts in the component upwards until we find successful paths below the cuts. In this process all cuts that were pushed outside of the component become important cuts. Due to the presence of contractions we have to work with a generalization of the \RuCut rule, the \RuMix rule, which allows to introduce the cut-formula multiple times and can therefore be seen as a combination of cut and contractions. 

The \RuMix rule is the following rule:
\[\begin{prooftree}
	\hypo{\phi^u,...,\phi^u,\Sigma_0}
	\hypo{\mybar{\phi}^u,...,\mybar{\phi}^u,\Sigma_1}
	\infer2[\RuMix]{\Sigma_0,\Sigma_1}
\end{prooftree}\]
where $\phi^u$ does not occur in $\Sigma_0$ and $\mybar{\phi}^u$ does not occur in $\Sigma_1$. Note that there are finitely many occurrences of $\phi^u$ the left premiss of \RuMix and there are finitely many occurrences of $\mybar{\phi}^u$ in the right premiss and that the amount of occurrences of $\phi^u$ in the left premiss might differ from the amount of occurrences of $\mybar{\phi}^u$ in the right premiss.

We use the same terminology for \RuMix as we did for \RuCut. For instance, we say that $\phi$ is the mix-formula of the \RuMix rule depicted above and we define the rank of a \RuMix rule as the rank of its mix-formula. We let \FocusS be the proof system obtained from \Focus by replacing the cut rule by the mix rule. As \RuMix is a generalization of \RuCut, every \Focus proof may be seen as a \FocusS proof by simply replacing \RuCut rules by \RuMix rules. Conversely, every \FocusS proof can be translated to a \Focus proof by replacing \RuMix rules by \RuCut rules and contractions. Importantly, the rank of cut/mix formulas is not affected.

We call a sequent $\Gamma$ \emph{modal}, if all formulas in $\Gamma$ are modal formulas. 
We call a \FocusS derivation $\pi$ \emph{local}, if $\pi$ does not contain the rules \RuBox, \RuF and \RuDischarge[].

The following lemma deals with the finitary part of the mix-elimination: We can push mixes upwards, until all premisses of a mix are modal sequents. First we need to define proofs with assumptions.

\begin{definition}
	Let $\calA$ be a set of sequents. A \emph{\FocusS proof with assumptions $\calA$} is a finite \FocusS derivation $\pi$, where every leaf of $\pi$ is either closed or labelled with a sequent in $\calA$.
	
	A proof $\pi$ with assumptions $\calA$ is called \emph{focussed}, if for every assumption $\Gamma$ in $\calA$ that contains a formula in focus, every node on the path from the root of $\pi$ to any occurrence of $\Gamma$ in $\pi$ contains a formula in focus. 
\end{definition}

\begin{lemma}\label{lem.unimportantFinitaryCE}
	Let $\calA$ be a set of modal sequents. Let $\pi$ be a local \FocusS proof with assumptions $\calA$ and only one \RuMix rule of rank $n$ at the root of $\pi$. Then $\pi$ can be transformed to a local \FocusS proof $\pi'$ with assumptions $\calA$ of the same sequent, where the premisses of all \RuMix rules are open assumptions in $\calA$ and all mixes have rank $\leq n$. Additionally, if $\pi$ is focussed,  then $\pi'$ is focussed as well.
\end{lemma}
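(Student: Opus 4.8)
The plan is to run ordinary reductive cut-elimination, specialised to the local setting: push the single \RuMix at the root of $\pi$ towards the leaves by the standard cut/mix reductions of Appendix~\ref{app.CutReductions}, stopping a mix precisely when both of its premisses are sequents of $\calA$. Working locally keeps the case analysis short — there is no \RuBox, \RuF or \RuDischarge[] to consider — and, decisively, the mix is never commuted across a modal rule: a formula that in a non-local proof would be consumed by a \RuBox is here instead erased by \RuWeak, merged by \RuContr, or already sitting in a leaf. Since the leaves of a \FocusS proof with assumptions $\calA$ are axioms or $\calA$-sequents, and every sequent of $\calA$ is modal, a mix that gets stuck is either an axiom mix (to be removed) or a mix whose two premisses are $\calA$-sequents — and such a mix is exactly what the statement permits, provided its rank stays $\le n$, which the reductions maintain.

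First I would catalogue the reduction steps. \emph{Commuting steps:} if the mix-formula is not principal in the last rule of, say, the left premiss, commute the mix above that rule; the \RuAnd and \RuContr cases duplicate the right subproof, but each copy then sits over a strictly shorter left subproof. A \RuWeak on an occurrence of the mix-formula decrements its multiplicity, and when the multiplicity reaches $0$ the mix is absorbed into weakenings and the other subproof discarded (using that $\seteq$-equal sequents are interderivable by \RuWeak and \RuContr). \emph{Axiom step:} dual to the usual one, the mix is absorbed into the opposite subproof, adjusted by a \RuContr and a possible \RuU. \emph{Key steps:} a $\lor$ against a $\land$ on the mix-formula lowers the rank; a \RuMu against a \RuNu unfolds the mix-formula, yielding a mix on the unfolding $\psi[\phi/x]$ — of the \emph{same} rank, since $\phi \equic \psi[\phi/x]$ — together with, when the mix-formula occurred with multiplicity $>1$, residual mixes on $\phi$ over strictly shorter subproofs, plus a contraction. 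The reason \RuMix is used rather than \RuCut is exactly that it swallows a \RuContr on the cut-formula for free: such a contraction merely changes the number of designated occurrences and never forces a genuine commutation.

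Termination would be proved by a well-founded induction, the natural measure being the multiset — taken over the mixes not yet resting on $\calA$ — of pairs $(\rank,h)$ with $h$ the combined height of the mix's two subproofs, ordered lexicographically, using the multiset extension of $\Nat \times \Nat$. The $\lor/\land$ key and axiom steps strictly lower the $\rank$-component; all commuting steps and the \RuMu/\RuNu key step strictly lower $h$. The point requiring care — and the one I expect to be the real obstacle — is the \RuMu/\RuNu key step: the rank does \emph{not} drop there, and contractions may make the mix-formula occur with multiplicity $>1$, so the step replaces one mix by several of the current rank; one must check that each produced mix still stands over strictly shorter subproofs and that the residual mix stacked on the unfolding-mix keeps decreasing the measure under its own subsequent reduction. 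This is where locality is essential: since $\pi$ has no \RuBox and all fixpoint variables are guarded, the recursive occurrences of $\phi$ created by unfolding are \emph{modal}, hence can never again be principal, so along any branch only finitely many \RuMu/\RuNu key steps can be taken before the mix-formula is forced to be modal and the reduction must instead terminate via \RuWeak or an $\calA$-assumption; together with finiteness of $\Clos(\pi)$ this bounds the whole process.

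Finally, locality and focussedness survive by inspection. No reduction introduces \RuBox, \RuF or \RuDischarge[] — the only rules it creates are \RuContr, \RuWeak, \RuU and copies of rules already in $\pi$ — so $\pi'$ is again local. For focussedness, the mix-formula is always out of focus, so no step deletes a focused formula; and wherever a reduction appends \RuWeak or \RuU nodes below a subproof containing a focused $\calA$-assumption, focussedness of $\pi$ already forces the root of that subproof — and hence the appended nodes, whose sequents only grow — to carry a focused formula. Hence $\pi'$ is focussed whenever $\pi$ is.
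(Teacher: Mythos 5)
Your overall plan matches the paper's sketch — push the single root mix towards the leaves with the standard reductions, use locality to ensure the mix never crosses a modality, and stop a mix when both premisses rest on $\calA$-assumptions — and your treatment of the axiom/weakening endpoints, locality preservation, and focussedness is sound. But your termination argument contains a concrete error.

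You claim that the $\lor/\land$ key step strictly lowers the $\rank$-component, and you put $\rank$ as the outer coordinate of a lexicographic measure. In the alternation-free fixpoint setting with the paper's notion of $\rank$ (Definition~2.3, where $\rank$ is invariant under $\equic$), this is false. Take $\psi = \psi_0 \lor \psi_1$ with $\psi_0 = \ldia\,\mu x.(\ldia x \lor p)$ and $\psi_1 = p$, so that $\psi$ is the unfolding of $\mu x.(\ldia x \lor p)$. Then $\psi_0 \trace \psi$, so $\psi_0 \equic \psi$ and $\rank(\psi_0) = \rank(\psi)$; the $\lor/\land$ key step produces a mix on $\psi_0$ of the \emph{same} rank, and that mix is stacked over the inner mix's output, so its combined height $h$ goes \emph{up}. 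Hence the proposed multiset of $(\rank,h)$ pairs does not decrease. You notice this very phenomenon for the $\RuMu/\RuNu$ key step and wave at guardedness to bound it, but you do not notice that exactly the same failure occurs for $\lor/\land$; the measure you state does not cover it, and the prose remark about ``only finitely many $\RuMu/\RuNu$ steps'' neither extends to the boolean key steps nor is a well-founded ordering.

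The repair is close to what you gesture at, but needs to be made uniform: replace $\rank$ in the outer coordinate by a \emph{non-modal} measure. Guardedness means the relation $\tracestep^-$ (trace steps through non-modal formulas) is acyclic, so $\nmf(\psi) = |\Clos^-(\psi)|$ is well-defined and strictly decreases under \emph{every} non-modal key step — $\lor$, $\land$, $\mu$, and $\nu$ alike — while $\rank$ (which cannot increase along traces) still guarantees that all surviving mixes have rank $\le n$. This is precisely the machinery the paper introduces in the proof of Lemma~\ref{lem.repeatPathContainsMod}; using $(\nmf(\psi), h)$ lexicographically, with the multiset extension, closes the gap. The paper's own proof of Lemma~\ref{lem.unimportantFinitaryCE} is deliberately a sketch deferring to the finitary literature, so you attempt more; but the specific measure you propose does not work as stated.
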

\begin{proof}[Proof (Sketch)]
	Note that $\pi$ does not contain \RuDischarge[] rules. Therefore $\pi$ is a finitary proof without cycles and we may employ cut-elimination (more precisely: mix-elimination) for finitary proofs, see for example \cite{Takeuti1987}. 
	The mix-reductions that are used resemble the cut-reductions in Appendix \ref{app.CutReductions}, but then for the more general \RuMix rule. As the focus of this paper is not on cut-elimination for finitary proofs, we omit the details.
	The overall strategy is to inductively ``push the mix upwards'' in $\pi$ until one of its premisses is an axiom and the mix can be omitted. In our situation we have to consider the additional case where one of the premisses of the mix is an assumption in $\calA$. In this situation, as $\calA$ consists of modal sequents, the mix formula is a modal formula. Then the mix formula is never principal in $\pi$: it does not contain modal rules. Therefore we can push the mix upwards even further until both premisses of the mix are open assumptions. Because of Lemma \ref{lem.unimporantCompDescendent} none of the mix-reductions affect formulas in focus, therefore $\pi'$ is focussed if $\pi$ is focused. 
\end{proof}

We will now use Lemma \ref{lem.unimportantFinitaryCE} to inductively push mixes upwards until there are enough modal rules below every mix, which guarantees that we find a successful repeat below every mix. First we need the following definition.

\begin{definition}
	Let $\pi$ be a \FocusS derivation and $v$ be a node in $\pi$. The \emph{infinite unfolding of $\comp(v)$} in $\pi$, written $\unf[v]{\pi}$, is obtained from $\pi$ by recursively replacing every discharged leaf $l$, that is a component descendant of $v$, by $\pi_l$ and removing nodes labelled with \RuDischarge[\dx] whenever no discharged leaf is labelled with $\dx$.
\end{definition}

\begin{lemma}\label{lem.cutsUnimportant}
	Let $\pi$ be a \Focus proof of cut-rank $n$ such that all cuts of rank $n$ are unimportant and in the root-cluster. Then we can transform $\pi$ into a \Focus proof $\pi'$ of the same sequent with cut-rank $\leq n$, where all cuts are important.
\end{lemma}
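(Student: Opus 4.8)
The goal is to eliminate all rank-$n$ cuts from $\pi$ under the assumption that they are all unimportant and live in the root-cluster. The strategy is to push these cuts upwards out of the root-component, at which point they become important cuts (of rank $\le n$) by the very definition of important/unimportant. The key enabling fact is Lemma~\ref{lem.unimporantCompDescendent}: component descendants of the cut formula of an unimportant cut are out of focus, so no cut reduction carried out within the component can destroy a formula in focus, hence successful repeat-paths of $\pi$ survive. The technical device is to first pass to the system \FocusS, replacing every \RuCut\ by a \RuMix, so that contractions do not obstruct the reductions; at the end we translate back to \Focus, which only reintroduces cuts of the same rank together with contractions.

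\textbf{Step 1: reduce to a single mix at the root of the component.} Since all rank-$n$ cuts are in the root-cluster, and the root-cluster is contained in the root-component, it suffices (by finitely many reductions, working on one cut at a time, innermost first, or by an outer induction on the number of rank-$n$ cuts in the component) to handle the case of a \Focus\ proof whose only rank-$n$ cut is in the root-component. View it as a \FocusS\ proof by turning that cut into a \RuMix. Consider $\unf{\pi}$, the infinite unfolding of the root-component: this is an infinitary derivation in which, above the mix, every branch is a (finitary) local derivation until it hits either a \RuBox\ rule or a discharged leaf — i.e. it leaves the component. Cut off these local derivations: each maximal local subderivation rooted at the mix has, as assumptions, a set $\calA$ of modal sequents (the conclusions of the \RuBox\ rules and the sequents at the component-boundary). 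Because $\pi$ is minimally focussed, the assumptions carry focus information consistently along paths from the root, so the relevant derivations are \emph{focussed} in the sense of the preceding definition.

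\textbf{Step 2: push the mix up to the boundary.} Apply Lemma~\ref{lem.unimportantFinitaryCE} to this local focussed \FocusS\ proof with assumptions $\calA$: it yields a local focussed \FocusS\ proof of the same sequent in which the premisses of every \RuMix\ are open assumptions from $\calA$, and all mixes have rank $\le n$. Now re-attach the material that was cut off (the \RuBox\ steps and the rest of the component) above these assumptions. Because every mix now sits \emph{immediately above a modal rule} (its premisses being modal assumptions, which in $\pi$ were conclusions of \RuBox\ applications or boundary sequents), and because — by minimal focussedness and Lemma~\ref{lem.unimporantCompDescendent} — focus is preserved through the whole construction, each such mix is now located strictly above at least one \RuBox\ of the component; iterating down every branch of the component we reach, below each mix, a successful repeat, so the component can be closed off: folding the unfolding back up along these newly identified successful repeats gives a genuine cyclic \FocusS\ proof in which every rank-$n$ mix now lies \emph{outside} the root-component, i.e. in a trivial cluster, hence is important. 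Finally translate \FocusS\ back to \Focus, replacing each \RuMix\ by \RuCut\ plus contractions; the cut-rank is unchanged at $\le n$, all rank-$n$ cuts are important, and we obtain the desired $\pi'$.

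\textbf{The main obstacle.} The delicate point is the bookkeeping around the boundary of the component and the folding step: one must check that after pushing mixes to just-above-\RuBox\ positions, the resulting derivation is still \emph{regular} — that finitely many distinct sequents-with-annotations occur, so that successful repeats genuinely exist and the infinite unfolding refolds into a finite cyclic proof. This is where focussedness and Lemma~\ref{lem.unimporantCompDescendent} do the real work: since no reduction inside the component ever puts a component-descendant of a mix-formula into focus, every sequent arising on a path through the component differs from a sequent of $\pi$ only in its multiset structure (extra unfocussed mix-formula copies), which is bounded, so the path-condition of the original proof is inherited and a successful repeat is reached below every mix before the sequents can diverge. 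Verifying this boundedness and that the induced repeats are successful (no \RuF, some \RuBox, focus everywhere) is the crux; the mix-reductions themselves are routine finitary cut-elimination.
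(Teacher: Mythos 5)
Your overall strategy matches the paper's: pass to \FocusS, take the infinite unfolding $\unf[r]{\pi}$ of the root component, push the mixes up past modal rules using Lemma~\ref{lem.unimportantFinitaryCE}, use Lemma~\ref{lem.unimporantCompDescendent} to argue that focus is preserved, and close off the now-unfolded component by identifying successful repeats below all mixes. However, two points in your writeup contain genuine gaps.

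First, your ``Step~1: reduce to a single mix at the root of the component'' is both unnecessary and not really achievable. The hypotheses put all rank-$n$ cuts in the \emph{root-cluster}, i.e.\ on cycles, so the moment you unfold the component you again have infinitely many copies of every mix, one mix can never be handled in isolation from the others, and the cut-reduction that pushes one mix up duplicates the subproofs containing the remaining mixes. Your picture of ``a single mix with only local derivations above it until a \RuBox'' does not describe $\unf[r]{\pi}$. The paper dispenses with any such reduction: it simultaneously pushes \emph{all} mixes in the root component (including those of rank $<n$) above the $k$-th modal level, by an outer induction on $k$ and an inner induction on the number of mixes remaining in the $(k{+}1)$-fragment, with Lemma~\ref{lem.unimportantFinitaryCE} as the inner step.

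Second, the ``crux'' you identify — that after enough pushing a successful repeat must be found below every mix — is not established by your sketch. Your claim that ``every sequent arising on a path through the component differs from a sequent of $\pi$ only in its multiset structure (extra unfocussed mix-formula copies), which is bounded'' is not correct: the multiplicities absorbed by the mix can grow without bound along a long branch (this is precisely why \RuMix is needed and why the repeat in the paper is closed with \RuWeak \emph{and} \RuContr, not just weakening). What does work is the paper's pigeonhole on sequents up to $\seteq$: above the mixes every conclusion of a modal rule consists of modal formulas from $\Clos(\Gamma)$ with two possible annotations each, so there are at most $4^m$ distinct such sequents up to $\seteq$; taking $k = 4^m+1$ and using Claim~1's guarantee that every node in the root component carries a formula in focus (with no intervening \RuF and at least one \RuBox), one finds on each branch a successful repeat, and König's Lemma gives finiteness of the result. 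You need some version of this counting argument, together with the observation that the discharge leaf is closed by a $\RuWeak,\RuContr$ step rather than by exact sequent equality.
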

\begin{proof}
	By replacing cuts with mixes we let $\pi$ be a \FocusS without renaming it.
	Let $\Gamma$ be the sequent at the root $r$ of $\pi$ and let $\unf[r]{\pi}$ be the infinite unfolding of $\comp(r)$ of $\pi$. Without loss of generality we may assume that between any node in the root-component of $\unf[r]{\pi}$ and any \RuDischarge[] rule there is a modal node, otherwise we can unfold \RuDischarge[] rules.
	
	For a derivation $\rho$ let $\RC(\rho)$ be the sub-derivation of $\rho$ up to modal nodes outside the root component. Note that $\RC(\rho)$ is infinite. We define the \emph{$k$-fragment of $\rho$} to be the sub-derivation of $\RC(\rho)$ up to the $k$-th application of a modal rule. 
	
	We want to push the cuts (also the ones with cut-rank $< n$) occurring in $\RC(\unf[r]{\pi})$ upwards until the mix-free subproof of $\RC(\unf[r]{\pi})$ is big enough. This is formalized in the following claim.

	\claim{1}
	For every $k$ we can construct a \FocusS derivation $\pi_k$ of $\Gamma$ without open assumptions, where all mixes have mix-rank $\leq n$ and are outside of the $k$-fragment of $\pi_k$. Additionally, all nodes in the root-component of $\pi_k$ contain a formula in focus.
	
	\claimproof{1}
	We prove the claim by induction on $k$. For $k=0$ the derivation $\unf[r]{\pi}$ satisfies the requirements. Let $\pi_k$ be a derivation satisfying the requirements of the claim for $k \geq 0$. We construct the desired derivation $\pi_{k+1}$ by an inner induction on the number $l$ of \RuMix rules in the $(k+1)$-fragment of $\pi_k$. If $l=0$, then $\pi_k$ already satisfies the requirements for $k+1$ and we are done. 
	If $l > 0$ let $\sfC$ be an occurrence of a \RuMix rule in the $k+1$-fragment of $\pi_k$ such that there is no \RuMix rule above $\sfC$ in the $(k+1)$-fragment of $\pi_k$. Let $\rho$ be the sub-derivation of the $(k+1)$-fragment of $\pi_k$ rooted at the conclusion of $\sfC$ and let $\calA$ be the set of assumptions of $\rho$. Then $\rho$ satisfies the assumption of Lemma \ref{lem.unimportantFinitaryCE} and applying the lemma yields a focussed proof $\rho'$, where the premisses of all \RuMix rules are open assumptions in $\calA$. We can replace $\rho$ by $\rho'$ in $\pi_k$ and apply the following mix-reduction for all \RuMix rules in $\rho'$:
	\[
	\begin{prooftree}
		\hypo{\phi,\Sigma}
		\infer1[\RuBox]{\lbox \phi, \ldia \Sigma}
		\hypo{\mybar{\phi},...,\mybar{\phi}, \gamma, \Sigma}
		\infer1[\RuBox]{\ldia \mybar{\phi},...,\ldia \mybar{\phi},\lbox \gamma, \ldia \Sigma}
		\infer2[\RuMix]{\lbox \gamma, \ldia\Sigma}
	\end{prooftree}	 
	\qquad \longrightarrow \qquad
	\begin{prooftree}
		\hypo{\phi, \Sigma}
		\hypo{\mybar{\phi},...,\mybar{\phi}, \gamma,  \Sigma}
		\infer2[\RuMix]{\gamma, \Sigma}
		\infer1[\RuBox]{\lbox \gamma, \ldia \Sigma}
	\end{prooftree}
	\]
	This results in a derivation as desired with $l-1$ many occurrences of \RuMix rules in its $(k+1)$-fragment. We can thus apply the inner induction hypothesis to obtain a proof $\pi_{k+1}$ of $\Gamma$ without open assumptions, where all mixes have mix-rank $\leq n$ and all mixes are outside of the $k$-fragment of $\pi_{k+1}$. Because the proof $\rho'$ is focussed, all nodes in the root-component of $\pi_{k+1}$ contain a formula in focus.
	\claimproofend
	
	All cuts of rank $n$, that where pushed outside of $A$, are important: If the cut is pushed out of the cluster all formulas in the conclusion become out of focus (as $\pi$ is minimally focused) and no cut reduction puts formulas in focus again. Because all cut-reductions defined in Appendix \ref{app.CutReductions} preserve the cut-rank, all cuts have cut-rank $\leq n$.
	
	Let $m$ be the number of modal formulas in $\Clos(\Gamma)$ and let $k \isdef 4^m +1$. Let $\pi_k$ be given as in Claim 1. The $k$-fragment of $\pi_k$ is mix-free, therefore all conclusions of modal rules in the $k$-fragment of $\pi_k$ consist of modal formulas in $\Clos(\Gamma)$, where every formula could occur in focus or out of focus.
	Thus, conclusions of such modal rules are labelled with at most $4^m$ many distinct sequents up to $\seteq$. Hence on each branch in the root-component of the $k$-fragment of $\pi_k$ there are nodes $v$ and $l$ such that $v$ and $l$ are labelled with the same sequent up to $\seteq$ and such that on the path from $v$ to $l$ a modal rule is applied. As all nodes in the root-component of $\pi_k$ contain a formula in focus, this implies that the path from $v$ to $l$ is successful.
	
	For each such branch choose the root-most such nodes $v$ and $l$, insert a \RuDischarge[\dx] rule at $v$ with fresh discharge token $\dx$ and replace $l$ by
	\begin{align*}
		\begin{prooftree}
			\hypo{[\sfS_v]^{\dx}}
			\infer1[\RuWeak,\RuContr]{\sfS_l}
		\end{prooftree}
	\end{align*}
	Using König's Lemma it follows that this results in a finite \FocusS proof $\pi'$.  All mixes in $\pi'$ are outside of the root-component and are thus important. Hence, the proof $\pi'$ has mix-rank $\leq n$, where all mixes are important. By replacing all mixes in $\pi'$ by cuts and contractions we obtain the desired \Focus proof.
\end{proof}
	
	\section{Elimination of contractions}\label{sec.contractions}

It is well-known that contractions pose one of the major difficulties to cut-elimination. In our case, in the elimination of important cuts, cut-reductions of the multicut with contractions may double the size of the multicut. This ruins our termination proof as we rely on a bound on the size of the multicut. We thus first opt to eliminate contractions from cut-free proofs and aim to prove the following lemma.

\begin{lemma}\label{lem.contrElim}
	Let $\pi$ be a cut-free \Focus proof. Then there is a cut-free, contraction-free \Focus proof $\pi'$ of the same sequent.
\end{lemma}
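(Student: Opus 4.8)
The plan is to turn $\pi$ into $\pi'$ by a single top‑down rebuild that follows the rule applications of $\pi$ while absorbing contractions, exploiting that a cut‑free proof contains only finitely many formulas. First I would record the standard subformula property: every formula occurring in a sequent of a cut‑free \Focus proof of a sequent $\Gamma$ lies in $\Clos(\Gamma^-)$, which is finite; hence only finitely many annotated formulas occur in $\pi$, and I fix the finite set $X$ of these, so that every sequent of $\pi$ belongs to $\calM_X$. The relevant well‑quasi‑order is then $\sfM_X=(\calM_X,\subseteq,\norm_\infty)$ from Section~\ref{sec.sub.wqo}, which is a wqo by Dickson's Lemma (Lemma~\ref{lem.wellQuasiOrderMultiset}) and whose controlled bad sequences have bounded length by Lemma~\ref{lem.wqoMaximalLength}. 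The argument proceeds by induction on the depth $\depth(\pi)$: when $\depth(\pi)=0$ there are no proper clusters, $\calT_\pi$ is a finite tree whose leaves are all axioms, and ordinary finitary contraction elimination (pushing contractions towards the axioms) applies; for the inductive step I treat the root‑cluster of $\pi$, delegating everything strictly above it to the induction hypothesis. To make the grafting work the statement I would actually prove by induction is the slightly stronger one: for every cut‑free \Focus proof $\rho$ of $\Gamma$ and every sequent $\Gamma'$ with $\Gamma'\seteq\Gamma$ there is a cut‑free contraction‑free proof of $\Gamma'$; the Lemma is the special case $\Gamma'=\Gamma$.

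For the rebuild I would maintain at each node $v'$ of $\pi'$ under construction a pointer to a node $v$ of $\pi$ and a \emph{target} sequent $\Sigma_{v'}$ with $\Sigma_{v'}\seteq\Sigma_v$, the root getting the prescribed target $\Gamma'$. At a \RuContr node of $\pi$ the rule is simply discarded and the pointer advances, which is harmless since premiss and conclusion of \RuContr share the underlying set. At any other rule of $\pi$ I replay the corresponding rule on the target, prefixing it by a bounded number of \RuWeak applications so that the principal formula occurs in the target with exactly the multiplicity the rule consumes — in particular exactly once directly below a \RuBox, since a box rule produces a single boxed formula, and there no topping‑up is ever needed — and recurse into the premiss(es) with the induced targets; at an axiom of $\pi$ I derive the axiom and weaken up to the target; when the pointer leaves the root‑cluster at a node $v$ I invoke the induction hypothesis on $\pi_v$ (of strictly smaller depth) with the current target. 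One checks routinely that all targets stay $\seteq$ the corresponding sequents of $\pi$, that neither \RuCut nor \RuContr is ever introduced, and that the only rules added relative to $\pi$ are instances of \RuWeak.

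The delicate point is the behaviour at \RuDischarge nodes of the root‑cluster, where the naive construction loops. Fix such a node $c$ and let $\Sigma^{(0)},\Sigma^{(1)},\dots$ be the sequence of targets with which the successive copies $d_0,d_1,\dots$ of $c$ are reached along a branch; each $\Sigma^{(i)}\in\calM_X$ and $\Sigma^{(i)}\seteq\Sigma_c$. As soon as some earlier target is contained in the current one, say $\Sigma^{(i)}\subseteq\Sigma^{(j)}$ with $i<j$, I stop unfolding: I splice a \RuDischarge rule with a fresh token in just below $d_i$, and at $d_j$ I place a block of \RuWeak applications deriving $\Sigma^{(j)}$ from $\Sigma^{(i)}$ whose top leaf, labelled $\Sigma^{(i)}$, is discharged to the new companion. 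The resulting repeat path is a concatenation of copies of one or more traversals of a repeat path of $\pi$ followed by that weakening block; it still passes through \RuBox, contains no \RuF, and since each of its sequents inherits a formula in focus from the corresponding successful repeat path of $\pi$ (respectively shares the underlying set of $\Sigma_c$), it is successful, so the new repeat leaf is discharged. So long as we do not stop, $(\Sigma^{(i)})_i$ is by construction a bad sequence over $\sfM_X$, and it is \emph{controlled}: one traversal of the root‑cluster increases multiplicities in the target by at most the bounded number of inserted \RuWeak applications (bounded in terms of the finitary interior of the cluster, whose contraction‑free form is already fixed by the induction hypothesis), so $\norm[\Sigma^{(i)}]_\infty$ grows at most linearly and $(\Sigma^{(i)})_i$ is $(f,t)$-controlled for $f,t$ depending only on $\pi$. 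By Lemma~\ref{lem.wqoMaximalLength} it is finite, bounded by $\lenf{\sfM_X}{f}(t)$; since the root‑cluster has finitely many \RuDischarge nodes and the segments between consecutive ones are bounded, every branch of $\calT_{\pi'}$ is finite, whence by König's Lemma $\pi'$ is finite. Together with the verification above that every leaf of $\pi'$ is closed, $\pi'$ is a cut‑free, contraction‑free \Focus proof of $\Gamma'$.

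I expect the main obstacle to be exactly this termination of the cycle‑unfolding: nothing a priori bounds how far the root‑cluster must be unrolled before a target recurs with inclusion, and the crux is that the data tracked at the re‑entry points are multisets over a fixed finite set whose multiplicities grow only in a controlled way, so that the nwqo $\sfM_X$ together with the length function $\lenf{\sfM_X}{f}$ (Lemma~\ref{lem.wqoMaximalLength}) applies — and here the \emph{bound}, not merely the wqo property, is what is needed, since the control function at one level of the cluster‑depth induction is built from the length function obtained at the level below. A secondary, more bookkeeping‑heavy difficulty is the multiplicity management when replaying $\pi$'s rules: keeping all targets $\seteq$ the original sequents, inserting enough copies of the principal formula for each rule to apply, making sure no contraction re‑enters below \RuBox, and confining the added \RuWeak blocks so that no successful repeat path is spoiled.
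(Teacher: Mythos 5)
Your use of the normed wqo $\sfM_X=(\calM_X,\subseteq,\norm_\infty)$ together with $\lenf{\sfM_X}{f}$ to bound how far cycles must be unrolled, and your stopping criterion $\Sigma^{(i)}\subseteq\Sigma^{(j)}$ closed off by a block of \RuWeak, are exactly the ideas used in the paper for contractions in proper clusters (Lemma~\ref{lem.contrRootCluster}), and they are indeed the hard part of the termination argument. However, the step you dismiss as routine --- ``one checks routinely that all targets stay $\seteq$ the corresponding sequents of $\pi$'' --- is false for the rebuild as you describe it, and repairing it is where the paper deploys its other main technical ingredient, which your proposal does not have.

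Here is a concrete failure. Suppose $\pi$ contains a contraction whose principal formula is consumed by the very next rule:
\[
\begin{prooftree}
  \hypo{\phi^a,\psi^a,(\phi\lor\psi)^a,\Delta}
  \infer1[\RuOr]{(\phi\lor\psi)^a,(\phi\lor\psi)^a,\Delta}
  \infer1[\RuContr]{(\phi\lor\psi)^a,\Delta}
\end{prooftree}
\]
with $(\phi\lor\psi)^a\notin\Delta_\Set$. You discard the \RuContr, so the target $\Gamma'$ at the \RuOr-node still has underlying set $\{(\phi\lor\psi)^a\}\cup\Delta_\Set$. You weaken the principal formula down to the multiplicity the rule consumes, i.e.\ to $1$, and replay \RuOr. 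The resulting premiss target then has underlying set $\{\phi^a,\psi^a\}\cup\Delta_\Set$, whereas the actual \RuOr-premiss in $\pi$ has underlying set $\{\phi^a,\psi^a,(\phi\lor\psi)^a\}\cup\Delta_\Set$: the spare copy of $(\phi\lor\psi)^a$ that the discarded contraction was responsible for has vanished, and $\seteq$ breaks. Weakening cannot help, since going upwards it only \emph{removes} formulas; and ``inserting enough copies of the principal formula'', which you mention as a bookkeeping concern, would require contraction, which is precisely what you are trying to avoid. The same obstruction arises for \RuAnd and \RuEta, and it is not a corner case: it occurs exactly whenever a contraction on a compound formula feeds its next introduction, which is the reason contractions are hard to eliminate in the first place.

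The paper does not discard contractions and patch with weakenings. Instead it pushes contractions upwards, and when a contraction meets a rule in which its formula is principal it uses \emph{strong invertibility} of \RuOr, \RuAnd, \RuEta in \FocusM (Lemma~\ref{lem.InvertibleRules}) to invert that rule, contract on the subformulas, and re-introduce the connective with multiplicity one, keeping the proof honest at every step. This is organised as a two-stage elimination: Lemma~\ref{lem.contrTrivialCluster} pushes all contractions into proper clusters using invertibility; Lemma~\ref{lem.contrRootCluster} then eliminates contractions cluster by cluster with the wqo argument you reproduce, where Lemma~\ref{lem.repeatPathContainsMod} guarantees that the newly discharged repeat paths pass through \RuBox. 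Your one-pass rebuild tries to collapse both stages, and it is the first stage --- inverting the boolean and fixpoint rules so that the spare copies of a contracted formula can be redistributed rather than lost --- that your argument is missing.
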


The elimination of contractions shares similarities with the elimination of cuts, in the sense that we treat contractions in trivial clusters differently than contractions in proper cluster. 
In the first step of our procedure we push all contractions in trivial clusters upwards until all contractions are in proper clusters. For this to work we need to be able to swap occurrences of contractions with the rules \RuOr, \RuAnd and \RuEta; in order to do so we first show that those rules are invertible in Subsection \ref{subsec.invertibleRules}. 
Contractions in proper cluster are eliminated in a similar way as unimportant cuts: We push the contraction upwards until we can find successful repeats below them. The proof of termination of this process is more complicated, as we need to find repeats without introducing new contractions. For this purpose we refer to the results on well-quasi-orders from Section \ref{sec.sub.wqo}.

Recall that the depth of a node $v$ in a proof $\pi$ is the maximal number of proper clusters on a path starting from $v$.
\begin{definition}
	We define the \emph{shallow-depth} of a node $v$ in a proof $\pi$ as the maximal length of a path in $\pi$ starting at $v$ and not containing nodes in proper clusters, where the shallow-depth of $v$ equals $0$ if $v$ is in a proper cluster. 
	The \emph{\RuContr-free shallow depth} of $v$ is defined as the shallow-depth without counting nodes labelled with \RuContr.
	
	Let $\sfC$ be an occurrence of a \RuContr rule with conclusion $v$ in a \Focus proof $\pi$. The \emph{depth} and \emph{shallow-depth} of $\sfC$ are defined as the depth and shallow-depth of $v$, respectively. The \emph{contraction-depth} of a proof $\pi$ is defined as the maximal depth of  an occurrence of a \RuContr rule in $\pi$.
\end{definition}

\subsection{Strongly invertible rules}\label{subsec.invertibleRules}

Let $\pi$ be a \Focus proof. We call $\pi$ a \FocusM proof if all occurrences of contractions in $\pi$ are in \emph{proper clusters}. This definition might look weird at first glance, but recall that our aim is to push contractions in trivial clusters upwards. This is only possible if we can invert the rules \RuOr, \RuAnd and \RuEta, but the invertibility of those rules on the other hand only works if there are no contractions in trivial clusters further up in the proof tree. Thus we do not allow such contractions in \FocusM proofs.

We say that a component $S$ in a \Focus proof $\pi$ is \emph{focused}, if every node in $S$ has a formula in focus.

\begin{definition}
	Let $\begin{prooftree}
		\hypo{\Sigma_1}
		\hypo{\cdots}
		\hypo{\Sigma_n}
		\infer3[\Ru]{\Sigma}
	\end{prooftree}$ be a rule in Figure \ref{fig.RulesFocus}. We call \Ru \emph{strongly invertible} in \FocusM, if every \FocusM proof $\pi$ of $\Sigma$ can be transformed, for every $i = 1,...,n$, to a \FocusM proof $\pi_i$ of $\Sigma_i$ with the same depth, shallow depth and such that the root-component of $\pi_i$ is focused if the root-component of $\pi$ is focused.
\end{definition}

\begin{lemma}\label{lem.InvertibleRules}
	The rules \RuOr, \RuAnd and \RuEta are strongly invertible in \FocusM.
\end{lemma}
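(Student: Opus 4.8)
The plan is to prove all three cases by a single global rule-permutation argument. Fix the rule $\Ru\in\{\RuOr,\RuAnd,\RuEta\}$, write its designated principal formula as $\chi^a$ (so $\chi$ is $\phi\lor\psi$, $\phi\land\psi$, or $\eta x.\psi$), and let $\pi$ be a \FocusM proof of the conclusion $\Sigma$. Starting from the occurrence $\chi^a$ in the root sequent, I would trace all of its immediate-descendant occurrences of $\chi$ upwards through $\cyclicPT$. Since $\chi$ is boolean or a fixpoint formula, a traced occurrence can be principal only in the homonymous rule $\Ru$; otherwise it can only change annotation (via \RuF, \RuU), be weakened (via \RuWeak), or be duplicated by a contraction — which, in a \FocusM proof, occurs only inside a proper cluster. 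To obtain $\pi_i$ I replace, uniformly along this trace, every occurrence $\chi^{b}$ by its decomposition: $\phi^{b},\psi^{b}$ for \RuOr; $\phi^{b}$ (for $\pi_1$) or $\psi^{b}$ (for $\pi_2$) for \RuAnd; $\psi[\nu x.\psi/x]^{b}$ for \RuNu; and $\psi[\mu x.\psi/x]^{u}$ for \RuMu — in the last case the trace carries only the annotation $u$ anyway, since by item~5 of Proposition~\ref{prop.aFree} a $\mu$-formula is magenta, hence not navy, and so is never in focus in a minimally focussed proof. The rules meeting a traced occurrence are adjusted accordingly: a \RuWeak becomes one or two \RuWeak steps, a \RuF or \RuU step acts on the decomposition with the same annotation, a contraction becomes one or two contractions inside the same proper cluster, and at the homonymous rule $\Ru$ the conclusion now coincides with the relevant premiss, so I contract that edge (for \RuOr, \RuMu, \RuNu) or retain the $i$-th premiss-subproof and discard the other one (for \RuAnd). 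All remaining nodes are untouched; axiom leaves are unaffected since $\chi$ is never a literal.

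Next I would check that $\pi_i$ is again a \FocusM proof with the required invariants. Closedness of leaves: axiom leaves carry no traced occurrence, and a companion and its repeat leaf, having equal sequents in $\pi$, still have equal sequents in $\pi_i$ because the transformation treats traced occurrences uniformly. Successful repeat paths remain successful: no \RuF step is introduced and no \RuBox is deleted, and the decomposition of a focussed occurrence $\chi^f$ still contains a focussed formula (the \RuMu case being vacuous as noted). The only contractions created lie in proper clusters, so $\pi_i$ is a \FocusM proof. If $\comp(r)$ is focussed in $\pi$, then no focussed formula outside the trace is ever deleted and a focussed traced occurrence is replaced by a decomposition again containing a focussed formula, so $\comp(r)$ of $\pi_i$ is focussed. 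Finally $\pi_i$ is finite, since its node set differs from that of $\pi$ only by deleting some unary-rule nodes on the trace, possibly discarding one \RuAnd-subtree, and inserting finitely many \RuWeak or \RuContr nodes; and one verifies that this surgery — which neither creates back edges nor merges clusters — preserves the depth and shallow-depth of every surviving node, the points needing care being the inserted \RuWeak/\RuContr and, for \RuAnd, the discarded subtree.

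The main obstacle will be the bookkeeping around cycles and the decomposition point. When the trace enters a proper cluster, a contraction can split it into several traced occurrences of $\chi$ lying on different branches of $\calT_\pi$; all of these, together with the occurrences reached around the corresponding back edges, must be decomposed coherently, so one first has to make precise that ``traced occurrence'' is a well-defined set of occurrences in $\cyclicPT$, stable under the back-edge identifications. The delicate sub-case is a companion $c$ lying on the trace whose repeat leaf $l$ lies above the homonymous rule $\Ru$ that decomposes the trace: then $l$ must be relabelled exactly as $c$, which forces the trace to be followed consistently through $l$ and around its back edge, and one must confirm that the path from $c$ to $l$ has neither been shortened below a \RuBox nor had a \RuF inserted. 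For \RuAnd there is the further point that discarding a premiss-subtree removes whatever clusters it carried, so one must check — by locating, for each such cluster, a repeat leaf inside the discarded subtree — that the cluster poset of the retained part, and hence its depth and shallow-depth, is genuinely preserved. Once this is settled, strong invertibility of \RuOr, \RuAnd and \RuEta follows from the one uniform construction.
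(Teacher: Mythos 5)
Your trace-and-relabel strategy breaks down precisely at the point you flag as ``delicate,'' but it is more than bookkeeping: it is a genuine obstacle. Suppose $\chi=\phi\land\psi$ is principal in an $\RuAnd$ \emph{inside} a proper cluster, and both premiss-subtrees eventually reach discharged leaves $[\phi\land\psi^a,\Sigma]^\dx$ pointing back to a companion below the $\RuAnd$. The trace of $\chi$ from that companion is cut off at the $\RuAnd$: its immediate descendant in the left premiss is $\phi^b$, not $\phi\land\psi$, so the occurrence $\phi\land\psi^a$ that reappears at the discharged leaf is a \emph{fresh} occurrence (re-created inside $\pi_l$, e.g.\ by an unfolding) and is \emph{not} on your trace. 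Your relabelling therefore rewrites the companion to $\phi^a,\Sigma$ but leaves the leaf as $\phi\land\psi^a,\Sigma$, destroying the repeat-leaf condition. Moreover, your proposed fix for $\RuAnd$ — ``retain the $i$-th premiss-subproof and discard the other'' — makes this worse: the discarded premiss-subtree may itself carry back edges to the companion, and the retained one still carries the mismatched leaf. The same phenomenon afflicts the $\RuOr$ and $\RuEta$ cases; $\RuAnd$ is just the most glaring.

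The paper's proof sidesteps the issue by not tracing or discarding at all. It proceeds by an outer induction on depth and an inner induction on shallow-depth. Off the root-cluster (shallow-depth $> 0$) it permutes the inversion past each rule, using that no contraction can occur there. When the root-cluster is proper (shallow-depth $= 0$), it isolates the portion $\pi_0$ of the cluster below any $\RuAnd$ with $\chi$ principal and replaces $\chi$ there by $\phi$ (or $\psi$); then, crucially, it \emph{keeps} the $\RuAnd$ and \emph{duplicates} the cluster with a nested discharge token: the old discharged leaf $[\phi\land\psi^a,\Sigma]^\dx$ is replaced by a genuine $\RuAnd$ node whose left premiss is a leaf discharged by the outer token against the new companion $\phi^a,\Sigma$, and whose right premiss is a fresh inner subderivation of $\psi^a,\Sigma$ with its own token $\dy$ (built from $\pi_0^\psi$ and $\pi_r$), again ending in $\RuAnd$ nodes whose premisses are discharged against the two companions. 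One then checks that the two discharge rules land in a \emph{single} proper cluster, so depth and shallow-depth are preserved; nodes of $\pi_0$ that exit the root-cluster are handled by the depth induction. This duplication-with-fresh-tokens is the ingredient your proposal is missing, and it is where the real work of the lemma lies.
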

\begin{proof}
	We only prove that \RuAnd is strongly invertible, the proofs for the other rules are similar.
	Let $\pi$ be a proof of $\phi \land \psi^a, \Sigma$ with depth $m$ and shallow-depth $k$.
	The proof goes by induction on $m$ with an inner induction on $k$.  
	
	If $k>0$ we proceed with a case distinction on the applied rule \Ru at the root of $\pi$. Note that $\Ru \neq \RuContr$ because $\pi$ is a \FocusM proof. If \Ru = \RuAnd with principal formula $\phi \land \psi^a$, then the proofs rooted at the premisses of \Ru are the desired proofs. If \Ru = \RuWeak the transformation is obvious. If any other rule is applied, we transform the proof as follows, where $\pi_1',...,\pi_n'$ are obtained from respectively $\pi_1,...,\pi_n$ by applying the induction hypothesis.
	\[\begin{prooftree}
		\hypo{\pi_1}
		\infer[no rule]1{\phi \land \psi^b, \Sigma_1}
		\hypo{\cdots}
		\hypo{\pi_n}
		\infer[no rule]1{\phi \land \psi^b, \Sigma_n}
		\infer3[\Ru]{\phi \land \psi^a, \Sigma}
	\end{prooftree}
	\qquad	\longrightarrow \qquad
	\begin{prooftree}
		\hypo{\pi_1'}
		\infer[no rule]1{\phi^b, \psi^b, \Sigma_1}
		\hypo{\cdots}
		\hypo{\pi_n'}
		\infer[no rule]1{\phi^b, \psi^b, \Sigma_n}
		\infer3[\Ru]{\phi^a,\psi^a, \Sigma}
	\end{prooftree}
\]

Now assume that $k = 0$, meaning that the root-cluster is proper, then $\pi$ is the following proof on the left, where $\pi_0$ is the sub-derivation of $\pi$ up to (i) \RuAnd rules with $\phi \land \psi^a$ principal and (ii) nodes outside the root-cluster. We transform $\pi$ to obtain a proof of $\phi^a, \Sigma$ as follows, where $\pi_0^{\phi}$ is obtained from $\pi_0$ by replacing $\phi \land \psi^a$ with $\phi^a$ at every node, analogously for $\pi_0^{\psi}$.
	\[\begin{prooftree}
		\hypo{[\phi \land \psi^a, \Sigma]^\dx}
		\infer[no rule]1{\vdots}
		\infer[no rule]1{\pi_l}
		\infer[no rule]1{\vdots}
		\infer[no rule]1{\phi^b, \Delta}
		\hypo{[\phi \land \psi^a, \Sigma]^\dx}
		\infer[no rule]1{\vdots}
		\infer[no rule]1{\pi_r}
		\infer[no rule]1{\vdots}
		\infer[no rule]1{\psi^b, \Delta}
		\infer2[\RuAnd]{\phi \land \psi^b, \Delta}
		\infer[no rule]1{\vdots}
		\infer[no rule]1{\pi_0}
		\infer[no rule]1{\vdots}
		\infer[no rule]1{\phi \land \psi^a, \Sigma}
		\infer1[\RuDischarge]{\phi \land \psi^a, \Sigma}
	\end{prooftree}
	\qquad	\longrightarrow \qquad
	\begin{prooftree}
		\hypo{[\phi^a, \Sigma]^\dx}
		\hypo{[\phi^a, \Sigma]^\dx}
		\hypo{[\psi^a, \Sigma]^\dy}
		\infer2[\RuAnd]{\phi \land \psi^a, \Sigma}
		\infer[no rule]1{\vdots}
		\infer[no rule]1{\pi_r}
		\infer[no rule]1{\vdots}
		\infer[no rule]1{\psi^b, \Delta}
		\infer[no rule]1{\vdots}
		\infer[no rule]1{\pi_0^{\psi}}
		\infer[no rule]1{\vdots}
		\infer[no rule]1{\psi^a, \Sigma}
		\infer[]1[\RuDischarge[\dy]]{\psi^a, \Sigma}
		\infer2[\RuAnd]{\phi \land \psi^a, \Sigma}
		\infer[no rule]1{\vdots}
		\infer[no rule]1{\pi_l}
		\infer[no rule]1{\vdots}
		\infer[no rule]1{\phi^b, \Delta}
		\infer[no rule]1{\vdots}
		\infer[no rule]1{\pi_0^{\phi}}
		\infer[no rule]1{\vdots}
		\infer[no rule]1{\phi^a, \Sigma}
		\infer1[\RuDischarge]{\phi^a, \Sigma}
	\end{prooftree}
\]
Note that for any node outside the root-cluster labelled with $\phi \land \psi^a, \Gamma$, we inductively obtain proofs of $\phi^a,\Gamma$ and of $\psi^a, \Gamma$ of the same depth. Therefore the above transformation yields a proof of $\phi^a, \Sigma$ of depth $m$ and shallow-depth $0$. An analogous transformation gives a proof of $\psi^a, \Sigma$.

It is clear that in all cases the root-components are focused, if the root-component of $\pi$ is focused. We thus have shown that \RuAnd is strongly invertible.
\end{proof}

\subsection{Contractions in trivial clusters}

\begin{lemma}\label{lem.contrTrivialCluster}
	Let $\pi$ be a cut-free \Focus proof of contraction-depth $m$. Then $\pi$ can be transformed to a cut-free \Focus proof $\pi'$ of contraction-depth $\leq m$ of the same sequent, where all contractions are in proper clusters.
\end{lemma}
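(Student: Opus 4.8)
The plan is to eliminate contractions located in trivial clusters by pushing them upwards through the proof until they either disappear (when they reach an axiom or a weakening that already supplies the duplicated formula) or enter a proper cluster. The induction should be organised so that the contraction-depth never increases: a contraction in a trivial cluster stays in a trivial cluster (or vanishes) as it moves towards the leaves, because along a path inside a trivial cluster we never cross into a proper cluster going upwards unless we were about to enter one anyway. The natural measure for the outer induction is the contraction-depth $m$, and within a fixed depth one argues by induction on the \emph{$\RuContr$-free shallow depth} of the topmost offending contraction, i.e. the distance (ignoring contraction nodes) from that contraction to the nearest proper cluster or leaf above it.

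First I would locate a $\RuContr$ occurrence $\sfC$ in a trivial cluster that is \emph{uppermost} among such occurrences, so that everything strictly above $\sfC$ and still in a trivial cluster is contraction-free in the relevant sense. Then I would perform a case analysis on the rule $\Ru$ applied immediately above $\sfC$ (i.e.\ the rule whose conclusion is the premiss $\phi^a,\phi^a,\Sigma$ of $\sfC$). If $\Ru$ is $\RuWeak$ introducing one of the two copies of $\phi^a$, the contraction is absorbed into the weakening and disappears. If $\Ru$ is an axiom, or $\RuBox$, or a $\RuF/\RuU$ rule, or a rule whose principal formula is not one of the contracted copies, then $\sfC$ permutes above $\Ru$ in the standard way, possibly being duplicated across the premisses of a branching rule ($\RuAnd$, $\RuCut$—but there are no cuts here); in each case the new contractions sit strictly higher and still in a trivial cluster, so the inner induction hypothesis applies. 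The only genuinely delicate case is when $\Ru$ is $\RuOr$, $\RuAnd$, or $\RuEta$ (i.e.\ $\RuMu$ or $\RuNu$) with one of the contracted copies of $\phi^a$ as principal formula: here I cannot simply permute, because the two copies are at different ``stages'' of decomposition. This is exactly what Lemma~\ref{lem.InvertibleRules} is for — I invert $\Ru$ on the subproof above to bring both copies of $\phi^a$ into the same shape, then apply the contraction to the now-identical subformulas, and finally re-apply $\Ru$ below. Strong invertibility guarantees that depth and shallow-depth are preserved and that focused root-components stay focused, so neither the contraction-depth nor the soundness condition is damaged.

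The one point requiring care is that $\RuEta$ unfolding can make $\phi[\eta x.\phi/x]$ syntactically larger than $\eta x.\phi$, so ``push $\sfC$ above $\RuEta$ and recurse'' is not obviously terminating by a size measure — but the measure here is not formula size, it is the $\RuContr$-free shallow depth of $\sfC$, which strictly decreases each time $\sfC$ moves past one rule towards the leaves, and is bounded because above $\sfC$ (within the trivial cluster) there are no contractions and hence only finitely many rules before hitting a proper cluster or a leaf. When $\sfC$ reaches the boundary of a proper cluster it is simply left there: it now counts as a contraction inside a proper cluster, its depth is no greater than $m$, and it is no longer our concern. Iterating over all $\RuContr$ occurrences in trivial clusters — of which there are finitely many at each stage, and each step either removes one or strictly lowers its shallow depth — terminates, yielding a cut-free \Focus proof $\pi'$ of the same sequent with contraction-depth $\le m$ and all contractions in proper clusters, as required. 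The main obstacle, as indicated, is the $\RuEta$/$\RuOr$/$\RuAnd$ principal case, and it is handled precisely by the strong invertibility established in Lemma~\ref{lem.InvertibleRules}; the rest is routine rule permutation bookkeeping, which I would relegate to a case list.
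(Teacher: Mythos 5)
Your proposal takes essentially the same approach as the paper: locate an uppermost $\RuContr$ in a trivial cluster (so the subproof above its premiss is a $\FocusM$ proof and Lemma~\ref{lem.InvertibleRules} applies), case-split on the rule above it, permute in the non-principal cases, and invoke strong invertibility of $\RuOr$, $\RuAnd$, $\RuEta$ in the principal cases. One point worth tightening is the termination measure. You propose an inner induction on ``the $\RuContr$-free shallow depth of the topmost offending contraction,'' and then claim that each step ``either removes one or strictly lowers its shallow depth.'' That is not literally true: the $\RuOr$ (or $\RuAnd$) principal reduction replaces a single contraction with two, both at strictly smaller shallow depth. A single scalar measure does not obviously decrease. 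The paper therefore uses the Dershowitz--Manna ordering on the \emph{multiset} of $\RuContr$-free shallow depths of all contractions in trivial clusters; under that ordering, replacing one element by any finite collection of strictly smaller ones is a decrease, and the $\RuDischarge$ case (moving a contraction into the proper cluster) is a decrease as well. Your informal description (``duplicated across premisses'') shows you see the phenomenon, but the write-up should replace the scalar measure by the multiset one to make the induction watertight. The proposed ``outer induction on contraction-depth $m$'' is also not needed for this lemma — no reduction here changes $m$; that outer loop belongs to Lemma~\ref{lem.contrElim}, where Lemmas~\ref{lem.contrTrivialCluster} and~\ref{lem.contrRootCluster} are alternated.
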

\begin{proof}
	Let $\sfC_1,...,\sfC_n$ be the occurrences of contraction rules in trivial clusters in $\pi$ with respective \RuContr-free shallow depths $d_1,...,d_n$. We prove the lemma by induction on the Dershowitz–Manna ordering on the multiset $\{d_1,...,d_n\}$ induced by the natural ordering on $\Nat$.
	
	Let $\sfC$ be an occurrence of a contraction rule $\begin{prooftree}
		\hypo{\phi^a,\phi^a, \Sigma}
		\infer1[\RuContr]{\phi^a, \Sigma}
	\end{prooftree}$ in a trivial cluster with \RuContr-free shallow depth $d$, such that there is no contraction rule in a trivial cluster in $\pi$ above $\sfC$. Note that than the subproof of $\pi$ rooted at the premiss of $\sfC$ is a \FocusM proof. We proceed with a case distinction based on the shape of \Ru. 

	If \Ru = \RuDischarge[], then we perform the following transformation:
	\[\begin{prooftree}
		\hypo{[\phi^a, \phi^a,\Sigma]^\dx}
		\infer[no rule]1{\vdots}
		\infer[no rule]1{\pi'}
		\infer[no rule]1{\vdots}
		\infer[no rule]1{\phi^a, \phi^a,\Sigma}
		\infer1[\RuDischarge]{\phi^a, \phi^a,\Sigma}
		\infer1[\RuContr]{\phi^a,\Sigma}
	\end{prooftree}
	\qquad	\longrightarrow \qquad
	\begin{prooftree}
		\hypo{[\phi^a,\Sigma']^\dx}
		\infer1[\RuWeak]{\phi^a, \phi^a,\Sigma}
		\infer[no rule]1{\vdots}
		\infer[no rule]1{\pi'}
		\infer[no rule]1{\vdots}
		\infer[no rule]1{\phi^a,\phi^a,\Sigma}
		\infer1[\RuContr]{\phi^a,\Sigma}
		\infer1[\RuDischarge]{\phi^a,\Sigma}
	\end{prooftree}\]
	This results in a \Focus proof with one less contraction rule in a proper cluster and we can therefore apply the induction hypothesis.

	If $\Ru \neq \RuDischarge[]$ and $\phi^a$ is not principal in \Ru, we can exchange the order in which the rules \Ru and \RuContr are applied and thus reduce $d$. 
	
	Otherwise assume that $\phi^a$ is principal in the rule $\Ru = \RuOr$.
	We transform the proof $\pi$ as follows:
	\[
	\begin{prooftree}
		\hypo{\pi'}
		\infer[no rule]1{\phi, \psi,\phi \lor \psi,\Sigma}
		\infer[]1[\RuOr]{\phi \lor \psi,\phi \lor \psi,\Sigma}
		\infer[]1[\RuContr]{\phi \lor \psi,\Sigma}
	\end{prooftree}	
	\qquad \longrightarrow \qquad
	\begin{prooftree}
		\hypo{\pi'}
		\infer[no rule]1{\phi, \psi,\phi \lor \psi,\Sigma}
		\infer[double]1[$\RuOr^I$]{\phi, \phi, \psi, \psi, \Sigma}
		\infer[]1[\RuContr]{\phi, \phi, \psi, \Sigma}
		\infer[]1[\RuContr]{\phi, \psi, \Sigma}
		\infer[]1[\RuOr]{\phi \lor \psi,\Sigma}
	\end{prooftree}
	\] 
	where $\RuOr^I$ describes an application of the invertibility of \RuOr (Lemma \ref{lem.InvertibleRules}). Both introduced contraction rules have \RuContr-free shallow depth $d-1$, thus we may apply the induction hypothesis. 
	
	If $\phi^a$ is principal in a different rule, we can perform similar transformations using the invertibility results shown in Lemma \ref{lem.InvertibleRules}. Note that in all those transformations the depth of $\pi$ remained the same.
\end{proof}

\subsection{Contractions in proper clusters}
	The idea to reduce the depth of contractions in proper clusters is to push contractions upwards until we find successful repeats below all contractions. This resembles the elimination of unimportant cuts in Lemma \ref{lem.cutsUnimportant}. Here we have to be a bit more careful, as in the reductions of the contraction rule we will use the invertibility of \RuOr, \RuAnd and \RuEta -- yet this only holds for \FocusM proofs. We therefore have to make sure that we apply reductions only at those nodes $v$, where no contraction rules appear in trivial clusters above $v$. We therefore opt to only unfold leaves in the root-component when needed, compared to the proof of Lemma \ref{lem.cutsUnimportant}, where we already started the process with the infinite unfolding of the root-component. 
	
	In Lemma \ref{lem.cutsUnimportant} the algorithm stops when for every path $\tau$ we found a pair of nodes $v,l$ such that $v$ is a descendant of $l$, the path from $v$ to $l$ is successful and $S_v \seteq S_l$. Then we could apply weakenings and contractions at $l$ to obtain a successful repeat. 
	Now we do not want to introduce \RuContr rules and we therefore only demand that $S_v \subseteq S_l$: In this case we only need to apply weakenings to obtain a successful repeat.
	
	In the proof of termination finding such nodes $v,l$ becomes more tricky. Our solution is to use results on well-quasi-orders: Let $\calM_X$ be the set of sequents occurring in a cut-free proof. In Section \ref{sec.sub.wqo} we saw that $(\calM_X,\subseteq)$ forms a well-quasi-order and so we can find a bound $N$, such that on all paths longer than $N$ we can find such nodes $v,l$ as desired.
	
	To guarantee that on every repeat path there is a modal node we need the following technical lemma. It states that in a cut-free, contraction-free proof all repeat paths contain a modal node.
	
\begin{lemma}\label{lem.repeatPathContainsMod}
	Let $\tau$ be a repeat path in a \Focus derivation that does not contain nodes labelled with \RuCut, \RuContr and \RuF. Then $\tau$ contains a node labelled with \RuBox. 
\end{lemma}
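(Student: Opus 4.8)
The plan is to argue by contraposition, or rather directly: suppose $\tau$ is a repeat path (so it runs from a companion node $c(l)$ down to a repeat leaf $l$, with $\sfS_{c(l)} = \sfS_l$) which contains no node labelled with \RuCut, \RuContr, \RuF or \RuBox, and derive a contradiction. The key observation is that, among the remaining rules of the \Focus system — namely \AxLit, \RuOr, \RuAnd, \RuMu, \RuNu, \RuWeak, \RuU and \RuDischarge[] — every one except \RuDischarge[] and \RuWeak, when read upwards, replaces a principal formula by something strictly "smaller" in a sense compatible with the rank ordering $\trace$, or at least never increases the relevant measure, while at least one step must strictly decrease it if the sequent is to return to itself. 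I would make this precise using the \emph{rank} function and the trace relation: since no modality is applied, every formula occurring along $\tau$ traces (via $\tracestep$) to its immediate descendant on the next sequent, because the boolean rules \RuOr, \RuAnd pass to direct boolean subformulas and the fixpoint rules \RuMu, \RuNu pass to unfoldings — all instances of $\tracestep$. The focus-changing rule \RuU only changes annotations, \RuWeak only deletes formulas, and \RuDischarge[] leaves the sequent unchanged.

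The main technical step is to extract a \emph{trace} in the sense defined in Section~\ref{sec.afMuCalculus} along $\tau$ and to show that it cannot be "stationary": formally, I would argue that on a path using only $\RuOr,\RuAnd,\RuMu,\RuNu,\RuWeak,\RuU,\RuDischarge[]$ of nonzero length, the multiset of formulas of maximal rank (or better: the multiset of $\equic$-classes, ordered by rank) in the sequent is non-increasing in the Dershowitz--Manna ordering $\lDM$, and strictly decreases at any application of $\RuOr$, $\RuAnd$, $\RuMu$, $\RuNu$ whose principal formula has the current maximal rank (the unfolding of a fixpoint formula $\xi=\eta x.\chi$ traces to $\chi[\xi/x]$, and since the formula is guarded, $\chi[\xi/x] \not\equic \xi$ unless a modality intervenes — but no modality is applied, so after finitely many boolean/fixpoint steps the rank must genuinely drop). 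Since $\sfS_{c(l)} = \sfS_l$, the associated multisets coincide, so $\lDM$ forces that no strictly-decreasing step occurred along $\tau$; hence the only rules on $\tau$ are $\RuWeak$, $\RuU$ and $\RuDischarge[]$ — but $\RuWeak$ and $\RuU$ change the sequent (as a multiset of annotated formulas) in a way that, combined with the equality of endpoints, is impossible unless $\tau$ has length zero, contradicting that $\tau$ is a genuine repeat path (which by definition has positive length, since the leaf $l$ is a proper descendant of its companion).

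The step I expect to be the main obstacle is pinning down the right well-founded measure and verifying the "strictly decreases" claim for the fixpoint rules: one has to use guardedness essentially — without a modality, unfolding a $\mu$- or $\nu$-formula produces a formula whose rank is strictly smaller (since $\xi \trace \chi[\xi/x]$ but, by guardedness and the absence of intervening modalities on $\tau$, we cannot have $\chi[\xi/x] \trace \xi$ along this modality-free segment), and one must be careful that boolean rules acting on subformulas of a fixpoint formula are handled by the $\rank$-monotonicity property ($\rank(\phi) > \rank(\psi)$ when $\phi\trace\psi$, $\psi\not\trace\phi$). A cleaner alternative, which I would fall back on if the multiset bookkeeping gets unwieldy, is to appeal to soundness of the \Focus system together with the characterisation of successful paths: a modality-free, \RuF-free, \RuCut-free, \RuContr-free repeat path whose endpoints carry the same sequent would yield a non-wellfounded branch along which no progress is ever made, contradicting the global soundness condition — but since we are proving a structural lemma it is preferable to keep the argument syntactic, via the rank/trace analysis sketched above.
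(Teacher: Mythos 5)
Your high-level plan — extract a Dershowitz--Manna-well-founded measure from the sequents, show it strictly decreases upward at every non-modal rule and recovers only at $\RuBox$, then use $\sfS_{c(l)}=\sfS_l$ to force a $\RuBox$ on $\tau$ — is exactly the shape of the paper's proof. But the specific measure you propose does not work, and you correctly sense that this is "the main obstacle" without resolving it. A multiset of ranks (or of $\equic$-classes ordered by rank) does \emph{not} strictly decrease at a boolean or fixpoint step whose principal formula sits inside a fixpoint loop. Concretely, take $\xi=\mu x.(\ldia x\lor p)$: applying $\RuMu$ then $\RuOr$ takes $\xi$ to $\ldia\xi,\,p$, yet $\ldia\xi\equic\xi$, so $\rank(\ldia\xi)=\rank(\xi)$ and your multiset gains an extra low-rank element $p$ without shedding anything at the top rank — that is a $\lDM$-\emph{increase}, the wrong direction. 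Your justification also conflates the trace relation $\trace$ (a fixed relation between formulas, independent of which rules appear on $\tau$) with paths in the proof tree: $\chi[\xi/x]\trace\xi$ always holds when $x$ is free in $\chi$, so "we cannot have $\chi[\xi/x]\trace\xi$ along this modality-free segment" is not a meaningful statement.

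The missing idea is to pass to the \emph{modality-free} trace relation $\tracestep^-$ (i.e.\ $\phi\tracestep\psi$ with $\phi$ non-modal). By guardedness its reflexive-transitive closure $\trace^-$ is acyclic, which makes $\nmf(\phi)\isdef|\Clos^-(\phi)|$, the number of formulas reachable from $\phi$ without crossing a modality, a genuine proper ordering even when ranks and $\equic$-classes do not move. With $\nmf(\Sigma)\isdef\{\nmf(\phi)\mid\phi\in\Sigma^-\}$ the strict $\lDM$-decrease at $\RuOr,\RuAnd,\RuEta,\RuWeak$ holds (e.g.\ $\nmf(\phi)<\nmf(\phi\lor\psi)$ because $\phi\lor\psi\notin\Clos^-(\phi)$), $\RuU$ is neutral, and $\RuBox$ is the only rule that can increase the measure, whence the conclusion. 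Your semantic fallback via soundness would also work, but once you substitute $\nmf$ for $\rank$ the syntactic argument closes cleanly.
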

\begin{proof}
	Let $\phi$ and $\psi$ be formulas. We let $\phi \tracestep^- \psi$ if $\phi \tracestep \psi$ and $\phi$ is not a modal formula. The relation $\trace^-$ is defined as the reflexive and transitive closure of $\tracestep^-$. 
	Note that all formulas are assumed to be guarded. Therefore for no formulas $\phi$ and $\psi$ with $\phi \neq \psi$ it holds $\phi \trace^- \psi$ and $\psi \trace^- \phi$.
	
	We let $\Clos^-(\phi)$ be the least superset of $\{\phi\}$ that is closed under $\trace^-$ .
	We define $\nmf(\phi) \isdef |\Clos^-(\phi)|$ to be the number of non-modal formulas in $\Clos^-(\phi)$.
	For a sequent $\Sigma$ we define $\nmf(\Sigma)$ to be the \emph{multiset} $\{\nmf(\phi) \| \phi \in \Sigma^-\}$. We let $\lDM$ be the Dershowitz-Manna ordering on multisets of natural numbers induced by the natural ordering on $\Nat$. Let $\Sigma$ be a premiss and $\Sigma'$ be the conclusion of a rule \Ru. Then,
	\begin{enumerate}
		\item if \Ru = \RuOr, \RuAnd, \RuEta or \RuWeak then $\nmf(\Sigma) \lDM \nmf(\Sigma')$,
		\item if \Ru = \RuBox then $\nmf(\Sigma) \geDM \nmf(\Sigma')$ and
		\item if \Ru = \RuU then $\nmf(\Sigma) = \nmf(\Sigma')$.
	\end{enumerate}
	This can easily be verified. For instance, for the rule \RuAnd this holds as $\nmf(\phi) < \nmf(\phi \land \psi)$ because of $\phi \not\trace^- \phi \land \psi$.
	Now let $\tau$ be a repeat path where all nodes on $\tau$ are labelled with the rules \RuOr, \RuAnd, \RuBox, \RuEta, \RuU or \RuWeak. First note that $\tau$ can not only consist of nodes labelled with \RuU. All other rules apart from \RuBox increase $\nmf(\Sigma)$ and the only rule that reduces $\nmf(\Sigma)$ is $\RuBox$. Hence, there has to be a node labelled with \RuBox on $\tau$.	
\end{proof}

\begin{lemma}\label{lem.contrRootCluster}
	Let $\pi$ be a cut-free \Focus proof with contraction-depth $m$ where contractions only occur in proper clusters and such that the root-cluster of $\pi$ is proper. Then $\pi$ can be transformed to a cut-free \Focus proof $\pi'$ of the same sequent, where all contractions have depth $<m$. 
\end{lemma}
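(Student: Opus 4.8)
The plan is to mimic the treatment of unimportant cuts in Lemma~\ref{lem.cutsUnimportant}, now pushing occurrences of \RuContr instead of mixes, and to work inside the root-component of $\pi$. We may assume $\depth(\pi)=m$, so that the depth-$m$ contractions are exactly the contractions of the root-component. As in Lemma~\ref{lem.cutsUnimportant}, the strategy is to move all these contractions towards the leaves and, once on every branch a successful repeat can be placed strictly below all of them, to close those branches off; König's Lemma then yields a finite proof whose contractions all have depth $<m$.

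First I would fix the pushing operation, which is the one underlying Lemma~\ref{lem.contrTrivialCluster}: given a depth-$m$ occurrence $\sfC$ of \RuContr such that the derivation above its premiss is still a \FocusM proof, move $\sfC$ one step towards the leaves. If the contracted formula is not principal in the rule above $\sfC$, commute the two rules; this includes \RuBox, whose unique principal formula forces the contracted formula into the $\dia$-context. If the contracted formula is principal in \RuOr, \RuAnd or \RuEta, use the strong invertibility of that rule (Lemma~\ref{lem.InvertibleRules}), which is available precisely because no contraction sits in a trivial cluster above $\sfC$; this replaces $\sfC$ by contractions on the direct subformulas. If the rule above is \RuWeak on the contracted formula, the two steps annihilate and $\sfC$ disappears; and if it is \RuDischarge[], use the \RuDischarge/\RuContr swap of Lemma~\ref{lem.contrTrivialCluster}, so that $\sfC$ eventually meets, and cancels against, the \RuWeak that the swap introduces at the top of the cycle. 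Unlike in Lemma~\ref{lem.cutsUnimportant}, the root-component is unrolled only lazily and only in regions already cleared of contractions, so that the \FocusM hypothesis of Lemma~\ref{lem.InvertibleRules} stays valid wherever invertibility is applied; throughout, \RuU rules are inserted so as to keep every node of the root-component in focus without creating any new \RuF.

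Iterating this, I would establish the analogue of Claim~1 in the proof of Lemma~\ref{lem.cutsUnimportant}: for every $k$ there is a cut-free \Focus proof $\pi_k$ of the same end sequent in which every depth-$m$ occurrence of \RuContr has more than $k$ modal rules below it in the root-component, and every node of the root-component of $\pi_k$ carries a focused formula. This goes by induction on $k$; the inner step clears the depth-$m$ contractions out of the $(k{+}1)$-fragment (the part of $\pi_k$ below the $(k{+}1)$st modal rule on each branch, as in the proof of Lemma~\ref{lem.cutsUnimportant}) by pushing them, one at a time, through the modal-rule-free $(k{+}1)$st layer and past a \RuBox — and, since invertibility can create a new contraction from an old one and unrolling lengthens the layer, its termination is the technically involved point, to be handled by a Dershowitz--Manna-style induction together with the cap $\lenf{\sfM_Y}{f}$ on controlled bad sequences over the nwqo $\sfM_Y=(\calM_Y,\subseteq,\norm_{\infty})$, where $Y$ is the finite set of annotated formulas over $\Clos(\Gamma)$ ($\Gamma$ the end sequent) and $f$ is a control function witnessing that one construction step changes multiplicities only by a bounded factor; here Lemma~\ref{lem.repeatPathContainsMod} is used to see that a cycle without a \RuBox — being cut-free and, in a minimally focussed proof, \RuF-free in its proper clusters — must still contain a contraction, so the process keeps opening up room rather than stalling. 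Having obtained $\pi_k$, I would close off the root-component. Along any branch of $\pi_k$, the sequents immediately above the successive modal rules form a sequence in $\calM_Y$ whose $\norm_{\infty}$-values are bounded by $f^{h}(t_0)$ for a fixed $t_0$, so by Lemma~\ref{lem.wqoMaximalLength} such controlled bad sequences have length at most $N:=\lenf{\sfM_Y}{f}(t_0)$. Choosing $k>N$, each branch then carries nodes $v$ (lower) and $l$ (higher), both immediately above a modal rule, with $\sfS_v\subseteq\sfS_l$ and at least one \RuBox between them. Insert \RuDischarge[\dx] at $v$ and replace the derivation at $l$ by a repeat leaf labelled $\sfS_v$ followed by \RuWeak rules down to $\sfS_l$ — only weakenings are needed, since $\sfS_v\subseteq\sfS_l$. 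The path from $v$ to $l$ has no \RuCut, no \RuContr (it lies below every contraction) and no \RuF, hence by Lemma~\ref{lem.repeatPathContainsMod} it contains a \RuBox, and every node on it is in focus; so it is successful and the new leaf is discharged. By König's Lemma this is a finite \Focus proof $\pi'$ of $\Gamma$, and all its contractions have depth $<m$, since the depth-$m$ contractions of $\pi_k$ now sit in the part of the tree deleted above these repeat leaves.

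I expect the main obstacle to be the termination of this whole construction, and in particular the inner step of the Claim: contractions do not just disappear inside a modal-rule-free layer the way mixes do in Lemma~\ref{lem.unimportantFinitaryCE}, because fixpoint unfoldings keep producing fresh contractible formulas, so one genuinely needs the well-quasi-order bound to cap how far the layer can be unrolled before the pushing must finish, and one must verify carefully that the multiplicity growth caused by invertibility and unrolling stays controlled in the sense of Section~\ref{sec.sub.wqo}, so that $N$ is independent of $k$. The secondary, fiddly point is the bookkeeping needed to maintain the \FocusM condition and the focus annotations across the lazy unfoldings, so that Lemma~\ref{lem.InvertibleRules} remains applicable and the repeat paths we create remain successful.
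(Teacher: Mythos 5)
Your proposal identifies all the key ingredients that the paper's own proof of this lemma uses: the strong invertibility results of Lemma~\ref{lem.InvertibleRules} for the principal pushes, lazy unrolling of the root-component so that the $\FocusM$ hypothesis stays available when invertibility is invoked, the well-quasi-order $\sfM_X$ with $\subseteq$ and the control function $n\mapsto n+2$ to extract a uniform bound $N$ via $\lenf{\sfM_X}{f}$, closing repeats with nodes $v,l$ satisfying $\sfS_v\subseteq\sfS_l$ so that only weakenings are needed, and Lemma~\ref{lem.repeatPathContainsMod} to certify that the new repeat paths are successful. (The observation that one should really wqo over annotated formulas, not just $\Clos(\Gamma)$, is a fair pedantic improvement over the paper's notation.)

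Where you diverge from the paper is in the \emph{organisation} of the termination argument. You try to literally transplant Claim~1 from Lemma~\ref{lem.cutsUnimportant}: build $\pi_k$ for each $k$, where all depth-$m$ contractions sit above the first $k$ modal rules, and then take $k>N$. This forces you to prove that the inner step $\pi_k\to\pi_{k+1}$ terminates, and you correctly flag this as the soft spot — unlike the mix-elimination inner step in Lemma~\ref{lem.unimportantFinitaryCE}, which is a self-contained finitary argument, a contraction-clearing inner step can force arbitrary further unrolling, so its termination is not given for free and your sketch leaves it as "a Dershowitz–Manna-style induction plus the wqo cap", which is not yet a proof. The paper avoids this split entirely. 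It does \emph{not} fix $k$ in advance; instead it defines \emph{critical paths} in the contraction- and companion-node-free sub-derivation $A_0$ of the root component, calls a critical path \emph{tamed} once it contains $v,l$ with $\sfS_v\subseteq\sfS_l$, and runs one interleaved loop: push a topmost trivial-cluster contraction if there is one, otherwise unfold the root-most $\RuDischarge[]$ in $A$; stop when all critical paths are tamed. Every step lengthens some critical path (non-principal commutes and principal invertibility reductions both move the contraction one node up, unfolding extends the path into the unrolled copy), untamed critical paths are $(f,t)$-controlled bad sequences over $\sfM_X$, and so the wqo bound $N$ directly bounds the running time of the loop. Once the loop halts you insert the $\RuDischarge[]$/weakening repeats and invoke Lemma~\ref{lem.repeatPathContainsMod} (the paths from $v$ to $l$ lie in $A_0$, hence contain no $\RuCut$, $\RuContr$, or $\RuF$). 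This single-measure argument replaces, and is noticeably lighter than, the nested layer-by-layer induction you propose: the wqo bound carries the termination burden for the \emph{whole} algorithm in one go, rather than being invoked once to pick $k$ and once again inside each inner step. If you want to salvage your presentation, you would essentially have to re-derive this critical-path measure inside your inner induction anyway, so the extra layering buys nothing; I would recommend dropping the $\pi_k$ scaffolding and arguing directly about path lengths in $A_0$.

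One small further remark: you want both $v$ and $l$ "immediately above a modal rule", which gives you the $\RuBox$ on the repeat path for free. The paper instead takes arbitrary $v,l$ on a tamed critical path and derives the presence of a $\RuBox$ afterwards from Lemma~\ref{lem.repeatPathContainsMod}. Both are fine; the paper's version makes the wqo bound marginally easier to state since it controls all sequents on the path rather than just the ones at modal nodes (where the per-step growth rate would have to be re-estimated).
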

\begin{proof}
	Let $\Sigma$ be the sequent at the root $r$ of $\pi$. Let $A \isdef \comp(r)$ be the component of the root $r$ and let $A_0$ be the subderivation of $A$ not containing contractions and companion nodes. We call a maximal path $\tau = v_0...v_m$ in $A_0$ \emph{critical}, if at least one of the children of $v_m$ is in $A$. A critical path $\tau$ is called \emph{tamed} if there are nodes $v$ and $l$ on $\tau$ such that $v$ is a proper ancestor of $l$ and $\sfS_v \subseteq \sfS_l$ and is called \emph{untamed} otherwise.
	
	We transform $\pi$ by the following algorithm\footnote{Note that at any stage $A$ denotes the component of the root, which does change in the process. Similarly for $A_0$.}:
	\begin{enumerate}
		\item If all critical paths in $A_0$ are tamed, then stop.
		\item Else if there is a node $v$ in a trivial cluster in $\pi$ labelled with an occurrence $\sfC$ of a \RuContr rule such that no contraction rule is applied in a trivial cluster above $v$, then apply a reduction from Appendix \ref{app.contrReductions} to $\sfC$.
		\item Else take a root-most node $v$ in $A$ labelled with $\RuDischarge[\dx]$ and unfold it, meaning that every discharged leaf $l$ labelled with $\dx$ is replaced by $\pi_v$ and the node $v$ is removed.
	\end{enumerate}
	Note that, as $\pi$ is a proof, at some point a principal reduction to a contraction rule is applied and therefore at some point the length of all critical paths in $A_0$ increases. 
	To show termination it therefore suffices to show that every critical path of a certain length is tamed.
	
	Every node in $A_0$ is labelled with a sequent consisting of formulas in $\Clos(\Sigma)$, hence by a multiset over the finite set $X \isdef \Clos(\Sigma)$. 
	As shown in Section \ref{sec.sub.wqo} we have that $\sfM_X = (\calM_X,\subseteq,\norm_{\infty})$ is a normed well-quasi-order. Any untamed critical path in $A_0$ corresponds to a bad sequence over $\sfM_X$. We can therefore use the bounds on controlled bad sequences over $\sfM_X$ to obtain a bound on the length of critical paths in $A_0$. It remains to find a control function and a starting value.
	
	Given a premiss $\Delta$ and the conclusion $\Delta'$ of a rule \Ru it holds that $\norm[\Delta]_{\infty} \leq \norm[\Delta']_{\infty} + 2$. Thus we can choose the control function $f: n \mapsto n+2$, let $t \isdef \norm[\Sigma]_{\infty}$ be the starting value and let $N \isdef \lenf{\sfM_X}{f}(t)$. Any untamed critical path in $A_0$ corresponds to an $(f,t)$-controlled bad sequence over $\sfM_X$. But the length of $(f,t)$-controlled bad sequences over $\sfM_X$ is bound by $N$ and therefore the length of untamed critical paths in $A_0$ is bound by $N$ as well. This suffices to show termination.
	
	Let $\pi'$ be the proof obtained by this algorithm. For any critical path $\tau$ in $A_0$ let $v$ and $l$ be the root-most nodes such that $v$ is a proper ancestor of $l$ and $\sfS_v \subseteq \sfS_l$. We add a node labelled with $\RuDischarge[\dx]$ at $v$ and replace $l$ by
	\[\begin{prooftree}
		\hypo{[\sfS_v]^\dx}
		\infer1[\RuWeak]{\sfS_l}
	\end{prooftree}\]
	This results in a \Focus derivation $\rho$. All remaining nodes labelled with contractions were pushed outside of $A$ and therefore have depth $<m$. It remains to show that all repeat leaves are discharged. In all reductions the resulting sequents still have formulas in focus, therefore all sequents in $A_0$ have a formula in focus. Clearly no \RuF rules were introduced, hence no node in $A_0$ is labelled with \RuF. All newly introduced repeat paths $\tau_l$ do not contain nodes labelled with \RuCut or \RuContr, therefore Lemma \ref{lem.repeatPathContainsMod} implies that there is a modal node on $\tau_l$. Hence, all repeat paths are successful and we obtain a cut-free proof of the same sequent, where all contractions have depth $<m$. 
\end{proof}

We can now combine the Lemmas \ref{lem.contrTrivialCluster} and \ref{lem.contrRootCluster} and prove the elimination of contractions.

\begin{proof}[Proof of Lemma \ref{lem.contrElim}]
	We prove the Lemma by induction on the contraction-depth $m$ of $\pi$. By Lemma \ref{lem.contrTrivialCluster} we can transform $\pi$ to a proof $\pi_0$ with contraction-depth $m$, where all contractions are in proper clusters. 
	We can apply Lemma \ref{lem.contrRootCluster} to every subproof of $\pi_0$ rooted at a proper cluster containing contractions of depth $m$. This yields a cut-free \Focus proof $\pi'$ of the same sequent with contraction-depth $<m$. The statement then follows by the induction hypothesis.
\end{proof}
	
	\section{Cut elimination theorem}\label{sec.CEtheorem}

We can now put together the elimination of important and unimportant cuts and obtain cut elimination for the \Focus system.

There is one extra step that we have to  carry out, namely to push important cuts upwards until the cut-formula is a fixpoint formula:

\begin{definition}\label{def.cutsCritical}
	Let $\pi$ be a \Focus proof and $\sfC$ be an important cut in $\pi$. We call $\sfC$ \emph{essential} if the cut formula $\psi$ is a fixpoint-formula.
\end{definition}

\begin{lemma}\label{lem.cutsEssential}
	Let $\pi$ be a contraction-free \Focus proof of cut-rank $n$, where the only cut of rank $n$ is important and at the root. Then there is a \Focus proof $\pi'$ of the same sequent with cut-rank $\leq n$, where all cuts are essential.
\end{lemma}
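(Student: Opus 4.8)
The goal is to take the important cut at the root, with cut formula $\psi$ of rank $n$, and push it upwards in both subproofs until the cut formula has become a fixpoint formula, thereby replacing the single cut of rank $n$ by finitely many \emph{essential} cuts of rank $n$ (plus, possibly, cuts of lower rank). Recall that by Proposition \ref{prop.aFree}.5 no formula is both magenta and navy, and since $\rank$ is a linearisation of $\trace$, the only formulas of rank exactly $n$ reachable from $\psi$ via $\trace$ are those $\equic$-equivalent to $\psi$; in particular all cut formulas that arise while pushing up stay of rank $\leq n$, with rank exactly $n$ only when $\equic\psi$. The plan is to perform a straightforward reductive cut-elimination argument, but measuring progress by the structure of the cut formula rather than by its rank.

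First I would set up an induction. The outer measure is the rank $n$, which never increases; whenever a reduction produces a cut on a formula $\chi$ with $\rank(\chi)<n$, that cut is simply left in place and contributes only to the cut-rank bound ``$\leq n$'' in the statement. So it suffices to handle the cuts whose cut formula has rank exactly $n$, i.e.\ is $\equic\psi$. For such a cut, I would push it upwards using the standard cut reductions of Appendix \ref{app.CutReductions} (which preserve cut-rank and do not create new contractions, since $\pi$ is contraction-free). The reductions to consider: (i) if the cut formula is principal on one side in a logical rule ($\RuOr$, $\RuAnd$, $\RuBox$, $\RuMu$, $\RuNu$) but the formula is already a fixpoint formula, we stop — the cut is essential; (ii) if the cut formula is principal on one side but is a boolean or modal formula $\equic\psi$ (this can happen because, as emphasised in Section \ref{sec.ceStrategy}, disjunctions and conjunctions can occur in the scope of fixpoints), we push the cut through that rule, producing one or two cuts whose cut formulas are the immediate $\tracestep$-successors, each of rank $\leq n$; (iii) if the cut formula is not principal on the relevant side (a weakening, a focus rule $\RuF/\RuU$, a commuting logical rule, or a $\RuDischarge$/companion situation), we commute the cut past that rule, unfolding a cycle when a $\RuDischarge$ node is met, exactly as in the ``other cases'' of Definition \ref{def.traversedConstruction}.

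The key point is that this process terminates. A cut formula $\chi\equic\psi$ of rank $n$ is a boolean, modal, or fixpoint formula. If it is a fixpoint formula, the cut becomes essential as soon as it is principal on both sides (or it can be made so after finitely many unfoldings — the argument here is a small finitary version of Claims 1–3 in the proof of Lemma \ref{lem.traversedTerminates}, using that between any node and a $\RuDischarge$ there is a modal node, that nodes carry only finitely many sequents up to $\seteq$, and that on a successful path one can close a repeat). If $\chi$ is boolean or modal, pushing the cut through its principal rule \emph{strictly decreases} the $\tracestep$-distance from $\psi$ to $\chi$ along the unique path given by Proposition \ref{prop.aFree} (a boolean/modal subformula is a proper $\tracestep$-successor with no way back up without passing a fixpoint), so after finitely many such steps every rank-$n$ cut formula is a fixpoint formula. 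Combining these with König's Lemma on the finitely-branching tree of reductions yields a finite \Focus proof $\pi'$ of the same sequent in which every cut of rank $n$ is essential and the cut-rank is still $\leq n$.

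The main obstacle I anticipate is the termination bookkeeping for case (iii): commuting a cut past $\RuDischarge$ requires unfolding a cycle, and one must argue that this does not loop forever before a rank-$n$ cut formula either becomes a fixpoint formula or gets pushed below a modal rule where a successful repeat can be installed. This is precisely the same kind of argument as in Lemma \ref{lem.traversedTerminates} but lighter, since here there is no multicut size to control — each commutation produces at most as many cuts as before, on the same or $\tracestep$-smaller formulas — so a lexicographic measure on (number of rank-$n$ non-fixpoint cut formulas, total boolean/modal $\tracestep$-distance of rank-$n$ cut formulas, height of the reduced region) together with the finiteness of sequents modulo $\seteq$ and König's Lemma suffices.
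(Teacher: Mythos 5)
Your overall strategy coincides with the paper's: push the root cut upward with the standard reductions, \emph{not} applying the $\RuEta$-principal reduction (so the cut stops once its cut formula is a fixpoint formula and is thereafter essential), and argue termination by appealing to the fact that every repeat path in a contraction-free, cut-free, $\RuF$-free derivation passes through a $\RuBox$ (Lemma~\ref{lem.repeatPathContainsMod}). The difference lies in the termination measure, and the paper's is both simpler and, as stated, more obviously correct. The paper measures the \emph{syntactic size} of the cut formula: apart from $\RuEta$ (which is never applied because one stops at fixpoints), no reduction increases it, and the $\RuBox$ reduction strictly decreases it. Combined with the $\RuBox$-on-every-cycle fact, the size bound makes the number of unfoldings trivially finite — no appeal to K\"onig's Lemma, finiteness of sequents modulo $\seteq$, or installation of fresh repeats is needed, since the portion of the proof above each stopped cut is simply the untouched remainder of the original cyclic subproof.

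Your proposed lexicographic measure, in contrast, has a genuine problem: its leading component, ``number of rank-$n$ non-fixpoint cut formulas,'' can strictly \emph{increase}. Pushing a cut on $\chi_0\lor\chi_1$ through the $\RuOr/\RuAnd$-principal reduction produces cuts on both $\chi_0$ and $\chi_1$, and in $\AFMC$ it is entirely possible that both are rank $n$ and both are boolean or modal (e.g.\ $\psi=\mu x.\bigl((x\lor p)\lor(x\lor q)\bigr)$, where $\psi\lor p$ and $\psi\lor q$ are $\equic\psi$ and neither is a fixpoint formula). The secondary $\tracestep$-distance component cannot rescue a lexicographic ordering once the leading component has gone up. Replacing the lexicographic triple by a Dershowitz--Manna multiset ordering on cut-formula sizes (one entry per surviving rank-$n$ cut) would repair this — and is, in effect, what the paper's size argument does branch by branch — but the fix is worth flagging since your sketch presents the measure as a key ingredient. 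The rest of your proposal (the case analysis, the special role of $\RuBox$, the remark that $\RuDischarge$-unfolding is the delicate commuting step) matches the paper's intent accurately.
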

\begin{proof}
	Using the cut reductions from Appendix \ref{app.CutReductions} we can push the cuts of rank $n$ upwards. All cut reductions apart from \RuEta do not increase the syntactic size of the cut-formula and in the cut reduction for \RuBox the syntactic size of the cut-formula decreases. As on every repeat path there is an application of \RuBox, the syntactic size of cut formulas decreases until all cut-formulas of rank $n$ are fixpoint-formulas.
\end{proof}

\begin{theorem}[Cut elimination]\label{thm.cutElimination}
	We can transform every \Focus proof $\pi$ into a cut-free \Focus proof $\pi'$ of the same sequent.
\end{theorem}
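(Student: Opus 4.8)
The plan is to argue by strong induction on the cut-rank $n$ of $\pi$. If $n=0$ there is nothing to do, so suppose $n\ge 1$ and that every \Focus proof of cut-rank $<n$ has already been shown to transform into a cut-free one. Since this induction hypothesis then finishes the job, it suffices to transform $\pi$ into a \Focus proof of cut-rank $<n$; equivalently, to remove from $\pi$ all cuts of rank exactly $n$ without raising the cut-rank.

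To do that I would peel off the rank-$n$ cuts `from the inside out'. Pick a rank-$n$ cut and a node $u$ above the root such that the subproof $\pi_u$ has all of its rank-$n$ cuts lying in its root-component (such a $u$ exists: take a rank-$n$ cut whose component is reachability-maximal among the components carrying a rank-$n$ cut). Since a component contains at most one proper cluster, the unimportant rank-$n$ cuts of $\pi_u$ all sit in its root-cluster and the important ones in trivial clusters of its root-component; and it is enough to reduce $\pi_u$ to cut-rank $<n$, since substituting the result back strictly decreases the total number of rank-$n$ cuts, so a second induction on that number then closes the argument. For $\pi_u$ I would proceed in three stages. \emph{First}, eliminate the important rank-$n$ cuts of the root-component one at a time: choose such a cut $\mathsf C$ at a node $v$ with no rank-$n$ cut above it; as $v$ is labelled \RuCut it cannot be a companion, so $\pi_v$ is closed, equals the finite subtree rooted at $v$, and has $\mathsf C$ as its only rank-$n$ cut, sitting at the root in a trivial cluster. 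The subproofs at the premisses of $\mathsf C$ have cut-rank $<n$; by the outer induction hypothesis they become cut-free and, by Lemma~\ref{lem.contrElim}, contraction-free, so Lemma~\ref{lem.cutsEssential} applies and turns $\pi_v$ into a proof of cut-rank $\le n$ all of whose cuts are essential, i.e.\ have a fixpoint cut-formula $\psi$. By Proposition~\ref{prop.aFree} exactly one of $\psi,\mybar\psi$ is a $\mu$-formula; rewriting each essential rank-$n$ cut so that its left cut-formula is that $\mu$-formula, and again clearing the (cut-rank $<n$) subproofs above an uppermost such cut by the outer hypothesis and Lemma~\ref{lem.contrElim}, Lemma~\ref{lem.cutsImportant} replaces it by a proof of cut-rank $<n$. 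Iterating makes $\pi_v$ free of rank-$n$ cuts, and plugging it back removes one important rank-$n$ cut from $\pi_u$ and creates none. \emph{Second}, once the root-component has no important rank-$n$ cut, all rank-$n$ cuts of $\pi_u$ are unimportant and in its root-cluster, so Lemma~\ref{lem.cutsUnimportant} turns $\pi_u$ into a proof (of cut-rank $\le n$) in which every cut is important. \emph{Third}, with all cuts now important, the argument of the first stage applies globally inside $\pi_u$ — every uppermost rank-$n$ cut again has a closed subproof containing it as its only rank-$n$ cut — and, iterated, drives the number of rank-$n$ cuts of $\pi_u$ to zero.

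The genuinely hard content is already packaged in the four lemmas, so what remains is bookkeeping, and that is where the main obstacle lies. One has to verify that passing to subproofs and substituting back never resurrects a cut of rank $\ge n$ and never breaks the global soundness condition — which is precisely why the cuts we isolate must be \emph{important} (outside cycles) and the subproofs we extract must be \emph{closed} — and that the nested inductions (on the cut-rank, on the cluster/component structure, and on the number of, eventually important, rank-$n$ cuts) are well-founded, in particular that Lemmas~\ref{lem.cutsUnimportant}, \ref{lem.cutsEssential} and \ref{lem.cutsImportant} never raise the cut-rank. Once these points are checked, the theorem is simply the orchestration of the four lemmas.
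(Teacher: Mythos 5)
Your overall strategy is the same as the paper's: orchestrate Lemmas~\ref{lem.contrElim}, \ref{lem.cutsEssential}, \ref{lem.cutsUnimportant} and \ref{lem.cutsImportant} by induction, cleaning the premiss subproofs of an uppermost important cut with the outer rank induction plus contraction elimination before invoking Lemma~\ref{lem.cutsEssential} and then Lemma~\ref{lem.cutsImportant}. Several of your observations are exactly the right justifications — that an uppermost important rank-$n$ cut $\mathsf C$ at a node $v$ in a trivial cluster has $\pi_v$ equal to the finite subtree at $v$, so its premiss subproofs are genuine subproofs of cut-rank $<n$; and that one must reorient each essential cut so the left cut-formula is the unique $\mu$-formula among $\psi,\mybar\psi$ (Proposition~\ref{prop.aFree}) before Lemma~\ref{lem.cutsImportant} applies.

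There is, however, a genuine gap in Stage~2. The claim ``a component contains at most one proper cluster'' is false: the paper explicitly notes that $\comp(v)$ may contain several clusters, and a concrete instance is an \RuAnd{} at the root of $\pi_u$ whose two branches each enter a proper cluster of the same depth — neither reachable from the other — so both proper clusters sit inside the root-component. Even with a single proper cluster, if $u$ itself sits in a trivial cluster (as it typically will when the component's root is below a branching rule) the root-cluster of $\pi_u$ is trivial and contains no cut, so the hypothesis of Lemma~\ref{lem.cutsUnimportant}, which speaks of the root-\emph{cluster}, is not met when applied to $\pi_u$; and if several proper clusters each carry unimportant rank-$n$ cuts, the lemma cannot be applied to $\pi_u$ in one shot at all. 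The fix — which is what the paper actually does — is to always descend to a subproof $\pi_0=\pi_w$ whose root $w$ lies in a cut-carrying cluster with no cut-carrying cluster strictly above it: then either $w$ is in a proper cluster and Lemma~\ref{lem.cutsUnimportant} applies verbatim, or $w$ is a trivially clustered cut and your Stage~1 applies. The paper packages all this bookkeeping into a single $\lDM$-induction on the multiset $o(\pi)$ of per-cluster quantities $3n^u+2$, $3n^i+1$, $3n^e$, which automatically certifies the well-foundedness you flag at the end as still needing to be checked, and removes the need for your separate inner induction on the number of rank-$n$ cuts.
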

\begin{proof}
	Let $P_1,...,P_k$ be the proper clusters in $\pi$ that do contain cut rules, where $n^u_j$ is the maximal rank of a cut in $P_j$ for $j = 1,..,k$.
	Let $S_1,...,S_m$ be the trivial clusters in $\pi$ that do contain an essential cut rule, where $S_i$ contains a cut of rank $n^e_j$ for $j = 1,...,m$. 
	Let $T_1,...,T_l$ be the trivial clusters in $\pi$ that do contain an important, but not essential cut rule, where $T_i$ contains a cut of rank $n^i_j$ for $j = 1,...,l$. 
	
	We define the \emph{cut-order} $o(\pi)$  of $\pi$ as the multiset $$\{3 \cdot n^u_1 + 2,...,3 \cdot n^u_k +2 , 3 \cdot n^i_1 + 1,..., 3 \cdot n^i_l + 1, 3\cdot n^e_1,...,3\cdot n^e_l\}.$$
	Let $\lDM$ be the Dershowitz-Manna ordering on multisets of natural numbers induced by the natural ordering on $\Nat$.
	 We prove the lemma by $\lDM$-induction on $o(\pi)$. The definition of $o(\pi)$ guarantees that $o(\pi)$ becomes $\lDM$-smaller if either 
	 \begin{enumerate}
	 	\item[(i)] one proper cluster with unimportant cuts of rank $n$ is replaced by multiple important cuts in trivial clusters with rank $\leq n$, or
	 	\item[(ii)] one non-essential, important cut of rank $n$ in a trivial cluster is replaced with multiple essential cuts of rank $n$, or 
	 	\item[(iii)] one essential cut of rank $n$ in a trivial cluster is replaced with multiple cuts of rank $<n$.
	 \end{enumerate}
	
	Let $\pi_0$ be a subproof of $\pi$, where all cuts are in the root-cluster of $\pi_0$ and let $n$ be the cut-rank of $\pi_0$. If the root-cluster is proper then all cuts in the root-cluster of $\pi_0$ are unimportant. Otherwise there is one important cut at the root of $\pi_0$. 
	
	In the first case Lemma \ref{lem.cutsUnimportant} yields a proof $\pi_1$ with cut-rank $n$, where all cuts of rank $n$ are important. 
	In the second case, Lemma \ref{lem.contrElim} transforms $\pi_0$ to $\pi_0'$, where $\pi_0'$  does not contain contractions and has one important cut with rank $n$ at the root. 
	If the cut is not essential, then Lemma \ref{lem.cutsEssential} yields a proof $\pi_1$ with cut-rank $n$, where all cuts are essential. Otherwise the cut is essential and Lemma \ref{lem.cutsImportant} yields a proof $\pi_1$ with cut-rank $<n$.
	
	In all cases, we substitute $\pi_0$ by $\pi_1$ in $\pi$ and obtain a proof $\pi'$, where $o(\pi') \lDM o(\pi)$. We can apply the induction hypothesis in order to obtain a cut-free proof.
\end{proof}

\begin{corollary}
	We can transform every \Focus proof $\pi$ into a cut-free and contraction-free \Focus proof $\pi'$ of the same sequent.
\end{corollary}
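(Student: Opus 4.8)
The plan is to obtain the corollary by chaining the two main results already established, in the correct order. First I would apply Theorem~\ref{thm.cutElimination} to the given \Focus proof $\pi$, producing a cut-free \Focus proof $\pi_0$ of the same sequent. Then I would apply Lemma~\ref{lem.contrElim} to $\pi_0$, which, since $\pi_0$ is cut-free, returns a cut-free and \emph{contraction}-free \Focus proof $\pi'$ of the same sequent. This $\pi'$ is the desired proof.

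The only point that requires a moment's thought is the order of the two steps, and why one should not be tempted to do them the other way around. Contraction elimination must be performed \emph{after} cut elimination: the machinery of Section~\ref{sec.contractions} (Lemmas~\ref{lem.contrTrivialCluster} and~\ref{lem.contrRootCluster}) is developed only for cut-free proofs, since it relies on the strong invertibility of \RuOr, \RuAnd and \RuEta in \FocusM. Conversely, it would not help to eliminate contractions first and then cuts, because the elimination of important cuts in Lemma~\ref{lem.cutsImportant} genuinely reintroduces contractions --- multicut reductions may duplicate premisses --- which is precisely why contraction-freeness is imposed there as a hypothesis rather than maintained as an invariant.

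Hence the key fact to verify is simply that Lemma~\ref{lem.contrElim} preserves cut-freeness, which it does by inspection of its proof: the transformations involved (commuting contractions past invertible rules, pushing contractions upward within proper clusters, and finally inserting only \RuWeak and \RuDischarge[] nodes to close the newly created repeat leaves) never introduce a \RuCut rule. So I do not expect any real obstacle here; all the genuine work has already been absorbed into Theorem~\ref{thm.cutElimination} and Lemma~\ref{lem.contrElim}, and the corollary follows by their straightforward composition.
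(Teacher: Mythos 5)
Your proposal is correct and matches the paper's own (one-line) proof: apply Theorem~\ref{thm.cutElimination} to remove cuts, then Lemma~\ref{lem.contrElim} to remove contractions from the resulting cut-free proof. The additional discussion of why the order matters is accurate but not needed for the corollary itself.
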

\begin{proof}
	Combine Theorem \ref{thm.cutElimination} and Lemma \ref{lem.contrElim}.
\end{proof}

	\section{Conclusion}\label{sec.conclusion}

We have presented a syntactic cut-elimination procedure for a cyclic proof system for the alternation-free modal $\mu$-calculus.
Several possible extensions and adaptations of the presented approach are worth mentioning.

First, the result can be readily extended to the \emph{polymodal case}, where a set of modalities is considered.
Perhaps most interesting is the applicability to temporal and dynamic logics -- such as \PDL, \LTL, and \CTL{} -- since these can be viewed as {fragments of the alternation-free $\mu$-calculus}.
Although our cut-elimination result does not apply to them directly, a similar method can be used.
To illustrate this, consider \PDL. As shown in \cite{Carreiro2014}, \PDL corresponds to the completely additive fragment $\muMLca$ of the modal $\mu$-calculus, with translations provided between \PDL and $\muMLca$.
Since $\muMLca$ is a fragment of the alternation-free $\mu$-calculus, our cut-elimination result transfers directly to the \Focus system when restricted to sequents of $\muMLca$-formulas.
Using the translations from \cite{Carreiro2014}, we can define a \Focus style system \PDLf for \PDL.
	Moreover, translations between \Focus proofs of $\muMLca$-sequents and \PDLf proofs can be given in both directions. 
However, these translations may introduce cuts, preventing a direct transfer of our cut-elimination result. 
Nonetheless, since annotations and the global soundness condition in \PDLf are simpler than in \Focus, it should be possible to adapt our cut-elimination method and apply it directly to \PDLf without difficulty.
	
Regarding the {extension to more expressive logics}, it is worth investigating whether our technique can be generalised to the full modal $\mu$-calculus. Candidate proof systems include those proposed by Jungteerapanich and Stirling \cite{Jungteerapanich2010,Stirling2014}. 
Our construction relies on a key property of $\AFMC$-formulas $\phi$: either $\phi$ or its negation $\mybar{\phi}$ is not contained in the closure of a $\nu$-formula of the same rank.
Since such formulas can never be in focus, descendants of such a formula are not essential for the success-condition of repeat paths. 
For general $\muML$ formulas, this property need not hold, and a more sophisticated method would be required to handle the resulting complexity of annotations.

Also of interest is to determine the precise \emph{complexity} of our cut-elimination procedure. As we currently rely on results concerning well-quasi-orders, we can only establish an Ackermannian upper bound. Whether the termination argument can be simplified to yield a tighter bound remains an open question. 
	
Finally, given the technical nature of our result, it would be worthwhile to pursue a formal verification of the procedure using an interactive theorem prover, following recent examples such as~\cite{Shillito2023} and~\cite{Borzechowski2025}.

	\bibliographystyle{plain}
	\bibliography{Afree.bib}

	\appendix
	\section{Elimination of important cuts}\label{app.constrImportant}

These are the omitted cases from Definition \ref{def.traversedConstruction}. Recall that we pick $i \in \{1,...,m\}$ or $j \in \{1,...n\}$ and reduce $\pi_i$ or $\tau_j$. 
Let $\psi \in \Psi$ be a formula and let $\pi_i$ and $\tau_j$ be cut-connected proofs via $\psi$ in respectively $\Pi$ and $\Tau$. We let $\Psi = \Psi',\psi$; $\Psi_i = \Psi_i',\psi$ and $\Phi_j = \Phi_j', \psi$ as well as $\Pi = \Pi',\pi_i$ and $\Tau = \Tau',\tau_j$.

\begin{itemize}		
	\item \textbf{\RuWeak rule.} If there is an $i$ such that the last applied rule in $\pi_i$ is \RuWeak, where the principal formula is $\psi \in \Psi_i$, then $\pi_i$ is of the form 
	\begin{align*}
		\begin{prooftree}
			\hypo{\pi_i'}
			\infer[no rule]1{\Gamma_i', \Psi_i'}
			\infer1[\RuWeak]{\Gamma_i', \Psi_i', \psi}
		\end{prooftree}
	\end{align*}
	Let $\calM$ be the multicut at $v$ and let $\del{\calM}{\pi_i}{\psi}$ be the multicut obtained from $\calM$ by removing an edge $E_\psi(\pi_i,\tau)$ for some $\tau$. Let $\del{\calM}{\pi_i'}{\psi}$ be the multicut obtained from  $\del{\calM}{\pi_i}{\psi}$ by replacing $\pi_i$ with $\pi_i'$ and let $\calS$ be its conclusion. Then we replace $v$ by
	\begin{align*}
		\begin{prooftree}
			\hypo{\del{\calM}{\pi_i'}{\psi}}
			\infer[no rule]1{\calS}
			\infer1[\RuWeak]{\Gamma_1,...,,\Gamma_m, \Delta_1,...,\Delta_n}
		\end{prooftree}
	\end{align*}
	
	Analogously if there is a $j$ such that the last applied rule in $\tau_j$ is \RuWeak, where the principal formula is $\psi \in \Phi_j$.

	\item \textbf{Non-principal rule.} If there is an $i$ such that the last applied rule in $\pi_i$ is a rule with principal formula in $\Gamma_i$, then we ``push the cut upwards''. Then $\pi_i$ has the form
	\begin{align*}
		\begin{prooftree}
			\hypo{\pi_i^1}
			\infer[no rule]1{\Gamma_i^1, \Psi_i}
			\hypo{\hdots}				
			\hypo{\pi_i^n}
			\infer[no rule]1{\Gamma_i^n, \Psi_i}
			\infer3[\Ru]{\Gamma_i,\Psi_i}
		\end{prooftree}
	\end{align*}
	We let $\Gamma_1,...,\Gamma_n = \calS', \Gamma_i$. The leaf $v$ is replaced by
	\begin{align*}
		\begin{prooftree}
			\hypo{\merge{\Pi',\pi_i^1}{\Psi}{\Tau}}
			\infer[no rule]1{\calS', \Gamma_i^1, \calD}
			\hypo{\hdots}				
			\hypo{\merge{\Pi',\pi_i^n}{\Psi}{\Tau}}
			\infer[no rule]1{\calS', \Gamma_i^n, \calD}
			\infer3[\Ru]{\calS', \Gamma_i, \calD}
		\end{prooftree}
	\end{align*}
	
	Analogously if there is a $j$ such that the last applied rule in $\tau_j$ is a rule with principal formula in $\Delta_j$.
	
\end{itemize}

In the remaining cases a non-modal formula $\psi \in \Psi$ is principal on both sides. 
Assume that $\psi$ is the principal formula in the last applied rule in $\pi_i$ and that $\mybar{\psi}$ is the principal formula in the last applied rule in $\tau_j$. 

\begin{itemize}
	
	\item \textbf{\RuOr rule.} If $\psi \equiv \psi_0 \lor \psi_1$, then $\pi_i$ has the form
	\begin{align*}
		\begin{prooftree}
			\hypo{\pi_i'}
			\infer[no rule]1{\Gamma_i, \Psi_i', \psi_0^u,\psi_1^u}
			\infer1[\RuOr]{\Gamma_i,\Psi_i',\psi_0 \lor \psi_1^u}
		\end{prooftree}
	\end{align*}
	and  $\tau_j$ has the form 
	\begin{align*}
		\begin{prooftree}
			\hypo{\tau_j^0}
			\infer[no rule]1{\Delta_j', \Phi_j', \mybar{\psi_0}^a}
			\hypo{\tau_j^1}
			\infer[no rule]1{\Delta_j', \Phi_j', \mybar{\psi_1}^a}
			\infer2[\RuAnd]{\Delta_j', \Phi_j', \mybar{\psi_0} \land \mybar{\psi_1}^a}
		\end{prooftree}
	\end{align*}
	Then $v$ is replaced by 
	\begin{align*}
		\begin{prooftree}
			\hypo{\merge{\Pi',\pi_i'}{\Psi', \psi_0,\psi_1}{\Tau',\tau_j^0,\tau_j^1}}
			\infer[no rule]1{\Gamma_1,...,\Gamma_m, \Delta_1,...,\Delta_j,\Delta_j,...,\Delta_n}
			\infer1[\RuContr]{\Gamma_1,...,\Gamma_m, \Delta_1,...,\Delta_j,...,\Delta_n}
		\end{prooftree}
	\end{align*}
	where $\psi_0$ is cut-connected to $\pi_i'$ and $\tau_j^0$; and $\psi_1$ is cut-connected to $\pi_i'$ and $\tau_j^1$.
	
	\item \textbf{\RuAnd rule.} The case $\RuAnd$ is dual to $\RuOr$.
	
	\item \textbf{\RuMu rule.} If $\psi \equiv \mu x. \chi$ then $\pi_i$ has the form
	\begin{align*}
		\begin{prooftree}
			\hypo{\pi_i'}
			\infer[no rule]1{\Gamma_i, \Psi_i', \chi\subst{x}{\mu x.\chi}^u}
			\infer1[\RuMu]{\Gamma_i,\Psi_i',\mu x.\chi^u}
		\end{prooftree}
	\end{align*}
	and $\tau_j$ has the form 
	\begin{align*}
		\begin{prooftree}
			\hypo{\tau_j'}
			\infer[no rule]1{\Delta_j', \Phi_j', \mybar{\chi}\subst{x}{\nu x. \mybar{\chi}}^a}
			\infer1[\RuNu]{\Delta_j', \Phi_j', \nu x. \mybar{\chi}^a}
		\end{prooftree}
	\end{align*}
	Then $v$ is replaced by 
	\begin{align*}
		\begin{prooftree}
			\hypo{\merge{\Pi',\pi_i'}{\Psi', \chi\subst{x}{\mu x.\chi}}{\Tau', \tau_j'}}
			\infer[no rule]1{\Gamma_1,...,\Gamma_m, \Delta_1,...,\Delta_n}
		\end{prooftree}
	\end{align*}
	
	\item \textbf{\RuNu rule.} As $\rho$ is tidy, $\psi \equic \phi$ for every formula $\psi \in \Psi$. Therefore $\psi$ is magenta and due to Proposition \ref{prop.aFree} this means that $\psi$ is never a $\nu$-formula. 
	
	\item \textbf{Axioms.} As $\psi$ is magenta it is never of the form  $p$ or $\mybar{p}$. Hence the last applied rule in $\pi_i$ or $\tau_j$ is not an axiom.
	
\end{itemize}
	\section{Appendix: Cut Reductions}\label{app.CutReductions}
For readability we state the cut reductions for a simplified \RuCut rule, where $\Sigma_l = \Sigma_r$ and \RuContr rules are applied implicitly; this can be generalised in the obvious way. We also omit the annotations, whenever they are not affected by the cut reductions. Note that the cut formula is always out of focus.

\subsection{Principal cut reductions}

\[
\begin{prooftree}
	\hypo{\pi_0}
	\infer[no rule]1{\Sigma, \phi}
	\hypo{\pi_1}
	\infer[no rule]1{\Sigma, \psi}
	\infer2[ \RuAnd]{\Sigma, \phi \land \psi}
	\hypo{\pi_2}
	\infer[no rule]1{\mybar{\phi}, \mybar{\psi}, \Sigma}
	\infer1[ \RuOr]{\mybar{\phi} \lor \mybar{\psi},\Sigma}
	\infer2[ \RuCut]{\Sigma}
\end{prooftree}	
\qquad	\longrightarrow \qquad
\begin{prooftree}
	\hypo{\pi_0}
	\infer[no rule]1{\Sigma, \phi}
	\hypo{\pi_1}
	\infer[no rule]1{\Sigma, \psi}
	\hypo{\pi_2}
	\infer[no rule]1{\mybar{\phi}, \mybar{\psi}, \Sigma}
	\infer2[ \RuCut]{\mybar{\phi}, \Sigma}
	\infer2[ \RuCut]{\Sigma}
\end{prooftree}
\]

\[
\begin{prooftree}
	\hypo{\pi_0}
	\infer[no rule]1{\Sigma,\phi[\mu x . \phi / x]}
	\infer1[\RuMu]{\Sigma, \mu x . \phi}
	\hypo{\pi_1}
	\infer[no rule]1{\mybar{\phi}[\nu x . \mybar{\phi} / x], \Sigma}
	\infer1[\RuNu]{\nu x . \mybar{\phi}, \Sigma}
	\infer2[ \RuCut]{\Sigma}
\end{prooftree}	
\qquad \longrightarrow \qquad
\begin{prooftree}
	\hypo{\pi_0}
	\infer[no rule]1{\Sigma, \phi[\mu x . \phi / x]}
	\hypo{\pi_1}
	\infer[no rule]1{\mybar{\phi}[\nu x . \mybar{\phi} / x], \Sigma}
	\infer2[ \RuCut]{\Sigma}
\end{prooftree}
\]

\[
\begin{prooftree}
	\hypo{\pi_0}
	\infer[no rule]1{\Sigma,\phi}
	\infer1[ \RuBox]{\ldia \Sigma, \lbox \phi}
	\hypo{\pi_1}
	\infer[no rule]1{\mybar{\phi}, \gamma,  \Sigma}
	\infer1[ \RuBox]{\ldia \mybar{\phi},\lbox \gamma, \ldia \Sigma}
	\infer2[ \RuCut]{\lbox \gamma, \ldia\Sigma}
\end{prooftree}	 
\qquad \longrightarrow \qquad
\begin{prooftree}
	\hypo{\pi_0}
	\infer[no rule]1{\Sigma,\phi}
	\hypo{\pi_1}
	\infer[no rule]1{\mybar{\phi}, \gamma,  \Sigma}
	\infer2[ \RuCut]{\gamma, \Sigma}
	\infer1[ \RuBox]{\lbox \gamma, \ldia \Sigma}
\end{prooftree}
\]

\subsection{Trivial principal cut reductions}
\[
\begin{prooftree}
	\hypo{\pi_0}
	\infer[no rule]1{\Sigma, p}
	\hypo{}
	\infer1[\AxLit]{\mybar{p}, p}
	\infer2[ \RuCut]{\Sigma, p}
\end{prooftree}
\qquad \longrightarrow \qquad
\begin{prooftree}
	\hypo{\pi_0}
	\infer[no rule]1{\Sigma,p}
\end{prooftree}
\]

\[
\begin{prooftree}
	\hypo{\pi_0}
	\infer[no rule]1{\Sigma, \mybar{p}}
	\hypo{}
	\infer1[\AxLit]{p,\mybar{p}}
	\infer2[ \RuCut]{\Sigma, \mybar{p}}
\end{prooftree}
\qquad \longrightarrow \qquad
\begin{prooftree}
	\hypo{\pi_0}
	\infer[no rule]1{\Sigma,\mybar{p}}
\end{prooftree}
\]

\[
\begin{prooftree}
	\hypo{\pi_0}
	\infer[no rule]1{\Sigma}
	\infer1[ \RuWeak]{\Sigma, \phi}
	\hypo{\pi_1}
	\infer[no rule]1{\mybar{\phi},\Sigma}
	\infer2[ \RuCut]{\Sigma}
\end{prooftree} \qquad \longrightarrow \qquad
\begin{prooftree}
	\hypo{\pi_0}
	\infer[no rule]1{\Sigma}
\end{prooftree}
\]

\subsection{Cut reductions for \RuDischarge[], \RuF and \RuU}
We push \RuF and \RuU rules `upwards' away from the root and unfold \RuDischarge[] rules. The presented reductions are analogous, if the right premise of the cut is labelled by \RuDischarge[], \RuF  or \RuU. Note that we assume that all proofs are minimally focussed.

\[
\begin{prooftree}
	\hypo{\pi_0}
	\infer[no rule]1{\Sigma, \phi}
	\infer1[ \RuDischarge]{\mathllap{v:\qquad\quad}\Sigma, \phi}
	\hypo{\pi_1}
	\infer[no rule]1{\mybar{\phi}, \Sigma}
	\infer2[ \RuCut]{\Sigma}
\end{prooftree}
\qquad \longrightarrow \qquad
\begin{prooftree}
	\hypo{\pi_0'}
	\infer[no rule]1{\Sigma, \phi}
	\hypo{\pi_1}
	\infer[no rule]1{\mybar{\phi}, \Sigma}
	\infer2[ \RuCut]{\Sigma}
\end{prooftree}
\]
where $\pi_0'$ is obtained from $\pi_0$ by replacing every discharged leaf labelled by $\dx$ with $\pi_v$, where $v$ is the left premise of the \RuCut rule.\footnote{Here and in the following cut reductions we replace discharge tokens $\dy$ by fresh discharge tokens, whenever a \RuDischarge[\dy] rule is duplicated.}

\[
\begin{prooftree}
	\hypo{\pi_0}
	\infer[no rule]1{\Sigma', \phi^a}
	\infer1[ \RuDischarge]{\Sigma', \phi^a}
	\infer1[ \RuF]{\mathllap{v:\qquad\quad}\Sigma, \phi^u}
	\hypo{\pi_1}
	\infer[no rule]1{\mybar{\phi}^u, \Sigma}
	\infer2[ \RuCut]{\Sigma}
\end{prooftree}
\qquad \longrightarrow \qquad
\begin{prooftree}
	\hypo{\pi_0'}
	\infer[no rule]1{\Sigma, \phi^u}
	\hypo{\pi_1}
	\infer[no rule]1{\mybar{\phi}^u, \Sigma}
	\infer2[ \RuCut]{\Sigma}
\end{prooftree}
\]
where $\pi_0'$ is obtained from $\pi_0$ by (i) unfocusing sequents up to \RuDischarge[] rules and leaves labelled by $\dx$ and (ii) replacing every discharged leaf labelled by $\dx$ with the subproof $\pi_v$, where $v$ is the left premise of the \RuCut rule.

\[
\begin{prooftree}
	\hypo{\pi_0}
	\infer[no rule]1{\Sigma^u, \phi^u}
	\infer1[ \RuU]{\Sigma, \phi^u}
	\hypo{\pi_1}
	\infer[no rule]1{\mybar{\phi}^u, \Sigma^u}
	\infer1[ \RuU]{\mybar{\phi}^u, \Sigma}
	\infer2[\RuCut]{\Sigma}
\end{prooftree}
\qquad \longrightarrow \qquad
\begin{prooftree}
	\hypo{\pi_0}
	\infer[no rule]1{\Sigma^u, \phi^u}
	\hypo{\pi_1}
	\infer[no rule]1{\mybar{\phi}^u, \Sigma^u}
	\infer2[ \RuCut]{\Sigma^u}
	\infer1[ \RuU]{\Sigma}
\end{prooftree}
\]
Now consider the case where the right premise of the cut is labelled by a different rule than $\RuU$. If the premise of \RuU on the left branch is \RuF do the following:	

\[
\begin{prooftree}
	\hypo{\pi_0}
	\infer[no rule]1{\Sigma' , \phi^a}
	\infer1[ \RuDischarge]{\Sigma' , \phi^a}
	\infer1[ \RuF]{\Sigma^u, \phi^u}
	\infer1[\RuU]{\Sigma, \phi^u}
	\hypo{\pi_1}
	\infer[no rule]1{\mybar{\phi}^u, \Sigma}
	\infer2[ \RuCut]{\Sigma}
\end{prooftree}
\qquad \longrightarrow \qquad
\begin{prooftree}
	\hypo{\pi_0'}
	\infer[no rule]1{\Sigma^u,\phi^u}
	\infer1[ \RuU]{\Sigma, \phi^u}
	\hypo{\pi_1}
	\infer[no rule]1{\phi^u, \Sigma}
	\infer2[ \RuCut]{\Sigma}
\end{prooftree}
\]
where $\pi'_0$ is defined as above. For a rule \Ru different from \RuF we proceed as follows.	
\[
	\begin{prooftree}
		\hypo{\pi_1}
		\infer[no rule]1{\Sigma_1}
		\hypo{\cdots}
		\hypo{\pi_n}
		\infer[no rule]1{\Sigma_n}
		\infer3[ \Ru]{\Sigma^u, \phi^u}
		\infer1[ \RuU]{\Sigma, \phi^u}
		\hypo{\pi_0}
		\infer[no rule]1{\mybar{\phi}^u, \Sigma}
		\infer2[ \RuCut]{\Sigma}
	\end{prooftree} 
	\qquad \longrightarrow \qquad
	\begin{prooftree}
		\hypo{\pi_1}
		\infer[no rule]1{\Sigma_1}
		\infer1[ \RuU]{\Sigma_1'}
		\hypo{\cdots}
		\hypo{\pi_n}
		\infer[no rule]1{\Sigma_n}
		\infer1[ \RuU]{\Sigma_n'}
		\infer3[ \Ru]{\Sigma, \phi^u}
		\hypo{\pi_0}
		\infer[no rule]1{\mybar{\phi}^u, \Sigma}
		\infer2[ \RuCut]{\Sigma}
	\end{prooftree}
\]
\subsection{Non-principal cut-reductions}
Let $\Ru$ be a rule different from $\RuBox$, \RuF, \RuU and \RuDischarge[].
\begin{multline*}
	\begin{prooftree}
		\hypo{\pi_1}
		\infer[no rule]1{\Sigma_1, \phi^u}
		\hypo{\cdots}
		\hypo{\pi_n}
		\infer[no rule]1{\Sigma_n, \phi^u}
		\infer3[ \Ru]{\Sigma, \phi^u}
		\hypo{\pi_0}
		\infer[no rule]1{\mybar{\phi}^u, \Sigma}
		\infer2[ \RuCut]{\Sigma}
	\end{prooftree}	\\
	\longrightarrow \qquad
	\begin{prooftree}
		\hypo{\pi_1}
		\infer[no rule]1{\Sigma_1, \phi^u}
		\hypo{\pi_0}
		\infer[no rule]1{\mybar{\phi}^u, \Sigma}
		\infer2[ \RuCut]{\Sigma_1}
		\hypo{\cdots}
		\hypo{\pi_n}
		\infer[no rule]1{\Sigma_n, \phi^u}
		\hypo{\pi_0}
		\infer[no rule]1{\mybar{\phi}^u, \Sigma}
		\infer2[ \RuCut]{\Sigma_n}
		\infer3[ \Ru]{\Sigma}
	\end{prooftree}
\end{multline*}
Note that the rule \Ru in this case may also be an instance of \RuCut.

	\section{Reduction of contractions}\label{app.contrReductions}

\subsection{Principal reductions}
%

\[
\begin{prooftree}
	\hypo{\pi'}
	\infer[no rule]1{\phi, \psi,\phi \lor \psi,\Sigma}
	\infer1[\RuOr]{\phi \lor \psi,\phi \lor \psi,\Sigma}
	\infer1[\RuContr]{\phi \lor \psi,\Sigma}
\end{prooftree}	
\qquad \longrightarrow \qquad
\begin{prooftree}
	\hypo{\pi'}
	\infer[no rule]1{\phi, \psi,\phi \lor \psi,\Sigma}
	\infer[double]1[$\RuOr^I$]{\phi, \phi, \psi, \psi, \Sigma}
	\infer1[\RuContr]{\phi, \phi, \psi, \Sigma}
	\infer1[\RuContr]{\phi, \psi, \Sigma}
	\infer1[\RuOr]{\phi \lor \psi,\Sigma}
\end{prooftree}
\]

\[
\begin{prooftree}
	\hypo{\pi_0}
	\infer[no rule]1{\Gamma, \phi, \phi \land \psi}
	\hypo{\pi_1}
	\infer[no rule]1{\Gamma, \psi, \phi \land \psi}
	\infer2[\RuAnd]{\Gamma, \phi \land \psi, \phi \land \psi}
	\infer1[\RuContr]{\Gamma, \phi \land \psi}
\end{prooftree}	
\qquad	\longrightarrow \qquad
\begin{prooftree}
	\hypo{\pi_0}
	\infer[no rule]1{\Gamma, \phi, \phi \land \psi}
	\infer[double]1[$\RuAnd^I$]{\Gamma, \phi, \phi}
	\infer1[\RuContr]{\Gamma, \phi}
	\hypo{\pi_1}
	\infer[no rule]1{\Gamma, \psi, \phi \land \psi}
	\infer[double]1[$\RuAnd^I$]{\Gamma, \psi, \psi}
	\infer1[\RuContr]{\Gamma, \psi}
	\infer2[\RuAnd]{\Gamma,\phi \land \psi}
\end{prooftree}
\]

\[
\begin{prooftree}
	\hypo{\pi'}
	\infer[no rule]1{\phi[\eta x . \phi / x], \eta x . \phi,\Sigma}
	\infer1[\RuEta]{\eta x . \phi, \eta x . \phi,\Sigma}
	\infer1[\RuContr]{\eta x . \phi,\Sigma}
\end{prooftree}	
\qquad \longrightarrow \qquad
\begin{prooftree}
	\hypo{\pi'}
	\infer[no rule]1{\phi[\eta x . \phi / x], \eta x . \phi, \Sigma}
	\infer[double]1[$\RuEta^I$]{\phi[\eta x . \phi / x], \phi[\eta x . \phi / x], \Sigma}
	\infer1[\RuContr]{\phi[\eta x . \phi / x], \Sigma}
	\infer1[\RuEta]{\eta x . \phi,\Sigma}
\end{prooftree}
\]

\[
\begin{prooftree}
	\hypo{\pi'}
	\infer[no rule]1{\phi, \psi, \psi,\Sigma}
	\infer1[\RuBox]{\lbox \phi, \ldia \psi, \ldia \psi, \ldia \Sigma}
	\infer1[\RuContr]{\lbox \phi, \ldia \psi, \ldia\Sigma}
\end{prooftree}	 
\qquad \longrightarrow \qquad
\begin{prooftree}
	\hypo{\pi'}
	\infer[no rule]1{\phi, \psi, \psi,\Sigma}
	\infer1[\RuContr]{\phi, \psi, \Sigma}
	\infer1[\RuBox]{\lbox \phi, \ldia \psi, \ldia\Sigma}
\end{prooftree}
\]

\[
\begin{prooftree}
	\hypo{\pi'}
	\infer[no rule]1{\phi,\Gamma}
	\infer1[\RuWeak]{\phi,\phi,\Gamma}
	\infer1[\RuContr]{\phi,\Gamma}
\end{prooftree} \qquad \longrightarrow \qquad
\begin{prooftree}
	\hypo{\pi'}
	\infer[no rule]1{\phi,\Gamma}
\end{prooftree}
\]

\subsection{Non-principal reductions}
We only consider minimally focussed proofs, thus premisses of \RuU rules in trivial clusters are out of focus.
In proper clusters a formula $\phi$ is put out of focus iff $\phi$ is of a non-maximal rank. Let $v$ be a node in a proper cluster labelled with a contraction rule with principal formula $\phi$. We may assume that the formula $\phi$ is not put out of focus at the premisse of the contraction rule --  if $\phi$ is of non-maximal rank it would already be put out of focus at $v$.

We see that the annotation of both occurrences of $\phi$ in the premiss of \RuU rule is the same. We reduce those \RuU rules as follows.
\[
\begin{prooftree}
	\hypo{\pi'}
	\infer[no rule]1{\phi^b,\phi^b,\Gamma'}
	\infer1[ \RuU]{\phi^a,\phi^a,\Gamma}
	\infer1[\RuContr]{\phi^a, \Gamma}
\end{prooftree}	
\qquad \longrightarrow \qquad
\begin{prooftree}
	\hypo{\pi'}
	\infer[no rule]1{\phi^b,\phi^{b},\Gamma'}
	\infer1[\RuContr]{\phi^{b},\Gamma'}
	\infer1[ \RuU]{\phi^a,\Gamma}
\end{prooftree}
\]
 Because proofs are minimally focussed, the premiss of a \RuF rule is labelled with \RuDischarge. We reduce \RuF rules as follows.
\[\begin{prooftree}
	\hypo{[\phi^a, \phi^a,\Gamma]^\dx}
	\infer[no rule]1{\vdots}
	\infer[no rule]1{\pi'}
	\infer[no rule]1{\vdots}
	\infer[no rule]1{\phi^a, \phi^a,\Gamma}
	\infer1[\RuDischarge]{\phi^a, \phi^a,\Gamma}
	\infer1[\RuF]{\phi^u, \phi^u,\Gamma}
	\infer1[\RuContr]{\phi^u,\Gamma}
\end{prooftree}
\qquad	\longrightarrow \qquad
\begin{prooftree}
	\hypo{[\phi^a,\Gamma']^\dx}
	\infer1[\RuWeak]{\phi^a, \phi^a,\Gamma}
	\infer[no rule]1{\vdots}
	\infer[no rule]1{\pi'}
	\infer[no rule]1{\vdots}
	\infer[no rule]1{\phi^a,\phi^a,\Gamma}
	\infer1[\RuContr]{\phi^a,\Gamma}
	\infer1[\RuDischarge]{\phi^a,\Gamma}
	\infer1[\RuF]{\phi^u,\Gamma}
\end{prooftree}\]
If the conclusion of a \RuDischarge rule is not labelled with \RuF, we reduce \RuDischarge as follows.
\[
\begin{prooftree}
	\hypo{\pi_0}
	\infer[no rule]1{\phi^a,\phi^a,\Gamma}
	\infer1[ \RuDischarge]{\phi^a,\phi^a,\Gamma}
	\infer1[\RuContr]{\phi^a,\Gamma}
\end{prooftree}
\qquad \longrightarrow \qquad
\begin{prooftree}
	\hypo{\pi_0'}
	\infer[no rule]1{\phi^a,\phi^a, \Gamma}
	\infer1[\RuContr]{\phi^a,\Gamma}
\end{prooftree}
\]
where $\pi_0'$ is obtained from $\pi_0$ by replacing every discharged leaf $l$ labelled with $\dx$ by $\pi_{c(l)}$. Here  $c(l)$ is the conclusion of \RuDischarge[\dx].

Let $\Ru$ be a rule different from $\RuBox$, \RuU, \RuF and \RuDischarge[] with a principal formula different than $\phi$. Then we reduce $\Ru$ as follows.
\[
	\begin{prooftree}
		\hypo{\pi_1}
		\infer[no rule]1{\phi^a,\phi^a,\Gamma_1}
		\hypo{\cdots}
		\hypo{\pi_n}
		\infer[no rule]1{\phi^a,\phi^a,\Gamma_n}
		\infer3[\Ru]{\phi^a,\phi^a,\Gamma}
		\infer1[\RuContr]{\phi^a, \Gamma}
	\end{prooftree}	
	\qquad \longrightarrow \qquad
	\begin{prooftree}
		\hypo{\pi_1}
		\infer[no rule]1{\phi^a,\phi^a,\Gamma_1}
		\infer1[\RuContr]{\phi^a,\Gamma_1}
		\hypo{\cdots}
		\hypo{\pi_n}
		\infer[no rule]1{\phi^a,\phi^a,\Gamma_n}
		\infer1[\RuContr]{\phi^a,\Gamma_n}
		\infer3[ \Ru]{\phi^a,\Gamma}
	\end{prooftree}
\]

\end{document}